\renewcommand\footnotetextcopyrightpermission[1]{} 
\title{TeraHAC: Hierarchical Agglomerative Clustering of~Trillion-Edge Graphs}
\newcommand{\myparagraph}[1]{\smallskip\noindent {\bf #1.}}
\newtheorem{theorem}{Theorem}
\newtheorem{definition}{Definition}
\newtheorem{lemma}{Lemma}
\crefname{observation}{Observation}{Observations} 
\crefname{claim}{Claim}{Claims}
\newcommand{\defn}[1]{\textbf{\emph{#1}}}
\newtheorem{invariant}{Invariant}
\crefname{invariant}{Invariant}{Invariants}
\newcommand{\minmerge}[0]{\ensuremath{\mathrm{M}}}
\newcommand{\wmax}[0]{\ensuremath{w_{\max}}}
\newcommand{\apxwmax}[0]{\ensuremath{\tilde{w}_{\max}}}
\algrenewcommand\algorithmicrequire{\textbf{Input:}}
\algrenewcommand\algorithmicensure{\textbf{Output:}}
\newcommand{\terahac}{$\mathsf{TeraHAC}$}
\newcommand{\subhac}{$\mathsf{SubgraphHAC}$}
\newcommand{\flatten}{$\mathsf{Flatten}$}
\newcommand{\wthreshold}{\ensuremath{t}}
\newcommand{\gness}{\ensuremath{\mathrm{goodness}}}
\newcommand{\scc}{\ensuremath{\mathsf{SCC}}}
\definecolor{forestgreen}{rgb}{0.13, 0.55, 0.13}
\newcommand{\revision}[1]{{{#1}}}
\newcommand{\shepchange}[1]{{#1}}
\newcommand{\id}[1]{\ifmmode\mathit{#1}\else\textit{#1}\fi}
\newcommand{\const}[1]{\ifmmode\mbox{\textc{#1}}\else\textsc{#1}\fi}
\newcommand{\degree}[1]{\ensuremath{deg(#1)}}
\newcommand{\unionnoarg}[1]{\ensuremath{\textsc{Union}}}
\newcommand{\uniontext}[1]{union}
\newcommand{\nghheap}[1]{neighbor-heap}
\newcommand{\STAB}[1]{\begin{tabular}{@{}c@{}}#1\end{tabular}}
\newcommand{\parhac}{$\mathsf{ParHAC}$}
\newcommand{\seqhac}{$\mathsf{SeqHAC}$}
\newcommand{\optrac}{$\mathsf{OptimizedRAC}$}
\newcommand{\rac}{$\mathsf{RAC}$}
\newcommand{\whp}[1]{\emph{whp}}
\definecolor{best}{rgb}{0.0, 0.5, 0.0}
\newcommand{\best}[1]{\color{best}{\underline{#1}}}
\definecolor{munsell}{rgb}{0.0, 0.5, 0.69}
\definecolor{burntsienna}{rgb}{0.91, 0.45, 0.32}
\author{Laxman Dhulipala}
\affiliation{%
 \institution{UMD and Google Research}
 \country{USA} 
}
\email{laxman@umd.edu}
\author{Jason Lee}
\affiliation{%
 \institution{Google Research}
 \country{USA} 
}
\email{jdlee@google.com}
\author{Jakub Łącki}
\affiliation{%
 \institution{Google Research}
 \country{USA} 
}
\email{jlacki@google.com}
\author{Vahab Mirrokni}
\affiliation{%
 \institution{Google Research}
 \country{USA} 
}
\email{mirrokni@google.com}
\begin{document}

\begin{abstract}
We introduce \terahac{}, a $(1+\epsilon)$-approximate hierarchical agglomerative clustering (HAC) algorithm which scales to trillion-edge graphs.
Our algorithm is based on a new approach to computing $(1+\epsilon)$-approximate HAC, which is a novel combination of the nearest-neighbor chain algorithm and the notion of $(1+\epsilon)$-approximate HAC.
Our approach allows us to partition the graph among multiple machines and make significant progress in computing the clustering within each partition before any communication with other partitions is needed.

We evaluate \terahac{} on a number of real-world and synthetic graphs of up to 8 trillion edges.
We show that \terahac{} requires over 100x fewer rounds compared to previously known approaches for computing HAC.
It is up to 8.3x faster than SCC, the state-of-the-art distributed algorithm for hierarchical clustering, while achieving 1.16x higher quality.
In fact, \terahac{} essentially retains the quality of the celebrated HAC algorithm while significantly improving the running time.
\end{abstract}

\maketitle
\pagestyle{plain}

\section{Introduction}\label{sec:intro}

Hierarchical agglomerative clustering (HAC) is a widely-used clustering algorithm~\cite{murtagh2012algorithms,murtagh2017algorithms,mullner2011modern, mullner2013fastcluster, gronau2007optimal, stefan1996multiple} known for its high quality in a variety of applications~\cite{zhao2002evaluation, hua2017mgupgma, kobren2017hierarchical, blundell2013bayesian, culotta2007author}.

The algorithm takes as input a collection of $n$ points, as well as a function giving similarities between pairs of points.
At a high level, the algorithm works as follows. Initially, each point is put in a separate cluster of size $1$.
Then, the algorithm runs a sequence of steps.
In each step, the algorithm finds a pair of most similar clusters and merges them together, that is, the two clusters are replaced by their union.
Here, the similarity between the clusters is computed based on the similarities between the points in the clusters.
The exact formula used is configurable and referred to as \emph{linkage function}.

Common linkage functions include \emph{single-linkage} -- the similarity between two clusters $C$ and $D$ is the maximum similarity of two points belonging to $C$ and $D$, and \emph{average-linkage} -- the similarity between $C$ and $D$ is the total similarity between points in $C$ and $D$ divided by $|C|\cdot|D|$.
Due to the particularly good empirical quality, HAC using average linkage similarity is of particular interest~\cite{zhao2002evaluation, hua2017mgupgma, kobren2017hierarchical, moseley-wang, hac-reward, monath2021scalable}.

The output of the algorithm is a \emph{dendrogram} -- a rooted binary tree (or a collection thereof) describing the merges performed by the algorithm.
The leaves of the tree correspond to the initial clusters of size 1, and each internal vertex represents a cluster obtained by merging its two children.

Up until recently, a major shortcoming of HAC was very limited scalability, as the best known algorithms required time that is quadratic in the number of input points.
In addition to that, it was commonly assumed that the input to the algorithm is a complete $n \times n$ matrix giving similarities between all input points, which posed another scalability barrier.\footnote{Even if only a linear number of pairs of points had nonzero similarity, the previously known algorithms would still require quadratic time.}

Recent results overcome the quadratic-time bottleneck by 
allowing $(1+\epsilon)$-approximation~\cite{dhulipala2021hierarchical, parhac}.
A HAC algorithm is $(1+\epsilon)$-approximate if in each step it merges together two clusters whose similarity is within a $(1+\epsilon)$ factor from the maximum similarity between two clusters (at that point).

By allowing approximation and assuming that the input to the algorithm is a sparse similarity graph (e.g. the k-nearest neighbors graph of the input points), it was shown that HAC can be implemented in near-linear time~\cite{dhulipala2021hierarchical} and efficiently parallelized~\cite{parhac}.
At the same time, both relaxations (approximation and using a sparse graph) were shown not to negatively affect the empirical quality of the obtained solutions~\cite{dhulipala2021hierarchical, parhac}.
This line of work led to single machine parallel HAC implementations scaling to graphs containing several billion vertices and about 100 billion edges~\cite{parhac}.

Nevertheless, some large-scale applications require clustering even larger datasets.
To address these needs, several distributed hierarchical clustering algorithms have been proposed~\cite{NIPS2017_2e1b24a6, monath2021scalable, 7184911, jin2013disc, sumengen2021scaling}.
These algorithms can handle up to tens of billions of datapoints and several trillion edges.
However, in order to provide high quality results and/or provable guarantees on the quality of the output, the algorithms need to run typically a significant number of parallel rounds on large-scale datasets~\cite{monath2021scalable, sumengen2021scaling}, and for algorithms that can run in few rounds, no provable quality guarantees are known~\cite{monath2021scalable}.
An intriguing open question, then, is whether it is possible to achieve provable quality guarantees for hierarchical clustering, while running in very few rounds.

\subsection{Our Contribution}
In this paper we introduce \terahac{} -- a $(1+\epsilon)$-approximate HAC algorithm, which runs in very few rounds and is thus amenable to a distributed implementation.
We obtain \terahac{} by developing a new approach to computing $(1+\epsilon)$-approximate HAC, which may be of independent interest.
We evaluate \terahac{} empirically and show that our implementation scales to datasets of tens of billions of points and trillions of edges.

Our key algorithmic contribution is a new approach to computing $(1+\epsilon)$-approximate HAC, which we now outline.
The exact ($1$-approximate) HAC algorithm defines a very specific order, in which cluster merges must be performed -- only a highest similarity pair of clusters can be merged in each step.
However, the very same output dendrogram (up to different tiebreaking) can be computed using a (parallel) RAC algorithm~\cite{sumengen2021scaling} (which is based on the idea behind the nearest-neighbor chain algorithm~\cite{benzecri1982construction, parchain}).
In each step, the RAC algorithm finds all reciprocally most similar clusters.
That is, it finds each pair of clusters $C$ and $D$, such that $D$ is the most similar cluster to $C$, and $C$ is the most similar cluster to $D$ (for simplicity, let us assume that there are no ties).
It is easy to see that the reciprocally most similar pairs of clusters form a matching.
Then, the algorithm merges both clusters of each pair in parallel.
Clearly, this approach allows potentially making multiple merges in one step, some of which may merge pairs of clusters whose similarity is much smaller than the maximum similarity between two clusters.
Crucially, this yields the same result as the basic algorithm, even though both algorithms may perform merges in a very different order.

We generalize the nearest-neighbor chain/RAC algorithm to a $(1+\epsilon)$-approximate algorithm, by introducing a notion of a \emph{good} merge.
Namely, we call the merge of two clusters $(1+\epsilon)$-good if it satisfies a certain condition (see \cref{def:good}).
Crucially, whether a merge of two clusters is good can be checked locally, essentially by looking at the clusters and their incident edges.
We show that by performing the $(1+\epsilon)$-good merges in any order (for example, in parallel) one obtains a $(1+\epsilon)$-approximate dendrogram, i.e., one that can be obtained by performing a sequence of $(1+\epsilon)$-approximate merges.

To put our algorithmic result in context, let us compare the amount of parallelism available to different HAC algorithms by looking at what edges can be merged in the beginning of the algorithm.
Fig.~\ref{fig:merges} compares the amount of parallelism in the RAC algorithm, \terahac{} and ParHAC.
ParHAC is a parallel $(1+\epsilon)$-approximate parallel HAC algorithm~\cite{dhulipala2021hierarchical}, which allows merging any edge whose weight is at most a $(1+\epsilon)$ factor away from the globally maximum weight.
If we consider the sets of edges that can be merged by RAC and ParHAC, it is easy to see that none of them is a superset of the other one.
At the same time, the set of edges that TeraHAC can merge is a superset of both sets (this is the case in general, not just in the provided example).
As shown in Fig.~\ref{fig:terahac_vs_parhac_rac_rounds} the increased parallelism leads to over 100x reduction in the number of rounds.
\shepchange{Importantly, the reduction in the number of rounds also yields between 3.5--10.5x improvements in running time as shown in Fig.~\ref{fig:terahac_vs_optrac_time}.}

\begin{figure}[t]
\begin{center}
\includegraphics[scale=0.35]{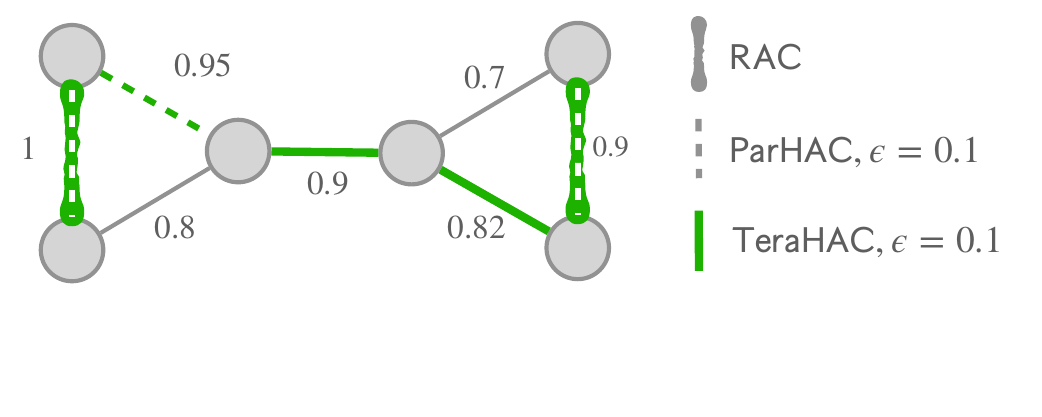}
\vspace{-2em}
\caption{\small Comparison of the merges available at the start of the algorithm for different parallel HAC algorithms.
\label{fig:merges}}
\vspace{-1em}
\end{center}
\end{figure}

We evaluate \terahac{} on a number of graph datasets of up to $8$ trillion edges.
We first study its quality on several smaller datasets with ground truth.
We show that when using $\epsilon = 0.1$ \terahac{}, the difference in quality compared to exact HAC is only $1\%$--$3\%$ (according to standard clustering similarity measures, ARI and NMI), which is in line with previous results on $(1+\epsilon)$-approximate HAC~\cite{dhulipala2021hierarchical, parhac}.
Moreover, \terahac{} is more scalable and accurate than \scc{}~\cite{monath2021scalable} -- the state of the art distributed algorithm for hierarchical graph clustering; it is on average 8.3x faster than \scc{} in \scc{}'s highest-quality setting (using 100 rounds)
on a set of large real-world web graphs. For the same high-quality setting, we find that \terahac{} achieves 1.16x higher quality than
\scc{}.

Second, we study the scalability of \terahac{} on a number of large graphs.
We show that for $\epsilon = 0.1$ the number of rounds \terahac{} runs is very low (at most 17 rounds over all our datasets), and that the residual graph shrinks at a geometric rate.
We also find that allowing approximation yields over an order of magnitude more $(1+\epsilon)$-good edges (edges that \terahac{} can potentially merge), compared with edges mergeable by RAC (see Fig.~\ref{fig:clueweb-good-edges}).

Finally, we study the performance and quality of \terahac{} on an 8 trillion-edge graph representing similarities between 31 billion web queries using a set of human-generated labels.
We observe that \terahac{} performs consistently better along a precision-recall curve, in some cases improving recall by up to $20\%$ (relative), while being $30\%$--$50\%$ faster.

\subsection{Further Related Work}
\revision{
The HAC algorithm is particularly useful for semantic clustering applications, where the desired number of clusters is not known upfront, for example for de-duplication or, more generally, detecting groups of closely related entities~\cite{zhao2002evaluation, hua2017mgupgma, monath2021scalable, kobren2017hierarchical}.
In these cases in large-scale applications the number of clusters of interest is very high (close to the number of input entities), and so the average cluster size is low~\cite{kobren2017hierarchical}.

For other prominent applications of clustering, multiple successful algorithms have been developed.
For example DBSCAN~\cite{schubert2017dbscan} and correlation clustering~\cite{bansal2004correlation} are particularly well suited for finding anomalously dense clusters.
Modularity clustering~\cite{newman2006modularity} is widely used for community detection.
Graph partitioning methods~\cite{partitioning2013charles, bulucc2016recent} split the data into a typically small number of roughly-equally sized clusters, while ensuring that related entities are not split across different clusters (hence, it is in a sense a dual problem to the one addressed by HAC).
Finally, k-means/k-median/k-center methods as well as spectral clustering~\cite{ng2001spectral, dhillon2004kernel} are particularly well-suited when the desired number of clusters is known upfront.
}

The notion of approximate HAC~\cite{48657} was used to give more efficient HAC algorithms in the case when the input is a metric space~\cite{48657, pmlr-v130-moseley21a, DBLP:conf/icml/YaroslavtsevV18, NEURIPS2019_d98c1545} as well as a graph~\cite{dhulipala2021hierarchical, parhac}.

For average linkage HAC, to the best of our knowledge, the fastest single-machine implementations scale to tens of millions of points in metric space~\cite{parchain} or billion-vertex graphs~\cite{parhac}.

A number of graph clustering problems have been studied in the distributed setting, for example in MapReduce~\cite{62}, Pregel/Giraph~\cite{malewicz2010pregel, sakr2016large} and other frameworks captured by the MPC model of computation~\cite{karloff2010model}.
Examples include efficient algorithms for correlation clustering~\cite{10.1145/2623330.2623743, pmlr-v139-cohen-addad21b}, connected components~\cite{lkacki2018connected} and balanced partitioning~\cite{10.14778/3324301.3324307, 10.1145/2835776.2835829}.

For HAC, efficient distributed algorithms are known for the single-linkage variant~\cite{7184911, jin2013disc, DBLP:conf/icml/YaroslavtsevV18}.
We note that single-linkage hierarchical agglomerative clustering generally delivers worse quality compared to average linkage~\cite{zhao2002evaluation, hac-reward}.
\shepchange{
Several HAC algorithms have been recently developed in the shared-memory setting, including \seqhac{}, a nearly linear time $(1+\epsilon)$-approximate sequential HAC algorithm~\cite{dhulipala2021hierarchical} and \parhac{}, a nearly linear work $(1+\epsilon)$-approximate HAC algorithm with low depth~\cite{parhac}.
\parhac{} provably runs in poly-logarithmic rounds~\cite{parhac}, whereas \terahac{} uses a novel relaxation of the reciprocal clustering approach of RAC, and neither \terahac{} nor RAC are known to admit non-trivial round-complexity bounds.
However, empirically \terahac{} uses the fewest rounds of any modern HAC algorithm we study (see Figure~\ref{fig:threshold_vs_rounds}).
}

\begin{figure}[t]
\begin{center}
\vspace{-0.25em}
\includegraphics[width=0.3\textwidth]{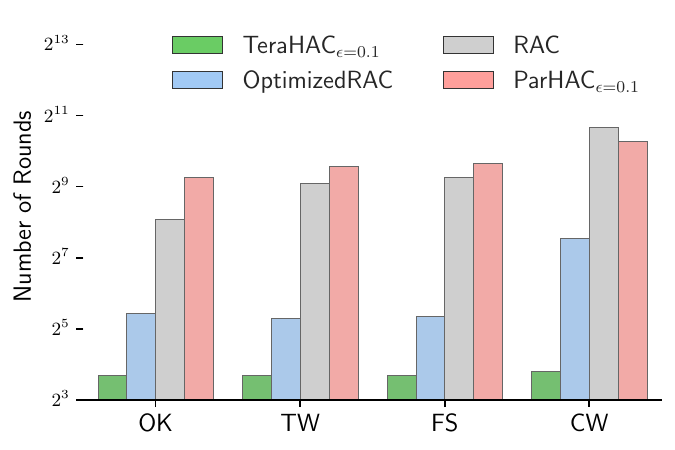}
\vspace{-1em}
\caption{\small Number of rounds used by \terahac{} compared with 
\optrac{} (\terahac{} using $\epsilon=0$),  
\parhac{} 
and 
 \rac{} on four large real-world graph datasets. All algorithms use a weight
 threshold of $t=0.01$ (see Section~\ref{sec:experiments}).
\label{fig:terahac_vs_parhac_rac_rounds}}

\vspace{-1.5em}
\end{center}
\end{figure}

\begin{figure}[t]
\begin{center}
\vspace{-0.25em}
\includegraphics[width=0.3\textwidth]{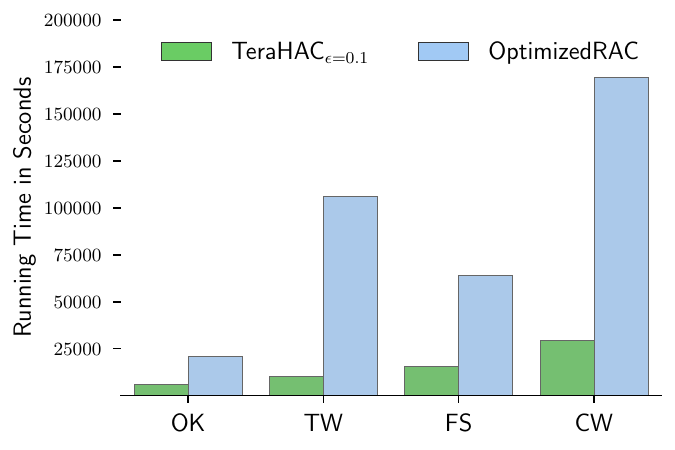}
\vspace{-1em}
\caption{\small 
\shepchange{
Distributed running times of \terahac{} compared with 
\optrac{} (\terahac{} using $\epsilon=0$) on the same graphs and threshold
as Figure~\ref{fig:terahac_vs_parhac_rac_rounds}.}
\label{fig:terahac_vs_optrac_time}}

\vspace{-1em}
\end{center}
\end{figure}

The most scalable hierarchical clustering algorithms have been obtained by slightly diverging from the HAC algorithm definition.
Affinity clustering~\cite{NIPS2017_2e1b24a6}, and its successor SCC~\cite{monath2021scalable} are two highly scalable hierarchical clustering algorithms designed for the MapReduce model.
Both take graphs as input and were shown to scale to graphs with trillions of edges.
While both algorithms are similar to HAC by design, in practice they can give very different results from what the HAC algorithm gives.
We compare with SCC in our empirical analysis.
We conclude that while SCC can be tuned to obtain quality which is comparable with HAC, this comes at a cost of significantly higher running time of SCC.
We note that SCC was shown to deliver higher quality than affinity clustering.

Another highly scalable HAC implementation is the RAC algorithm~\cite{sumengen2021scaling} obtained by parallelizing the nearest neighbor chain algorithm.
The algorithm was shown to scale to datasets of billion points.
Since RAC is based on the exact HAC algorithm, the number of rounds it uses is an order of magnitude larger than what can be achieved by $(1+\epsilon)$-approximate HAC~\cite{parhac}, which negatively affects performance.
\shepchange{As we show, \terahac{} using $\epsilon = 0$ can be viewed as an optimized implementation of RAC, which we refer to as \optrac{}.
At the same time \terahac{} using $\epsilon=0$ uses up to two orders of magnitude fewer rounds (see Figure~\ref{fig:terahac_vs_parhac_rac_rounds}).
}

\section{Preliminaries}

Let $G=(V, E, w)$ be a weighted and undirected graph, where $|V| = n$.
We assume that all edge weights of $G$ are positive.

Let us now describe how the HAC algorithm works given $G$ as an input.
Initially there are $n$ clusters, each containing a distinct vertex of $G$.
The algorithm proceeds in at most $n-1$ steps.
In each step, the algorithm finds two clusters of nonzero similarity and merges them together.
Let $C_1, \ldots, C_k$, be a sequence of clusterings, where $C_i$ is the clustering obtained after running $i-1$ steps of the algorithm.
In particular, $C_1$ is the clustering containing $n$ clusters of size $1$.

We define a sequence of graphs $G_1, \ldots, G_k$, where $G_i$ is obtained from $G$ by contracting the clusters $C_i$.
Specifically, the vertex set of $G_i$ is $C_i$, which means that each vertex of $G_i$ is a \emph{set} of vertices of $G$.
We define the function $w : 2^V \times 2^V  \rightarrow \mathbb{R}$ giving edge weights in $G_i$ as $w(u, \shepchange{v}) = \sum_{xy \in ((u \times v) \cap E)} w(xy) / (|u| \cdot |v|)$. Note that the weight function $w$ is the same for all graphs $G_i$. There is an edge between two vertices $u$ and $v$ in $G_i$ if and only if $w(u,v) > 0$.

\begin{definition}[\cite{benzecri1982construction}]\label{def:reducibility}
A cluster similarity function $f : 2^V \times 2^V  \rightarrow \mathbb{R}$ is called \emph{reducible} if and only if for any clusters $x, y, z$, we have $f(x, y \cup z) \leq \max(f(x,y), f(x,z))$.
\end{definition}

In particular, the function $w$ defined above is reducible~\cite{benzecri1982construction}.
While in the following we work with this specific similarity function, we note that our main theoretical results hold for any similarity function satisfying Definition~\ref{def:reducibility}, which includes 
e.g. single linkage, complete linkage and median linkage.

Note that $G_{i+1}$ is obtained from $G_i$ by contracting a single edge and updating the weights of the edges incident to the vertex created by the contraction.
We call each such operation a \emph{merge} of the endpoints of the contracted edge.
Let $w_{max}(G_i)$ be the largest edge weight in $G_i$.
We say that a merge of an edge $uv$ in $G_i$ is $(1+\epsilon)$-approximate, if $(1+\epsilon) w(uv) \geq w_{max}(G_i)$.

A \emph{dendrogram} is a tree describing all merges performed by running a HAC algorithm.
The tree has between $n$ and $2n-1$ nodes\footnote{We use \emph{nodes} when talking about trees and \emph{vertices} in the context of graphs.}
corresponding to clusters that are created in the course of the algorithm.
Specifically, there are $n$ leaf nodes corresponding to the initial singleton clusters.
In addition, for each merge that creates a cluster $x \cup y$ by merging $x$ and $y$, there is a node $x \cup y$, whose children are $x$ and $y$.
Hence, each internal node of the dendrogram has exactly two children.

In the description of the algorithm, instead of using a dendrogram, it is more convenient to use a \emph{merge tree}.
Observe that if we remove all leaf nodes from the dendrogram, there is a natural 1-to-1 correspondence between the remaining nodes and the merges made by the algorithm.
We call the tree obtained this way, in which each node is a \emph{merge}, a merge tree.

Observe that there may be multiple different \emph{sequences} of merges made by the algorithm that produce the very same merge tree (and dendrogram).
Specifically, given a fixed merge tree, all we know is that the first merge made by the algorithm must have been some merge corresponding to one of its leaves.

Given a merge tree, and a sequence $m_1, \ldots, m_k$ of $k$ distinct merges, we say that the sequence is \emph{consistent} with the tree, if and only if for each $1 \leq i \leq k$, either $m_i$ is a leaf in the merge tree, or $m_i$ is an internal node of the merge tree and the children of $m_i$ are $m_a$ and $m_b$, where $\max(a,b) < i$.

We say that a dendrogram is \emph{$(1+\epsilon)$-approximate} if and only if there exists a sequence of $(1+\epsilon)$-approximate merges, such that (a) the sequence is consistent with the merge tree of the dendrogram, and (b) after performing all merges in the sequence, no pair of clusters has nonzero similarity.
Finally, a $(1+\epsilon)$-approximate HAC algorithm is one which produces a $(1+\epsilon)$-approximate dendrogram.
\section{Approximate Nearest-Neighbor Chain Algorithm}
In this section we extend the 40-year old nearest-neighbor chain algorithm~\cite{benzecri1982construction} (NN-chain) with the notion of approximation.
Let $G$ be an edge-weighted graph.
For each vertex $v$ of $G$, we denote by $\wmax(v)$ the maximum weight of any edge incident to $v$.
It follows from the NN-chain / RAC algorithm that the following algorithm is equivalent to the (exact) HAC algorithm.
As long as the graph has nonzero edges, pick any edge $uv$, such that $w(uv) = \wmax(u) = \wmax(v)$ and merge together its endpoints.

Somewhat surprisingly, even though the HAC algorithm specifies a concrete order of merging edges, the NN-chain algorithm computes the very same dendrogram (up to ties among edge weights) as the standard HAC algorithm.
At the same time, the NN-chain algorithm has an important feature: the decision on whether to merge two vertices can be made entirely locally, which is a very useful property in a parallel setting.
In this section, we give an $(1+\epsilon)$-approximate HAC algorithm also based on a simple local criterion for deciding whether two vertices can be merged, which we define below.

\begin{definition}[Good merge]\label{def:good}
Let $\epsilon \geq 0$ and $G_i$ be a graph obtained from $G$ by performing some sequence of merges.
For each vertex $v$ of $G_i$, we define $\minmerge(v)$ to be the smallest linkage similarity among all merges that were performed to create cluster $v$.
Specifically, for each singleton cluster $v$ we have $\minmerge(v) = \infty$, and whenever two clusters $u$ and $v$ are merged and create cluster $u \cup v$, we have $\minmerge(u \cup v) = \min(\minmerge(u), \minmerge(v), w(uv))$.
With this notation, we say that a merge of an edge $uv$ in $G_i$ is $(1+\epsilon)-$\emph{good} if and only if
\[
\frac{\max(\wmax(u), \wmax(v))}{\min(\minmerge(u), \minmerge(v), w(uv))} \leq 1+\epsilon.
\]
\end{definition}

When $\epsilon$ is clear from the context or irrelevant, we will sometimes say good instead of $(1+\epsilon)$-good.
Observe that the value of $\minmerge(u)$ only depends on the sequence of merges which created $u$.
In contrast, $\wmax(u)$ is a function of both $u$ and some merges outside of $u$, as it also depends on the current set of neighbors of $u$.
This means that $\minmerge(u)$ is well defined given a particular merge tree.
However, to compute $\wmax(u)$ we also need to specify which merges have been performed.

\begin{figure}
    \centering
\vspace{-1em}
    \includegraphics[width=\columnwidth]{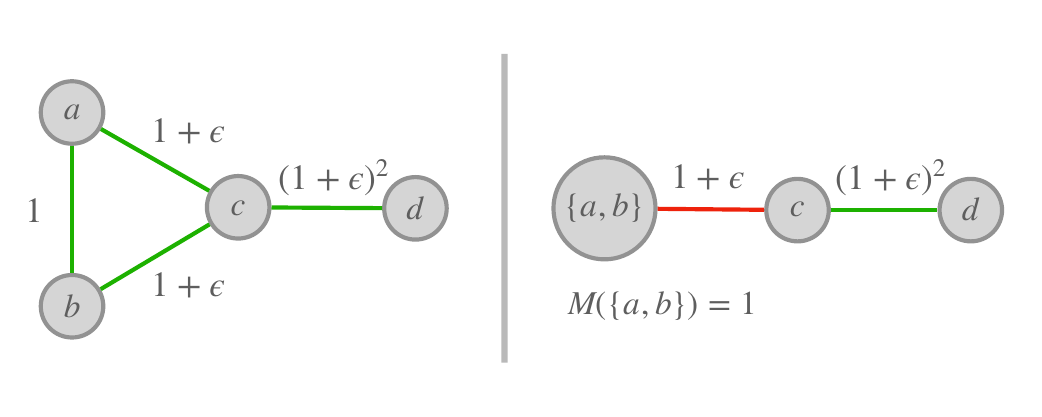}
\vspace{-3em}
    \caption{\small Example showing the need for $M(\cdot)$ values in Definition~\ref{def:good}. 
    Green edges correspond to merges which are  $(1+\epsilon)$-good, and red edges correspond to merges which are not $(1+\epsilon)$-good.
    After merging $ab$ (which is $(1+\epsilon)$-good) we obtain a vertex $\{a, b\}$ such that $M(\{a, b\}) = 1$. 
    Therefore, the merge of $\{a, b\}$ with $c$ in the resulting graph is \emph{not} $(1+\epsilon)$-good, since $\max(1+\epsilon, (1+\epsilon)^2) / \min(1, \infty, 1+\epsilon) = (1+\epsilon)^2 > 1+\epsilon$. 
    Hence, the algorithm is forced to merge $c$ with $d$. It is easy to see that allowing a merge of $\{a, b\}$ with $c$ would create a dendrogram, which is not $(1+\epsilon)$-approximate.}
    \label{fig:minmerge}
\end{figure}

We will show that applying $(1+\epsilon)$-good merges leads to a $(1+\epsilon)$-approximate dendrogram (see Lemma~\ref{lem:apxfull}).
For $\epsilon = 0$ this generalizes the RAC algorithm, as in this case the $M(u)$ and $M(v)$ terms in the denominator become irrelevant (see Observation~\ref{obs:eqrac}).
However, they are crucially important for $\epsilon > 0$, as shown in Figure~\ref{fig:minmerge}.

In the initial graph $G_1$, a merge is good iff we have that $\max(\wmax(u), \wmax(v)) / w(uv) \leq 1+\epsilon$, regardless of the maximum weight of an edge in the graph.
In general, each $(1+\epsilon)$-approximate is $(1+\epsilon)$-good, but the converse is not necessarily true.
Nevertheless, we can show that performing a sequence of good merges produces a 
$(1+\epsilon)$-approximate dendrogram.
Before we formally state and prove this property, we first show a few auxiliary lemmas.

\begin{restatable}{lemma}{wmaxdecreases}\label{lem:wmaxdecreases}
Let $G_1, \ldots, G_n$ be a sequence of graphs, in which each graph is obtained from the previous one by performing an arbitrary merge.
Let $v$ be a vertex (cluster) which exists in $G_l, \ldots, G_r$.
Let $\wmax^i(v)$ be the value of $\wmax(v)$ in $G_i$, where $l \leq i \leq r$.
Then $\wmax^l(v) \geq \wmax^{l+1}(v) \geq \ldots \geq \wmax^r(v)$.
\end{restatable}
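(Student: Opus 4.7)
The plan is to induct on the number of merges performed. It suffices to show that $\wmax^{i+1}(v) \leq \wmax^{i}(v)$ for each $l \leq i < r$; chaining these inequalities gives the lemma. So fix such an $i$ and let the merge carrying $G_i$ to $G_{i+1}$ combine two clusters $x, y$ into $x \cup y$. Since $v$ is present in both $G_i$ and $G_{i+1}$, neither $x$ nor $y$ equals $v$.

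I would then case split on how many of $x, y$ are neighbors of $v$ in $G_i$. Any edge $vz$ with $z \notin \{x, y\}$ is carried unchanged from $G_i$ to $G_{i+1}$, so only the contributions of edges $vx$ and/or $vy$ can change. If neither $x$ nor $y$ is a neighbor of $v$, then $v$'s incident edges are identical in $G_i$ and $G_{i+1}$ and $\wmax$ is unchanged. If both $x$ and $y$ are neighbors of $v$, then $vx$ and $vy$ are replaced in $G_{i+1}$ by a single edge from $v$ to $x \cup y$; reducibility (Definition~\ref{def:reducibility}) immediately gives $w(v, x\cup y) \leq \max(w(v,x), w(v,y))$. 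If exactly one of them, say $x$, is a neighbor of $v$, then $((v \times y) \cap E) = \emptyset$ and the definition of $w$ unfolds as
\[
w(v, x \cup y) \;=\; \frac{\sum_{ab \in (v \times x) \cap E} w(ab)}{|v|\cdot(|x|+|y|)} \;=\; w(v,x)\cdot\frac{|x|}{|x|+|y|} \;\leq\; w(v,x).
\]

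In every case, every edge incident to $v$ in $G_{i+1}$ has weight at most the weight of some edge incident to $v$ in $G_i$, so $\wmax^{i+1}(v) \leq \wmax^i(v)$, completing the induction. There is no serious obstacle here; the only bookkeeping subtlety is that the reducibility statement is phrased for clusters with positive similarity, so the case where only one of $x, y$ is a neighbor of $v$ warrants the short direct calculation above (or, equivalently, an extension of reducibility to the edge case $w(v,y) = 0$, which follows from the same averaging argument).
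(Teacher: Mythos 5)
Your proof is correct and is essentially the paper's argument: the paper proves the same per-step monotonicity by taking the first index where $\wmax$ increases and applying reducibility to $w(v, x\cup y) \leq \max(w(vx), w(vy))$, exactly your key step, just packaged as a contradiction rather than a direct induction. Your extra case analysis for when only one of $x,y$ neighbors $v$ is harmless but not needed, since Definition~\ref{def:reducibility} is stated for arbitrary clusters and so already covers $w(v,y)=0$.
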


Now, we prove a useful invariant, which is satisfied when performing good merges.

\begin{lemma}\label{lem:goodinvariant}
Let $G_i$ be a graph obtained from $G_1$ by performing $(1+\epsilon)$-good merges.
Then, for each vertex $v$ of $G_i$, we have that $\wmax(v)/\minmerge(v) \leq 1+\epsilon.$
\end{lemma}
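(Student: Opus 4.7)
The plan is to proceed by induction on the number of merges performed, using the definition of a good merge for the new cluster and the monotonicity result from Lemma~\ref{lem:wmaxdecreases} for the unchanged clusters.

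For the base case, the graph $G_1$ consists of singleton clusters, so $\minmerge(v) = \infty$ for every $v$, and the ratio $\wmax(v)/\minmerge(v)$ is $0 \leq 1+\epsilon$.

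For the inductive step, assume the invariant holds in $G_i$ and that $G_{i+1}$ is obtained by a $(1+\epsilon)$-good merge of some edge $uv$. I would split into two cases. First, for any vertex $x \neq u \cup v$ which persists from $G_i$ into $G_{i+1}$, the value of $\minmerge(x)$ is unchanged (it depends only on the merges internal to $x$), while Lemma~\ref{lem:wmaxdecreases} ensures that $\wmax(x)$ can only decrease. Hence the invariant transfers verbatim from $G_i$ to $G_{i+1}$. Second, for the freshly merged cluster $u \cup v$, the definition of $\minmerge$ gives
\[
\minmerge(u \cup v) = \min(\minmerge(u), \minmerge(v), w(uv)),
\]
which matches the denominator in the definition of a good merge. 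It then remains to upper bound $\wmax(u \cup v)$ in $G_{i+1}$ by $\max(\wmax^i(u), \wmax^i(v))$.

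The key step — and the one that requires a bit of care — is this last bound, which is exactly where reducibility of the linkage function (Definition~\ref{def:reducibility}) enters. For any neighbor $x$ of $u \cup v$ in $G_{i+1}$, reducibility gives $w(x, u \cup v) \leq \max(w(x, u), w(x, v))$, where the right-hand side uses weights in $G_i$ (taking $w(x, u) = 0$ if $x$ was not adjacent to $u$ in $G_i$, and similarly for $v$). Taking the maximum over all such neighbors $x$ yields
\[
\wmax^{i+1}(u \cup v) \leq \max(\wmax^i(u), \wmax^i(v)).
\]
Combining this with the $(1+\epsilon)$-goodness of the merge $uv$, namely
\[
\frac{\max(\wmax^i(u), \wmax^i(v))}{\min(\minmerge(u), \minmerge(v), w(uv))} \leq 1+\epsilon,
\]
gives $\wmax^{i+1}(u \cup v)/\minmerge(u \cup v) \leq 1+\epsilon$, closing the induction.

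The main obstacle is simply making sure the reducibility bound is applied to the right objects: $\wmax$ of the new cluster is a maximum of edge weights in $G_{i+1}$, while the goodness condition references $\wmax$ values as measured just before the merge. Once this is set up cleanly, both sides line up and the induction goes through without further calculation.
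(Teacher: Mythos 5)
Your proof is correct and follows essentially the same route as the paper's: induction on the number of merges, with persisting vertices handled via Lemma~\ref{lem:wmaxdecreases} and the freshly merged cluster handled by combining reducibility (to bound $\wmax(u \cup v)$ by $\max(\wmax(u), \wmax(v))$) with the definition of a $(1+\epsilon)$-good merge. The only difference is that you spell out the reducibility step and the identification of $\minmerge(u \cup v)$ with the goodness denominator in more detail than the paper does.
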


\begin{proof}
We prove the claim by induction on the number of good merges made.
The base case (no merges) follows trivially from the fact that $\minmerge(v) = \infty$ for each vertex $v$.
Now, consider that some number of good merges have been made, and the resulting graph is $G_{i+1}$.
Fix a vertex $v$.
If the vertex $v$ also exists in $G_i$, the property follows by \cref{lem:wmaxdecreases}.
Otherwise, $v$ is created by merging two vertices $v_1$ and $v_2$ ($v = v_1 \cup v_2$) using a $(1+\epsilon)$-good merge.
We have
\[
\frac{\wmax(v)}{\minmerge(v)} \leq \frac{\max(\wmax(v_1), \wmax(v_2))}{\minmerge(v)} \leq 1+\epsilon.
\]
In the first inequality we used reducibility (\cref{def:reducibility}), which implies $\wmax(v ) \leq \max(\wmax(v_1), \allowbreak \wmax(v_2))$.
The second inequality follows from the definition of a good merge.
\end{proof}

\begin{restatable}{observation}{eqrac}\label{obs:eqrac}
Let $G_i$ be obtained from $G_1$ by performing $1$-good merges.
Then, a merge $uv$ in $G_i$ is $1$-good if and only if $w(uv) = \max(\wmax(u), \wmax(v))$.
\end{restatable}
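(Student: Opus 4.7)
The plan is to prove both implications directly from the definition of $1$-goodness, leveraging Lemma~\ref{lem:goodinvariant} to handle the $\minmerge$ terms in the denominator. Setting $\epsilon = 0$, a merge $uv$ in $G_i$ is $1$-good precisely when
\[
\max(\wmax(u), \wmax(v)) \leq \min(\minmerge(u), \minmerge(v), w(uv)).
\]
I will show both directions reduce to manipulating this single inequality, together with the trivial observation that $w(uv) \leq \wmax(u)$ and $w(uv) \leq \wmax(v)$ since $uv$ is an edge incident to both $u$ and $v$.

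For the forward direction, assume the merge is $1$-good. Then $\max(\wmax(u), \wmax(v)) \leq w(uv)$ by dropping two terms in the $\min$. Combined with $w(uv) \leq \min(\wmax(u), \wmax(v)) \leq \max(\wmax(u), \wmax(v))$, this forces $w(uv) = \max(\wmax(u), \wmax(v))$ (and in fact $w(uv) = \wmax(u) = \wmax(v)$).

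For the reverse direction, suppose $w(uv) = \max(\wmax(u), \wmax(v))$. The same sandwich as above gives $w(uv) = \wmax(u) = \wmax(v)$. I then need to verify $\max(\wmax(u), \wmax(v)) \leq \min(\minmerge(u), \minmerge(v), w(uv))$. The bound against $w(uv)$ is immediate from the equality. The bounds against $\minmerge(u)$ and $\minmerge(v)$ are exactly where Lemma~\ref{lem:goodinvariant} is needed: since $G_i$ was obtained by a sequence of $1$-good merges, that lemma (applied with $\epsilon = 0$) yields $\wmax(u) \leq \minmerge(u)$ and $\wmax(v) \leq \minmerge(v)$, which together give the required inequality.

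There is no real obstacle here; the only nontrivial input is the invariant from Lemma~\ref{lem:goodinvariant}, which precisely captures why, once $\epsilon = 0$, the $\minmerge$ terms in the denominator of Definition~\ref{def:good} become slack and the good-merge condition collapses to the reciprocal nearest-neighbor condition underlying RAC.
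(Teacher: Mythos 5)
Your proof is correct and follows essentially the same route as the paper's: the forward direction uses the sandwich $w(uv) \leq \min(\wmax(u), \wmax(v))$ against the $1$-goodness inequality, and the reverse direction invokes Lemma~\ref{lem:goodinvariant} (with $\epsilon = 0$) to discharge the $\minmerge$ terms after deducing $w(uv) = \wmax(u) = \wmax(v)$. No gaps; this matches the paper's argument step for step.
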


\noindent The proof uses Lemma~\ref{lem:goodinvariant};
we provide it in the Appendix.

\begin{definition}
Let $D$ be a dendrogram and $M$ be its corresponding merge tree.
Consider a sequence of merges $m_1, \ldots, m_k$ which contains all merges of $M$ and is consistent with $M$.
We say that this sequence is a \emph{greedy merge sequence} of $D$ if its obtained as follows.
Start with an empty sequence.
At each step append a merge $m_i$ of maximum weight, chosen such that the new expanded sequence is consistent with $M$.
\end{definition}

A greedy merge sequence is a {\emph canonical} merge sequence, in the sense that it achieves the best approximation ratio (i.e. the maximum approximation ratio of a merge is minimized) out of all consistent orderings of merges of $M$.

Let $D$ be a dendrogram and $M$ be its corresponding merge tree.
Let $m_1, \ldots, m_k$ be a sequence of merges consistent with $M$ and $G_k$ be the graph obtained after applying them.
We say that the \emph{error} of a merge $m$ which merges $u$ and $v$ is the maximum weight of an edge in $G_k$ divided by $w(uv)$.

\begin{lemma}\label{lem:empapx}
Let $G=(V, E,w)$ be a weighted graph, $D$ be a dendrogram of $G$ and $m_1, \ldots, m_n$ be its greedy merge sequence. 
Let $1+\epsilon$ be the maximum error of a merge in the produced sequence.
Then, $D$ is $(1+\epsilon)$-approximate and is not $(1+\epsilon')$-approximate for any $\epsilon' < \epsilon$.
\end{lemma}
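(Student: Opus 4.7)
The statement has two parts. Part~(1), that $D$ is $(1+\epsilon)$-approximate, follows immediately from the definitions: the greedy merge sequence is consistent with the merge tree of $D$ by construction, every merge in it has error at most $1+\epsilon$, and since $D$ is a dendrogram of $G$ the sequence exhausts all positive-similarity pairs. The substance of the lemma is part~(2): that greedy is optimal, i.e., achieves the minimum max error over all consistent orderings of the merges of $M$.

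My plan for part~(2) is a first-difference exchange argument. Suppose for contradiction that some consistent sequence $\sigma' = (m'_1, \ldots, m'_n)$ has max error $1+\epsilon' < 1+\epsilon$. Let $i$ be the first index at which $\sigma'$ disagrees with the greedy sequence $\sigma = (m_1, \ldots, m_n)$, and let $j > i$ be the position at which $\sigma'$ eventually performs $m_i$. Construct
\[
\hat\sigma \;=\; (m_1, \ldots, m_{i-1},\, m_i,\, m'_i,\, m'_{i+1}, \ldots, m'_{j-1},\, m'_{j+1}, \ldots, m'_n),
\]
i.e.\ $\sigma'$ with $m_i$ pulled from position $j$ to position $i$ and everything in between shifted one slot back.

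Two things must be verified about $\hat\sigma$. First, consistency with the merge tree: $m_i$ is available right after the shared prefix $m_1, \ldots, m_{i-1}$ (greedy performs it there), and every subsequent merge $m'_l$ of $\sigma'$ still has all of its merge-tree descendants in its new prefix in $\hat\sigma$, because that prefix differs from the corresponding one in $\sigma'$ only by the inclusion of $m_i$. Second, the max error does not increase: at position $i$ both sequences see the same state, and greedy's choice ensures $w(m_i) \geq w(m'_i)$, so $\hat\sigma$'s error there is no larger than $\sigma'$'s; at positions $i+1, \ldots, j$ the merge $\hat\sigma$ performs matches the one $\sigma'$ performed one step earlier, and the state in $\hat\sigma$ is the state in $\sigma'$ at that earlier step together with one extra contraction of $m_i$, so by reducibility (Definition~\ref{def:reducibility}) the maximum edge weight can only decrease, making the error no larger; at positions $j+1$ onward the sets of merges performed coincide, and so do the states and errors.

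Hence $\hat\sigma$ is a consistent sequence with max error at most $1+\epsilon'$ that agrees with the greedy sequence for one more step than $\sigma'$ did. Iterating this swap at most $n$ times produces the greedy sequence itself, with max error still at most $1+\epsilon'$, contradicting our assumption that greedy's max error is $1+\epsilon > 1+\epsilon'$. The main obstacle in executing this plan is the error bookkeeping during the swap — in particular, the reason the intermediate states in $\hat\sigma$ have no larger maximum edge weight than the corresponding states in $\sigma'$. This is where reducibility of the similarity function is essential: contracting one additional edge cannot create an edge heavier than the heaviest edge that existed before the contraction.
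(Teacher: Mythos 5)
Your proof is correct, but it takes a genuinely different route from the paper's. You establish greedy optimality by an exchange argument: assuming a consistent ordering with strictly smaller maximum error, you pull the greedy choice forward to the first point of disagreement, verify that consistency is preserved and that no error increases at any position, and iterate until the sequence becomes the greedy one. The paper instead uses a direct ``greedy stays ahead'' induction: it fixes an optimal sequence $q_1, \ldots, q_n$, and for each prefix $m_1, \ldots, m_k$ of the greedy sequence identifies the first $q_{j+1}$ not yet performed; since $\{q_1, \ldots, q_j\} \subseteq \{m_1, \ldots, m_k\}$, the maximum edge weight after greedy's prefix is no larger than after $q_1, \ldots, q_j$, so $q_{j+1}$ is available with error at most the optimum, and greedy's actual choice, having minimum error among available merges, can only do better. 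Both arguments hinge on the same two facts --- the maximum edge weight is non-increasing under additional contractions (reducibility), and a merge's similarity is determined by the merge tree alone, independent of what else has been contracted --- and you correctly identify and use both. The paper's induction is shorter because it compares only prefixes as sets and never has to construct or re-verify intermediate interleaved sequences; your swap argument requires the extra bookkeeping at positions $i+1, \ldots, j$ but makes the role of consistency more explicit. Your treatment of the two directions of the lemma (approximability at $1+\epsilon$ and non-approximability below it) via optimality of greedy matches the paper's intent.
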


\begin{proof}
Let $q_1, \ldots, q_n$ be an optimal merge sequence consistent with $M$, that is, a sequence in which the maximum error of a merge is minimum possible.
Let $1+\epsilon$ be the maximum error of a merge in this sequence.

We refer to the algorithm from the lemma statement as the greedy algorithm.
Note that the merge sequence $m_1, \ldots, m_n$  it produces is a permutation of $q_1, \ldots, q_n$, as both sequences consist of all merges of $M$.
It suffices to prove that each merge appended by the algorithm has error at most $1+\epsilon$.
We prove that this property holds for each prefix $m_1, \ldots, m_k$ by induction on $k$.
The case of $k=0$ is trivial. Consider $k > 0$.

Let $j \geq 0$ be the minimum value such that $q_{j+1}$ is \emph{not} one of $m_1, \ldots, m_k$.
This implies that the set of merges $\{m_1, \ldots, m_k\}$ is a superset of $\{q_1, \ldots, q_j\}$.
This in turn implies that the maximum weight in the graph after applying merges $m_1, \ldots, m_k$ is not larger than the maximum weight in the graph after applying $q_1, \ldots, q_j$.
Hence, the error of the merge $q_{j+1}$ after applying $m_1, \ldots, m_k$ is at most the error of $q_{j+1}$ after applying $q_1, \ldots, q_j$, which by definition is at most $1+\epsilon$.
Since the greedy algorithm chooses the merge of minimum error, we conclude that the error of $m_{k+1}$ is at most $1+\epsilon$.
\end{proof}

We conclude this section with our main theoretical result, which shows that applying $(1+\epsilon)$-good merges produces a $(1+\epsilon)$-approximate dendrogram.
\revision{We stress that the $(1+\epsilon)$-good merges can be applied in \emph{any} order (as long as they are $(1+\epsilon)$-good in that order).
This allows us to remove the greedy aspect of HAC from consideration and achieve better parallelism.} 

\begin{lemma}\label{lem:apxfull}
Let $D$ be a dendrogram of $G$ obtained by applying $(1+\epsilon)$-good merges and $\epsilon > 0$.
Then, \revision{$D$ is} $(1+\epsilon)$-approximate.
\end{lemma}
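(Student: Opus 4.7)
The plan is to reduce, via \cref{lem:empapx}, to the claim that the greedy merge sequence of $D$ has maximum error at most $1+\epsilon$, and then verify this step-by-step using the good-merge invariant applied at a carefully chosen moment of the (assumed) good-merge execution producing $D$. I fix a greedy step $i$, let $m_i$ merge clusters $u,v$ with weight $w_i$, and let $xy$ be a maximum-weight edge in the current greedy graph, of weight $W^*$. The goal is to show $W^* \leq (1+\epsilon) w_i$.

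First I would isolate the relevant good-merge moment. Let $g_1, \ldots, g_n$ be the good-merge sequence producing $D$, and for each cluster $c$ let $t_c$ denote the step at which $c$ is first merged (i.e., merged with its dendrogram sibling). Assume without loss of generality that $t_x \leq t_y$. The key observation is that at step $t_x$ the cluster $y$ cannot already be strictly contained in any current cluster of the good-merge graph: otherwise $y$ would remain strictly inside some super-cluster forever (merges are irreversible), contradicting that $y$ must be current at $t_y \geq t_x$. So at good-merge time $t_x$, either $y$ is current or $y$ is split into current sub-clusters $y_1, \ldots, y_k$; applying reducibility (\cref{def:reducibility}) iteratively one obtains $w(x,y) \leq \max_j w(x, y_j)$ in the split case, so in either case $\wmax(x) \geq w(x,y) = W^*$ in the good-merge graph at time $t_x$.

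Let $A'_0$ denote $x$'s dendrogram sibling, so that $g_{t_x}$ is the merge $x \cup A'_0$. Since this merge is $(1+\epsilon)$-good,
\[
\wmax(x) \leq (1+\epsilon)\,\min\bigl(\minmerge(x),\ \minmerge(A'_0),\ w(x, A'_0)\bigr),
\]
which combined with $\wmax(x) \geq W^*$ yields both $w(x, A'_0) \geq W^*/(1+\epsilon)$ and $\minmerge(A'_0) \geq W^*/(1+\epsilon)$. I then split on the greedy state at step $i$. If $A'_0$ is current in the greedy graph, then $x \cup A'_0$ is an available merge, so $w_i \geq w(x, A'_0) \geq W^*/(1+\epsilon)$. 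Otherwise $A'_0$ must be split in greedy (its parent merge $x \cup A'_0$ cannot have fired, because $x$ is still current), so some merge inside $A'_0$'s subtree is pending; any bottom-most such pending merge is available to greedy, and every merge in $A'_0$'s subtree has weight at least $\minmerge(A'_0)$, so $w_i \geq \minmerge(A'_0) \geq W^*/(1+\epsilon)$. Either way $W^* \leq (1+\epsilon) w_i$, as desired.

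The hardest part will be this second case: a priori the greedy and good-merge sequences need not agree on which clusters are current, so greedy may face a version of $A'_0$ that has been broken apart, and the natural candidate merge $x \cup A'_0$ is then simply not available. The resolution uses the full strength of \cref{def:good} — specifically the $\minmerge(A'_0)$ term in the denominator — to give a uniform lower bound on the weight of every merge inside $A'_0$'s subtree, including whichever pending merge greedy can actually see. This is precisely the role of $M(\cdot)$ that Figure~\ref{fig:minmerge} highlights, and without it the argument would collapse.
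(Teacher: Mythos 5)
Your proof is correct and follows essentially the same route as the paper's: both pass to the greedy merge sequence, use reducibility at the moment of $x$'s first merge in the good-merge execution to get $\wmax(x)\geq w(xy)$ even when $y$ is split, and then use the $\minmerge$ term in the goodness condition to lower-bound the weight of some available merge in the sibling's subtree. Your explicit two-case split (sibling current vs.\ split in the greedy state) is just a slight reorganization of the paper's single observation that the subtree rooted at $x\cup x'$ contains an available merge and all its merges have weight at least $\minmerge(x\cup x')$.
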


\begin{proof}
Let $m_1, \ldots, m_r$ be a greedy merge sequence of $D$.
We use induction on $k$ to show that all merges $m_1, \ldots, m_k$ are $(1+\epsilon)$-approximate.
For $k=0$ the lemma is trivial, so let us assume $0 < k \leq r$.
Let $G_k$ be the graph obtained by applying the merges $m_1, \ldots, m_{k-1}$.
Let $xy$ be the maximum weight edge in $G_k$ (chosen arbitrarily in case of a tie).

We will say that a merge $m'$ is \emph{available} if it's one of the merges in the merge tree of $D$, and $m_1, \ldots, m_k, m'$ is consistent with the merge tree of $D$.
Our goal is to show that $m_k$ is $(1+\epsilon)$-approximate.
Hence, we need to show that the merge similarity of $m_k$ is at least $w(xy) / (1+\epsilon)$.
To that end, it suffices to show that there is an available merge of similarity at least $w(xy) / (1+\epsilon)$.
This is because, thanks to the greedy construction, $m_k$ is the available merge of maximum merge similarity.

Consider the first merges in the dendrogram $D$, which involved vertices $x$ and $y$.
That is, let $x'$ be the sibling vertex of $x$ in the dendrogram $D$ and $y'$ be the sibling vertex of $y$ (i.e., $x'$ is the cluster with which $x$ is merged in the dendrogram).

If $x' = y$, then also $y' = x$ and the merge of $x$ with $y$ is available.
The weight of this merge is $w(xy)$, which is trivially more than $w(xy) / (1+\epsilon)$.
This concludes the proof in this case.

Now consider the case when $x' \neq y$ (which also means that $y' \neq x$).
Without loss of generality, assume that in the merge sequence which produced dendrogram $D$, the merge of $x'$ with $x$ happened before the merge of $y'$ with $y$.

\begin{restatable}{claim}{clmerge}\label{claim:minmerge}
We have $w(xy) / \min(\minmerge(x'), \minmerge(x), w(xx')) = \\ w(xy) / \minmerge(x' \cup x) \leq 1+\epsilon$.
\end{restatable}

\revision{
The proof can be found in the Appendix.}

Consider the subtree of the merge tree rooted at $x' \cup x$.
We will show that that all merges in that subtree have merge similarity at least $w(xy) / (1+\epsilon)$.
This would finish the proof, as the subtree clearly contains at least one available merge.

Consider any available merge in that subtree and assume that it merges together $z_1$ and $z_2$, where $z_1, z_2$ are both subsets of $x' \cup x$.
By the definition of $\minmerge$ and \cref{claim:minmerge}, $w(z_1z_2) \geq \minmerge(x' \cup x) \geq w(xy) / (1+\epsilon)$, which completes the proof.
\end{proof}

\begin{algorithm}[!t]\caption{\terahac{}($G=(V, E, w, \minmerge), \epsilon, \wthreshold$)}\label{alg:terahac}
    \begin{algorithmic}[1]
    \Require{Similarity graph $G$, $\epsilon > 0$, $\wthreshold \geq 0$.}
    \Ensure{$(1+\epsilon)$-approximate HAC dendrogram.}
    \While{$|E| > 0$}\label{parhac:loopstart}
      \State Partition $G$ into clusters $C'_1, \ldots, C'_k$ \label{l:partition}
      \For{each cluster $C$ in $C'_i$ in parallel}
         \State $G^C := $ subgraph of $G$ consisting of vertices in $C$ \indent \indent and all their incident edges
         \State $M := $ \subhac{$(G^C, C)$}
         \State Apply the merges $M$ in the graph $G$
      \EndFor
      \State Remove from $G$ vertices $v$ s.t. $\wmax(v) < \wthreshold / (1+\epsilon)$\label{l:optvertices}
      \EndWhile
    \end{algorithmic}
\end{algorithm}

\begin{algorithm}[!t]\caption{\subhac{}($G=(V, E, w, \minmerge), A, \epsilon$)}\label{alg:subhac}
    \begin{algorithmic}[1]
    \Require{$G^A = (V, E, w, \minmerge)$, $A \subseteq V$, $\epsilon > 0$.}
    \Ensure{A set of merges of vertices in $A$}
    \State Mark vertices of $V \setminus A$ as \emph{inactive}
    \State $T := $ priority queue with edges of $G[A]$ keyed by goodness.
    \While{$\min(T) \leq 1+\epsilon$}
        \State $uv :=  \textsc{RemoveMin}(T)$
        \State Remove from $T$ all edges incident to $u$ or $v$
        \State Merge $u$ and $v$ in $G$, and update the weights of all edges \indent incident to $u \cup v$ in $G$
        \State Add to $T$ incident edges of $u \cup v$ that do not have an \indent inactive endpoint
    \EndWhile
    \State Return all merges made
    \end{algorithmic}
\end{algorithm}

\section{\terahac{} algorithm}

The \terahac{} algorithm is presented as Algorithm~\ref{alg:terahac}. It takes as input a weighted graph $G$, the accuracy parameter $\epsilon$ and a weight threshold $\wthreshold > 0$.
Each vertex $v$ of $G$ is assigned a $\minmerge(v)$ value which is initially equal to $\infty$.
The weight threshold $\wthreshold$ is used to prune the dendrogram by performing {\em vertex pruning}, i.e., stop the algorithm once it performs all merges of sufficiently high similarity.
The effect of this parameter will be discussed later. For now, let us assume that $t = 0$, which makes line \ref{l:optvertices} redundant.
We analyze the $t > 0$ case in Section~\ref{sec:flattening}.

The algorithm runs in a loop. Each iteration of the loop is a {\em round}, which merges together some vertices of $G$.
The loop runs until the graph has no edges.
In each round, we first partition the graph $G$ into clusters.
The choice of the partitioning method affects the running time of the algorithm, but from the correctness point of view, the partitioning can be arbitrary.

Given a graph $G=(V, E, w)$ and a subset of its vertices $C$, we use $G^C$ to denote a graph consisting of all vertices of $C$ and their immediate neighbors, as well as edges that have at least one endpoint in $C$.
Formally, the vertex set of $G^C$ is \shepchange{$C \cup \{x \in V \mid \exists_{y \in C} xy \in E\}$} and the edge set is $\{xy \in E \mid x \in C \vee y \in C\}$.

Once the partitioning is computed, for each cluster $C$ we compute a graph .
Observe that each intra-cluster edge belongs to exactly one graph and each inter-cluster edge belongs to exactly two graphs computed this way.

Next, the algorithm runs \subhac{} algorithm on each graph $G^C$.
The goal of \subhac{} is to find a longest sequence of merges which are $(1+\epsilon)$-\emph{good} (see Definition~\ref{def:good}).
The challenge is that while the algorithm only sees a subgraph $G^C$ of $G$, the merges it makes should be good with respect to the entire graph $G$.
Once we have computed all merges made by all \subhac{} calls, we apply them in the global graph.

Let us now describe \subhac{} algorithm in detail.
The algorithm is given a set $C$ of vertices of $G$ and a graph $G^C$.
The vertices of $C$ in $G^C$ are called \emph{active}, and the remaining vertices are \emph{inactive}.
The goal of the algorithm is to perform a maximal sequence of good merges of active vertices in $C$.
We assume that vertices formed by merging active vertices are also active.
We later show that the good merges in $G^C$ are also good merges in the "global" graph $G$.

To decide whether a merge of $u$ and $v$ is good, we need to know all incident edges of $u$ and $v$.
This is why the graph $G^C$ contains not only vertices of $C$, but also their neighbors.
Since each merge can only involve vertices of $C$ (i.e., they do not affect the inactive vertices), the merges performed by parallel invocations of \subhac{} affect disjoint sets of vertices.

Given an edge $uv$ of $G^C$, where both $u$ and $v$ are active (that is, not inactive) we define $\gness(uv) = \max(\wmax(u),\allowbreak \wmax(v)) / \minmerge(u \cup v)$.
By Definition~\ref{def:good}, a merge of $u$ with $v$ is $(1+\epsilon)$-good iff $\gness(uv) \leq 1+\epsilon$.\footnote{Somewhat counter-intuitively edges of  \emph{lower} goodness are better. Renaming \emph{goodness} to \emph{badness} would be more intuitive, but would also sound negative.}
At a high level, in each step the algorithm finds the merge of the smallest goodness.
If the goodness is more than $1+\epsilon$, it terminates.
Otherwise, it performs the merge and continues.

We give a near-linear time implementation of \subhac{} which performs a nearly-maximal
set of $(1+\epsilon)$-good merges.
More specifically, the algorithm is guaranteed to perform {\em all merges} in the subgraph with goodness in the range $[1, T)$ where $T = 1 + \Theta(\epsilon)$, and no merges with goodness $> 1+\epsilon$.
\revision{
We bound the running time of \subhac{} as follows. 
\begin{theorem}
The running time of \subhac{} \shepchange{on a graph containing $n$ vertices and $m$ edges} is $O((m + n) \log^2 n)$.
\end{theorem}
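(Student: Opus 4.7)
The plan is to bound the running time of \subhac{} by combining (i) a small-to-large merging argument to control the total number of edge operations and (ii) the $O(\log n)$ per-operation cost of standard heap data structures. \subhac{} performs at most $n-1$ merges (each reduces the active vertex count by one), and I would implement $\wmax(\cdot)$ for each vertex as a max-heap over incident edge weights, alongside the global priority queue $T$ keyed by $\gness$; the initial construction of both takes $O((m+n)\log n)$ time.

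For the cost per merge of $u$ and $v$, the work consists of recomputing the weight of each edge incident to the new cluster $u \cup v$, updating the per-vertex $\wmax$ heaps at the affected neighbors, and inserting/removing the corresponding entries in $T$; with binary heaps each of these edge-level operations costs $O(\log n)$. To bound the total number of edge operations across all merges, I would iterate at each merge over the incident edges of the endpoint with the smaller cluster size (union by size). The classical small-to-large argument then shows that every original edge of $G$ is touched $O(\log n)$ times over the course of the algorithm, giving $O(m \log n)$ edge-level operations and hence $O(m \log^2 n)$ heap work in total.

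The main obstacle is that a merge of $u$ and $v$ can decrease $\wmax(z)$ for a neighbor $z$ of $u \cup v$, which invalidates the stored $\gness$ values of \emph{all} edges incident to $z$ in $T$ — not only those touched by the current merge. Eagerly refreshing all such entries per merge would destroy the bound. I would address this with lazy invalidation in $T$: each entry stores the goodness computed at insertion time, and upon popping, the value is recomputed from the current $\wmax$ and $\minmerge$ values; if it has changed, the edge is re-inserted with the updated key and the pop is repeated. Using the monotonicity of $\wmax$ from Lemma~\ref{lem:wmaxdecreases} together with reducibility of the linkage function, I would argue that each edge is reinserted $O(\log n)$ times amortized (charged to the small-to-large touches that cause $\wmax$ to drop at one of its endpoints), keeping the total number of $T$-operations at $O(m\log n)$ and the overall running time at $O((m+n)\log^2 n)$.
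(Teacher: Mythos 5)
There is a genuine gap, and it is exactly the obstacle the paper's appendix identifies as the reason exact maintenance is ``probably hopeless.'' Your small-to-large argument bounds the number of edges you \emph{iterate over} during merges, but under average linkage a merge of $x$ and $y$ changes the weight of \emph{every} edge incident to the resulting cluster (the denominator $|u|\cdot|v|$ grows), not just the edges of the smaller side. In the star example (repeatedly merging degree-one leaves into the center), small-to-large touches one edge per merge, yet all remaining edges incident to the center change weight, so keeping the neighbors' max-heaps and $\wmax$ values exact costs $\Theta(n)$ per merge and $\Theta(n^2)$ overall. The same issue breaks your amortization of reinsertions: a single touch of the edge $(y,z)$ can decrease $\wmax(z)$ and thereby change the goodness of all $\deg(z)$ edges incident to $z$, and $\wmax(z)$ can drop once per merge among $z$'s neighbors --- far more than $O(\log n)$ times --- so the charging of ``$\deg(z)$ invalidations per drop'' to ``one small-to-large touch'' does not go through. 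Your lazy-invalidation fallback does not rescue correctness either: since $\gness(u,v)$ only \emph{decreases} while the clusters $u,v$ persist (the numerator $\max(\wmax(u),\wmax(v))$ is nonincreasing and the denominator is fixed), the stored keys are overestimates, so when the queue's minimum stored key exceeds $1+\epsilon$ there may still be edges whose true goodness is below $1+\epsilon$ hiding behind stale keys; recomputing only upon popping never discovers them, and the maximality guarantee that \terahac{} needs from \subhac{} fails.

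The paper's proof avoids this by giving up on exactness: it stores one-sided partial weights $w(v,x)\cdot|v|$ (insensitive to $v$'s own size), broadcasts a vertex's cluster size to its neighbors only when the size has grown by a $(1+\alpha)$ factor, and reassigns/refreshes an edge's stored goodness only when the relevant $\apxwmax$ value has changed by a $(1+\alpha)$ factor. Under a polynomially bounded aspect ratio, each vertex performs only $O(\log n)$ such broadcasts over its lifetime (this uses a product inequality bounding the total multiplicative growth of $\apxwmax$ by $O(n^{\epsilon})$), and each edge is unsuccessfully inspected only $O(\log n)$ times, yielding $O((m+n)\log^2 n)$. The price is a relaxed guarantee --- every merge performed is $(1+\epsilon)$-good, but the algorithm only promises to find all merges with goodness in $[1,(1+\epsilon)/(1+\alpha)^3]$ rather than all $(1+\epsilon)$-good ones. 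Any correct near-linear argument needs some version of this slack; your proposal, which implicitly assumes exact weights, exact $\wmax$, and full maximality, cannot be repaired by small-to-large alone.
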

We provide the proof of this result in the Appendix.}
Our algorithm is similar to a recent $(1+\epsilon)$-approximate HAC algorithm~\cite{dhulipala2021hierarchical}, but is significantly more complicated due to needing to maintain not only edge weights as merges are performed, but also the  goodness values of edges.

Let us now discuss the correctness of the algorithm.
It is easy to see that each merge \subhac{} performs is good (with respect to its input graph), and that it returns once no good merges can be made.
However, it is not obvious whether the merges computed by all \subhac{} calls are also good when applied on the global graph.
In particular, there are two possible issues.
First, each \subhac{} calls only sees a subgraph of the entire graph and computes good merges with respect to that subgraph.
Second, all merges produced by the parallel calls are then applied on the entire graph, but we do not specify in what order they need to be applied.
We address these in the following two lemmas.

\begin{restatable}{lemma}{lemsub}
For some $C \subseteq V$, consider the graph $G^C$ and let $m_1, \ldots, m_k$ be a sequence of merges in $G^C$, such that (a) the merges are $(1+\epsilon)$-good with respect to $G^C$ and (b) the merges only involve merging active vertices of $G^C$.
Then, the merges $m_1, \ldots, m_k$ are also $(1+\epsilon)$-good with respect to $G_i$.
\end{restatable}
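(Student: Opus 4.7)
The plan is to argue inductively that after applying any prefix $m_1, \ldots, m_j$ of the merge sequence, the ``local view'' of each active vertex in the evolving $G^C$ exactly matches its ``global view'' in the correspondingly-evolved $G_i$. Once this is established, the goodness condition of the next merge $m_{j+1}$ evaluates to the same value in both graphs, so if it is $(1+\epsilon)$-good in $G^C$ it is also $(1+\epsilon)$-good in $G_i$.

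Formally, let $H_j$ denote the graph obtained from $G^C$ by applying $m_1, \ldots, m_j$ and let $H'_j$ denote the graph obtained from $G_i$ by applying the same merges. The invariant I would prove by induction on $j$ is: for every active cluster $u$ present in $H_j$, (a) $u$ is also a vertex of $H'_j$, (b) the set of edges incident to $u$ in $H_j$ coincides with the set of edges incident to $u$ in $H'_j$, and (c) the corresponding edges carry identical weights. The base case $j = 0$ follows from the definition of $G^C$: it contains every vertex of $C$, every neighbor of $C$, and every edge with at least one endpoint in $C$, so the neighborhoods and edge weights of the active vertices agree with those in $G_i$.

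For the inductive step, $m_{j+1}$ replaces two active vertices $u, v$ of $H_j$ by a new cluster $u \cup v$; by the inductive hypothesis the same replacement is well-defined in $H'_j$. The new vertex inherits the union of the neighborhoods of $u$ and $v$, which by (b) is identical in $H_j$ and $H'_j$. Under the average-linkage weight function $w(x, y) = \sum_{pq \in ((x \times y) \cap E)} w(pq) / (|x|\cdot|y|)$, the recomputed weight of each edge incident to $u \cup v$ depends only on the atomic edges between the underlying input points of $u \cup v$ and its neighbor, all of which lie in the common set witnessed by (b)–(c). Thus the invariant continues to hold in $H_{j+1}$ and $H'_{j+1}$. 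The argument extends verbatim to any reducible similarity function defined from the atomic edge weights.

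With the invariant in hand, consider the goodness $\max(\wmax(u), \wmax(v))/\minmerge(u \cup v)$ of $m_{j+1}$. The value $\wmax$ of an active vertex depends only on its current incident edges and their weights, which agree in $H_j$ and $H'_j$ by (b)–(c); and $\minmerge$ depends only on the sequence of merges that created $u$ and $v$, together with the edge weights at the moment of those merges, all of which agree in the two graphs by the inductive invariant. Hence the goodness of $m_{j+1}$ is the same in $H_j$ and $H'_j$, and the lemma follows. The main obstacle is the careful formulation of the invariant so that it is preserved when a merge creates a new vertex: in particular, one must track that edges from active vertices to \emph{inactive} vertices are also retained with identical weights. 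This is exactly why $G^C$ is defined to include the neighbors of $C$ (together with all edges incident to $C$), rather than merely the induced subgraph on $C$.
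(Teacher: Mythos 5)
Your proposal is correct and follows essentially the same route as the paper's proof: both arguments rest on the observation that goodness is determined entirely by the merged vertices' incident edges, weights, and $\minmerge$ values, and both establish inductively that applying a merge of active vertices to $G^C$ yields exactly the subgraph one would obtain by applying that merge globally and re-extracting the neighborhood of the updated cluster set. Your version merely spells out the invariant (matching neighborhoods and weights for active clusters) that the paper states more tersely as ``performing $m_1$ on $G^C$ yields $G_1^{C_1}$.''
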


\begin{lemma}
Let $\epsilon \geq 0$, and $G=(V,E,w)$ be a graph obtained by performing a sequence of $(1+\epsilon)$-good merges.
Consider $C \subseteq V$.
Let $m_1, \ldots, m_k$ be a sequence of $(1+\epsilon)$-good merges with respect to $G$, such that each merge involves two vertices of $C$ (or vertices created by merging them).
Moreover, let Let $m'_1, \ldots, m'_l$ be a sequence of $(1+\epsilon)$-good merges with respect to $G$, each involving two vertices of $V \setminus C$ (or vertices created by merging them).

Then, any interleave of these sequences, that is any sequence of length $k+l$ which contains both sequences as subsequences, is a sequence of $(1+\epsilon)$-good merges with respect to $G$
\end{lemma}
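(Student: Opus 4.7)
My plan is to process the interleaved sequence one merge at a time and show that each individual merge remains $(1+\epsilon)$-good. The key observation is that the goodness of a merge involving two vertices on the $C$-side depends only on quantities that are either completely unaffected by $(V\setminus C)$-side merges, or can only become smaller because of them (and symmetrically for the other side). So interleaving can only improve goodness, never destroy it.

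Concretely, fix an arbitrary interleaving and consider the $i$-th step, where we are about to apply some merge $m$ from (say) the first sequence, involving vertices $u$ and $v$ that are in $C$ or built from $C$-vertices. Let $G'$ be the graph after applying just the prefix of the first sequence that precedes $m$ (this is the graph in which $m$ was originally assumed to be $(1+\epsilon)$-good), and let $G''$ be the graph in the interleaved execution just before $m$ is applied, which is $G'$ further modified by some prefix of the second sequence. I would then verify, term by term in Definition~\ref{def:good}, that $\gness(uv)$ computed in $G''$ is no larger than $\gness(uv)$ computed in $G'$. Specifically: (i)~$\minmerge(u)$ and $\minmerge(v)$ are identical in $G'$ and $G''$, because they are determined solely by the merge history that produced $u$ and $v$, and that history involves only $C$-side merges; (ii)~the edge weight $w(uv)$ is unchanged, because both endpoints lie on the $C$-side and $w$ is defined by summing only over the original edges of $G$ between the constituents, none of which are altered by $(V\setminus C)$-side contractions; and (iii)~$\wmax(u)$ and $\wmax(v)$ in $G''$ are bounded above by their values in $G'$ by Lemma~\ref{lem:wmaxdecreases}, since $G''$ is obtained from $G'$ by performing additional merges. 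Combining these, $\gness(uv)$ in $G''$ is at most $\gness(uv)$ in $G'$, which is at most $1+\epsilon$ by hypothesis.

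The symmetric argument applies to merges drawn from the second sequence. Applying this step to every position in the interleaving, we conclude that every merge in the interleaved sequence is $(1+\epsilon)$-good with respect to the graph to which it is applied, which is exactly the claim.

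The only subtle point — and the main thing that needs to be spelled out carefully rather than hand-waved — is verifying that $w(uv)$ and $\minmerge(u), \minmerge(v)$ are genuinely unchanged by $(V\setminus C)$-side activity. This reduces to the structural observation that the two merge sequences operate on disjoint parts of the vertex set (in the sense that no vertex of one sequence ever contains a constituent of the other), so that the induced sub-histories on the two sides can be reshuffled independently. Once that disjointness is stated cleanly, the three bullet points above are immediate and the proof concludes.
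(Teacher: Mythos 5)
Your proposal is correct and follows essentially the same route as the paper's own proof: both arguments observe that goodness depends only on $w(uv)$, $\minmerge(u)$, $\minmerge(v)$, $\wmax(u)$, $\wmax(v)$, that the first three are unaffected by merges on the other side of the partition, and that the last two can only decrease by Lemma~\ref{lem:wmaxdecreases}. Your version merely spells out in more detail the disjointness observation that the paper leaves implicit.
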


\begin{proof}
Whether a merge of $u$ and $v$ is good depends only on $w(uv)$, $\minmerge(u)$, $\minmerge(v)$, $\wmax(u)$, $\wmax(v)$. Among these, only $\wmax(u)$ and $\wmax(v)$ can change due to merges not involving vertices of $D$.
However, by Lemma~\ref{lem:wmaxdecreases}, these values only decrease as a result of merges.
By Definition~\ref{def:good}, the lemma follows.
\end{proof}

This lets us show the correctness of the \terahac{} algorithm for any partitioning method for the case when $t = 0$.

\begin{restatable}{lemma}{lemint}\label{lem:interleave}
Let $D$ be a dendrogram computed by Algorithm \ref{alg:terahac} for $t = 0$ and any partitioning method used in line~\ref{l:partition}.
Then, $D$ is a $(1+\epsilon)$-approximate dendrogram.
\end{restatable}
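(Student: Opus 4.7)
The plan is to exhibit a single sequence of $(1+\epsilon)$-good merges, applied starting from $G$, that realizes the entire dendrogram $D$; once this is done, Lemma~\ref{lem:apxfull} immediately yields that $D$ is $(1+\epsilon)$-approximate. I would proceed by induction on the round index $r$ of the outer loop of Algorithm~\ref{alg:terahac}. Let $H_r$ denote the state of the global graph at the start of round $r$, so $H_1 = G$. The inductive claim is that all merges performed during rounds $1, \ldots, r-1$ can be arranged into a single ordering that is $(1+\epsilon)$-good when applied starting from $G$ and produces $H_r$. The base case $r = 1$ is vacuous.

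For the inductive step, consider round $r$. The algorithm partitions $H_r$ into disjoint clusters $C'_1, \ldots, C'_k$ and invokes \subhac{} on each $G^{C'_j}$, producing a merge sequence $S_j$ that is $(1+\epsilon)$-good with respect to $G^{C'_j}$ and only merges active vertices. By the preceding subgraph-to-global lemma, each $S_j$ is then $(1+\epsilon)$-good with respect to $H_r$. I would combine the $k$ sequences into a single ordering by repeated application of the interleaving lemma stated just above. Having folded $S_1, \ldots, S_{j-1}$ into a $(1+\epsilon)$-good sequence $T_{j-1}$ whose merges involve only vertices of $C'_1 \cup \cdots \cup C'_{j-1}$ (and vertices produced by merging them), one sets $C := C'_j$ in the interleaving lemma: $S_j$ lives inside $C'_j$, $T_{j-1}$ lives inside the complement, so any interleaving $T_j$ is $(1+\epsilon)$-good with respect to $H_r$. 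Concatenating $T_k$ with the sequence obtained inductively for earlier rounds produces a $(1+\epsilon)$-good sequence leading to $H_{r+1}$, closing the induction.

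Since $t = 0$ makes line~\ref{l:optvertices} a no-op, the outer loop terminates only once the global graph has no edges, at which point every merge comprising $D$ appears in the accumulated sequence. Lemma~\ref{lem:apxfull} then yields that $D$ is $(1+\epsilon)$-approximate.

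The main subtlety is purely bookkeeping: the interleaving lemma is stated for only two sub-sequences, so one must iterate it to handle the $k$ parallel \subhac{} outputs produced in a single round. What makes this iteration go through is the disjointness of the partition $C'_1, \ldots, C'_k$ together with the fact that \subhac{} only merges vertices of its own active set, which ensures that at each fold the accumulated $T_{j-1}$ and the next $S_j$ operate on disjoint vertex pools, exactly matching the hypothesis required by the interleaving lemma.
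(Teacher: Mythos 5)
Your proposal is correct and follows essentially the same route as the paper's proof: convert each \subhac{} output to a sequence of $(1+\epsilon)$-good merges on the global graph via the subgraph-to-global lemma, fold the $k$ per-cluster sequences together by iterating the two-sequence interleaving lemma over the disjoint partition (the paper takes $C = \bigcup_{j\le i} C_j$ where you take $C = C'_j$, a purely cosmetic difference), and conclude with Lemma~\ref{lem:apxfull}. Your version is, if anything, slightly more explicit than the paper's about the outer induction over rounds and the concatenation across rounds.
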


\revision{
\begin{proof}
By Lemma~\ref{lem:apxfull} it suffices to show that $D$ is obtained by performing a sequence of $(1+\epsilon)$-good merges.
Consider a single iteration of Algorithm~\ref{alg:terahac}.
Denote by $G$ the current graph in the beginning of the iteration.
The iteration begins by computing a clustering $C_1, \ldots, C_k$ of $G$.
Then, it runs \subhac{} separately in each cluster.
Hence, for each cluster $C_i$ we obtain a sequence of $(1+\epsilon)$-good merges with respect to $G$, which we denote by $M_i$.
The resulting dendrogram can be updated with each of these sequences of merges in parallel, since each of these sequences affects a disjoint set of vertices of $G$.

We now use Lemma~\ref{lem:interleave} to show that any interleave of the sequences $M_1, \ldots, M_k$ is a sequence of $(1+\epsilon)$-good merges.
This can be done by using a simple induction.
For $k=1$ the claim follows immediately, as the only sequence that can be obtained by interleaving sequences from the set $\{M_1\}$ is $M_1$ itself.
Now assume that for $i \geq 1$ we know that any interleave of sequences $M_1, \ldots, M_i$ is consists of $(1+\epsilon)$-good merges.
Consider any interleave $\tilde{M}$ of $M_1, \ldots, M_{i+1}$.
Clearly $\tilde{M}$ can be obtained by interleaving some interleave of $\tilde{M}_i$ of $M_1, \ldots, M_i$  with $M_{i+1}$.
We now use Lemma~\ref{lem:interleave} applied to sequences $\tilde{M}_i$ and $M_{i+1}$.
The former consists of $(1+\epsilon)$-good merges by the inductive assumption and the second one by the correctness of \subhac{}.
To apply the lemma we set $C = \bigcup_{j=1}^i C_i$.
Hence, we get that $\tilde{M}$ consists of $(1+\epsilon)$-good merges.
\end{proof}

}

\subsection{Flattening the Dendrogram}\label{sec:flattening}
\begin{algorithm}[!t]\caption{\flatten{}($D, t$)}\label{alg:flatten}
    \begin{algorithmic}[1]
    \Require{A dendrogram $D$ and a linkage similarity threshold $t$}
    \Ensure{A flat clustering}
    \State $\mathcal{C} := \{\}$ 
    \For{$d \in $ set of nodes of $D$}
       \If{the linkage similarity of $d \geq t$ and each ancestor of
       \indent $d$ has linkage similarity $< t$}
         \State Add the cluster corresponding to $d$ to $\mathcal{C}$
       \EndIf
    \EndFor
    \State \Return $\mathcal{C}$
    \end{algorithmic}
\end{algorithm}

While \terahac{} computes a dendrogram, numerous clustering applications require the algorithm to compute a \emph{flat} clustering, that is a partition of the input graph vertices.
A single dendrogram induces multiple flat clusterings of varying scales.
We now show how to \emph{flatten} the dendrogram, i.e., compute a canonical collection of flat clusterings consistent with it.

Let us assume that each internal node of the dendrogram is assigned the linkage similarity that was used to compute the corresponding cluster.
Let us also assume that the linkage similarity of each leaf node is infinite.
The algorithm that we use for flattening a dendrogram is given as Algorithm~\ref{alg:flatten}.
Given a threshold $t$, it picks each dendrogram node (cluster) which satisfies two conditions: (i) the linkage similarity of the node is at least $t$, and (ii) the linkage similarities of all ancestor nodes of $t$ are below $t$.
It is easy to see that this way we compute a flat clustering, that is each node is in exactly one cluster.

In a dendrogram returned by an exact HAC algorithm (i.e., a $1$-approximate one) the linkage similarites of the nodes along each leaf-to-root path form a nonincreasing sequence.
Hence, flattening a dendrogram using a threshold $t$ is equivalent to performing exactly the subset of merges described by the dendrogram whose linkage similarities are at least $t$.

This is not necessarily the case for $(1+\epsilon)$-approximate dendrograms.
However, we can still show that the linkage similarities along each leaf-to-root path are almost monotone, as shown in the following lemma.

\begin{lemma}\label{lem:flatten}
Let $\epsilon \geq 0$ and $D$ be a dendrogram obtained by performing $(1+\epsilon)$-good merges.
Assume we use Algorithm~\ref{alg:flatten} to flatten $D$ using parameter $t$.
Then, for each returned cluster the minimum linkage similarity of any merge used to create it is at least $t / (1+\epsilon)$.
\end{lemma}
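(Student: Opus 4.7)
\medskip
\noindent\textbf{Proof plan.} The plan is to translate the selection condition of \flatten{} directly into a lower bound on $\minmerge(\cdot)$ for the root of each returned cluster, and then appeal to the definition of $\minmerge$ to conclude.

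First, I would fix a cluster returned by Algorithm~\ref{alg:flatten} and let $d$ be its corresponding dendrogram node, with children $u$ and $v$ so that $d$ represents the merge of $u$ and $v$. By the selection rule of \flatten{}, the linkage similarity of $d$ is at least $t$, i.e.\ $w(uv) \geq t$. Since this merge was performed by \terahac{}, Lemma~\ref{lem:apxfull}'s proof hypothesis applies: the merge is $(1+\epsilon)$-good, so
\[
\frac{\max(\wmax(u), \wmax(v))}{\min(\minmerge(u), \minmerge(v), w(uv))} \leq 1+\epsilon.
\]

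Next, I would observe that $\max(\wmax(u),\wmax(v)) \geq w(uv) \geq t$, because $uv$ is an edge incident to both $u$ and $v$ at the moment of the merge. Combining this with the good-merge inequality yields
\[
\min(\minmerge(u), \minmerge(v), w(uv)) \;\geq\; \frac{\max(\wmax(u),\wmax(v))}{1+\epsilon} \;\geq\; \frac{t}{1+\epsilon}.
\]
By the recursive definition of $\minmerge$ from Definition~\ref{def:good}, the left-hand side is exactly $\minmerge(u \cup v) = \minmerge(d)$, and $\minmerge(d)$ is by definition the minimum linkage similarity over all merges used to build the cluster represented by $d$. Hence every such merge has linkage similarity at least $t/(1+\epsilon)$, which is the desired conclusion.

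The argument is short because the heavy lifting is already done by the definition of a good merge, which was engineered precisely so that $\minmerge(\cdot)$ cannot fall far below $\wmax(\cdot)$. The only substantive step is noticing that $\max(\wmax(u),\wmax(v)) \geq w(uv) \geq t$, which activates the inequality in the direction we want; I do not anticipate any real obstacle beyond being careful that the good-merge condition we use is the one satisfied at the time the merge creating $d$ was performed, which is guaranteed by the hypothesis that $D$ is produced by $(1+\epsilon)$-good merges.
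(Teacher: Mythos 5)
Your proposal is correct and follows essentially the same argument as the paper's proof: apply the $(1+\epsilon)$-good condition at the merge creating the returned cluster, use $\max(\wmax(u),\wmax(v)) \geq w(uv) \geq t$, and identify the denominator with $\minmerge(u\cup v)$. The only cosmetic difference is that the paper explicitly dispenses with the trivial size-one (leaf) case first, which your write-up implicitly skips by assuming the node has two children.
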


\begin{proof}
Fix a cluster returned by Algorithm~\ref{alg:flatten}.
If the cluster has size 1, the lemma is trivial.
Otherwise, the cluster is obtained by merging two nodes $x$ and $y$.
Since the merge is $(1+\epsilon)$-good, at the point when the merge was performed
\[
\frac{\max(\wmax(x), \wmax(y))}{\min(\minmerge(x), \minmerge(y), w(xy))} \leq 1+\epsilon
\]
where $\min(\minmerge(x), \minmerge(y), w(xy)) = \minmerge(x \cup y)$ is exactly the minimum merge similarity of any merge used to build the cluster.
Moreover, Algorithm~\ref{alg:flatten} ensures $w(xy) \geq t$.
We have
\begin{align*}
\minmerge(x \cup y) & \geq \max(\wmax(x), \wmax(y)) / (1+\epsilon)\\ 
& \geq w(xy) / (1+\epsilon) \geq t/(1+\epsilon)
\end{align*}
\end{proof}

\myparagraph{Vertex Pruning Optimization}
We now analyze the vertex pruning optimization (line \ref{l:optvertices} of Algorithm~\ref{alg:terahac}), which, after each round removes vertices whose highest incident edge weight is below $t/(1+\epsilon)$.
We show that vertex pruning with parameter $t'$ does not affect the final result, as long as we flatten the resulting dendrogram using a threshold $t \geq t'$.
At the same time, as we show in our empirical evaluation, vertex pruning brings significant performance benefits.

\begin{restatable}{lemma}{lemflat}
Let $G=(V,E,w)$ be a graph and $\epsilon \geq 0$.
Consider a run of Algorithm~\ref{alg:terahac} with a threshold parameter $t'$ followed by flattening the resulting dendrogram using a threshold $t$.
Then, the output of the algorithm is the same for any $t' \in [0, t]$.
\end{restatable}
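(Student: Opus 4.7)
The plan is to couple a run of Algorithm~\ref{alg:terahac} with threshold $t' \in (0, t]$ to the corresponding run with $t'' = 0$ (no pruning) and argue that flattening their dendrograms with threshold $t$ yields the same clustering.

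The first ingredient is a bound on the weight of any merge involving a pruned vertex. Suppose $v$ is pruned because $\wmax(v) < t'/(1+\epsilon)$. The first $(1+\epsilon)$-good merge of $v$ with some $u_0$ has weight $w(v u_0) \leq \wmax(v) < t'/(1+\epsilon)$, so the resulting cluster satisfies $\minmerge(v \cup u_0) \leq w(v u_0) < t'/(1+\epsilon)$. For any later $(1+\epsilon)$-good merge of a descendant $\tilde{v}$ of $v$ with some neighbor $z$, we have $\minmerge(\tilde{v}) \leq \minmerge(v \cup u_0) < t'/(1+\epsilon)$, and the goodness condition yields
\[
\max(\wmax(\tilde{v}), \wmax(z)) \leq (1+\epsilon)\,\minmerge(\tilde{v}) < t',
\]
hence $w(z, \tilde{v}) \leq \min(\wmax(\tilde{v}), \wmax(z)) < t' \leq t$.

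The second ingredient couples the two runs on surviving vertices. Every edge incident to a pruned vertex has weight $< t'/(1+\epsilon)$, whereas every surviving vertex $u$ satisfies $\wmax(u) \geq t'/(1+\epsilon)$. Hence, removing a pruned vertex from $G$ does not change $\wmax(u)$ for any surviving $u$, and it leaves $w$, $\minmerge$ and $\wmax$ of all surviving vertices --- the quantities needed to check $(1+\epsilon)$-goodness among them --- unchanged. By choosing partitionings that agree on surviving vertices and coupling the internal tie-breaking of \subhac{}, both runs can be made to perform exactly the same merges on surviving vertices, while the $t''=0$ run additionally performs merges involving pruned vertices, each of which has similarity $< t$ by the bound above.

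Finally, every internal node of the $t'$-dendrogram appears with the same subtree in the $t''=0$-dendrogram; the only additional nodes of the latter sit strictly above some pruned vertex and have similarity $< t$. These extra nodes fail condition~(i) of Algorithm~\ref{alg:flatten} (so they are never returned), and, having similarity $< t$, they do not invalidate condition~(ii) for any descendant. The flattenings therefore return exactly the same set of clusters. The main obstacle is the coupling itself: one must verify that pruning a low-$\wmax$ vertex is invisible to the goodness status of every merge among the remaining vertices; once this invariance is established, the conclusion follows by routine bookkeeping about Algorithm~\ref{alg:flatten}.
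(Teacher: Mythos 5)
Your proof is correct and follows essentially the same route as the paper's: both arguments rest on the two observations that every merge in the ancestry of a pruned vertex has similarity below the flattening threshold (via the goodness condition together with the monotonicity of $\minmerge$ and $\wmax$), and that pruning leaves the weights, $\minmerge$, and $\wmax$ values of all surviving vertices untouched, so the good merges among them are unaffected. The only cosmetic difference is that the paper compresses the final dendrogram bookkeeping by invoking Lemma~\ref{lem:flatten} (every returned cluster is built solely from merges of similarity at least $t/(1+\epsilon)$) rather than checking conditions (i) and (ii) of Algorithm~\ref{alg:flatten} directly as you do, and its write-up is considerably terser than yours.
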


\definecolor{light-gray}{gray}{0.95}
\definecolor{dark-gray}{gray}{0.25}

\begin{figure}[!ht]
\begin{minted}[fontsize=\footnotesize,linenos=true,numbers=left,xleftmargin=3em,escapeinside=||]{cpp}

KVTable<VertexId, DendrogramNode> TeraHAC(
    KVTable<VertexId, Vertex> graph, double eps,
    double t) {

// Initialize vertex metadata: set the cluster size to
// 1, and the min_merge_similarity to infinity.
graph = Initialize(graph);|\label{l:initialize}|

// Number of edges of weight at least t.
int64 num_edges = NumberOfHeavyEdges(graph, t);

// Stores the dendrogram nodes computed in each round
// of the while loop below.
std::vector<KVTable<VertexId, DendrogramNode>>
  dendrogram_nodes;

while(num_edges > 0) {
  // Partition the graph into clusters of <= 10M edges.
  // Computes a cluster id of each vertex.
  KVTable<VertexId, ClusterId> cluster_ids = 
    AffinityClustering(graph);
        
  // Join graph and cluster_ids on matching VertexId,
  // and key vertices by the cluster id.
  KVTable<ClusterId,
    pair<VertexId, Vertex>> clusters =
        KeyByClusterId(graph, cluster_ids);
    
  KVTable<VertexId, Dendrogram> new_nodes;
  KVTable<VertexId, ClusterId> sh_cluster_ids;
  KVTable<ClusterId, double> min_merge;
    
  // Run SubgraphHAC within each cluster.
  (new_nodes, sh_cluster_ids, min_merge) = 
    SubgraphHac(clusters.GroupByKey(), eps);
  dendrogram_nodes.push_back(new_nodes);
    
  // Vertex pruning: remove nodes, whose max incident
  // edge weight is below t/(1+eps).
  graph = graph.Apply(Prune(graph, t/(1+eps)))
     // Contract each cluster to a node.
     .Apply(Contract(graph, sh_cluster_ids))
     // Join the graph with min_merge on
     //matching keys.
     .Apply(SetMinMerge(min_merge))
     // Remove each vertex with no edges.
     .Apply(RemoveIsolatedVertices());
    
  num_edges = NumberOfHeavyEdges(graph, t);
}

return MergeDendrograms(dendrogram_nodes);
}
\end{minted}
\vspace{-1em}
\caption{\small \terahac{} pseudocode}
\label{fig:pseudocode}
\end{figure}

We acknowledge that the fact that \terahac{} requires the pruning optimization (which limits the output to the "bottom" part of the dendrogram) to achieve the best performance is a limitation of the algorithm.
At the same time, this limitation is also present in the existing methods providing similar scalability, that is affinity clustering~\cite{NIPS2017_2e1b24a6} and SCC~\cite{monath2021scalable}.
We note that on the datasets we studied in Section~\ref{sec:experiments} using this optimization makes very little impact on the clustering quality.

\section{\terahac{} Implementation}
\shepchange{We have implemented \terahac{} in Flume-C++~\cite{akidau2018streaming}, 
 an optimized distributed data-flow system similar to Apache Beam~\cite{beam}.}
A C++-based pseudocode for \terahac{} is given in \cref{fig:pseudocode}.
A \texttt{KVTable<K, V>} is a distributed collection of key-value pairs, where the keys have type \texttt{K} and values have type \texttt{V}~\cite{35650}.
In particular, the input graph to \terahac{} is a \texttt{KVTable<VertexId, Vertex>}.
Here, \texttt{VertexId} is a type used to represent unique ids of vertices and \texttt{Vertex} represents a vertex and its metadata.
Specifically, each vertex consists of 
\begin{itemize}
    \item the list of its incident edges, where each edge is represented as a pair of a \texttt{VertexId} specifying the edge endpoint and a floating-point number giving the weight,
    \item the size of the cluster represented by the vertex (initialized to $1$ on Line~\ref{l:initialize}),
    \item the minimum merge similarity used to create the cluster represented by the vertex (initialized to $\infty$ on Line~\ref{l:initialize}).
\end{itemize}

The result of the \terahac{} algorithm is a dendrogram represented as a \texttt{KVTable<VertexId, DendrogramNode>}. Each node is represented by
\texttt{DendrogramNode}, containing two fields: the id of the parent node in the dendrogram, and the linkage similarity of the merge that created the parent node.
Both fields are unset if the node represents a root in the dendrogram.

The algorithm runs in a loop, until the graph has no edges of weight at least $t$.
Each round first partitions the graph (see Line~\ref{l:partition} of Algorithm~\ref{alg:terahac}).
To this end we use the affinity clustering algorithm of Bateni et al.~\cite{NIPS2017_2e1b24a6}, which works as follows.
First, each vertex marks its highest weight incident edge.
Then, we find connected components spanned by the marked edges, which define the clusters.
Since each cluster computed by affinity clustering is later processed on a single machine, we use a size-constrained version of the algorithm, which ensures that each cluster contains at most 10 million edges~\cite{epasto2021clustering}.

The choice of affinity clustering is motivated by the following heuristic argument.
Intuitively, the goal is to use partitioning in which many edges inducing good merges (\cref{def:good}) have both endpoints in the same cluster.
From a vertex point of view its highest weight incident edge is most likely to induce a good merge (if we don't know its neighbors' metadata), and each such edge is an intra-cluster one in affinity clustering (unless the cluster is split to ensure the size constraint).

\myparagraph{SubgraphHAC}
Once we have found affinity clusters, we aggregate the vertices of each cluster on a single machine and run \subhac{} on each cluster.
This returns three results.
First, we obtain the set of new dendrogram nodes corresponding to the merges made by the \subhac{} algorithm (\texttt{new\_nodes}).
\revision{Second, \subhac{} computes the clustering (\texttt{sh\_cluster\_ids}) induced by performing all $(1+\epsilon)$-good merges made in this round.}
This clustering  maps the vertex ids of the graph in this round to cluster ids.
Third, for each cluster computed by \subhac{}, it computes the minimum merge similarity used to obtain that cluster.
For a fixed cluster returned by \subhac{}, the cluster is obtained by merging together some vertices, which correspond to clusters formed in the previous rounds of \terahac{}.
Hence, to compute the corresponding minimum merge similarity, we take the minimum of the similarities of the merges used to create the cluster in this  particular \subhac{} call, as well as the minimum merge similarities stored in the vertices in the cluster.
\subhac{} processes each cluster on a single machine, so all its three return values can be easily computed based on the vertices of the provided cluster, and the merges which were made.

\begin{table}\footnotesize
\centering
\centering
\caption{\small Graph inputs, including the number of vertices $(n)$, number of directed edges $(m)$, and the average degree $(m/n)$.
We show the statistics of the largest rMAT graph that we use.}
\vspace{-1em}
\begin{tabular}[!t]{lrrr}   
\toprule
{Graph Dataset} & Num. Vertices & Num. Edges & Avg. Deg.\\
\midrule
{\emph{com-Orkut    } {\bf(OK)} }     & 3,072,627       & 234,370,166     & 76.2  \\
{\emph{Twitter      } {\bf(TW)} }     & 41,652,231      & 2,405,026,092   & 57.7  \\
{\emph{Friendster   } {\bf(FS)} }     & 65,608,366      & 3,612,134,270   & 55.0 \\
{\emph{rMAT-28} {\bf(RM28)} }         & 268,435,456     & 25,814,014,562  & 96.3 \\
{\emph{ClueWeb      } {\bf(CW)} }     & 978,408,098     & 74,744,358,622  & 76.3 \\
{\emph{Hyperlink} {\bf(HL)} }         & 3,563,602,789   & 225,840,663,232 & 63.3 \\
{\emph{Web-Query} {\bf(WQ)} }         & 31,764,801,992   & 8,670,265,361,938 & 272.9
\end{tabular}
\label{table:sizes}
\end{table}
\begin{figure*}[!t]
\hspace{-1em}
\begin{minipage}{.71\columnwidth}
    \includegraphics[width=\columnwidth]{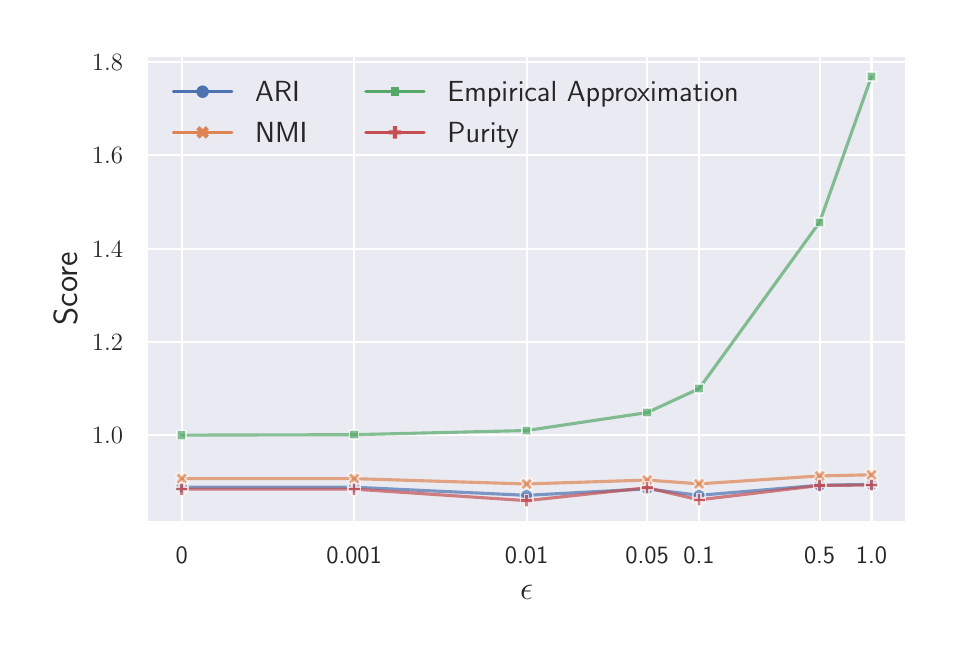}
  \vspace{-2em}
\end{minipage}
\hspace{-1em}
\begin{minipage}{0.71\columnwidth}
  \includegraphics[width=\columnwidth]{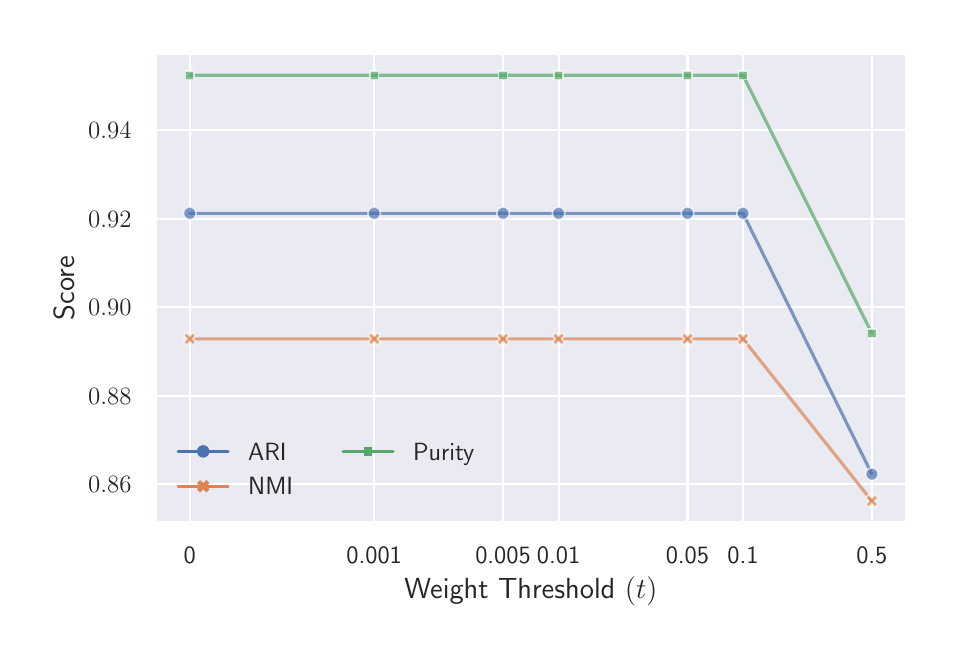}
  \vspace{-2em}
\end{minipage}
\begin{minipage}{.65\columnwidth}
\includegraphics[width=\columnwidth]{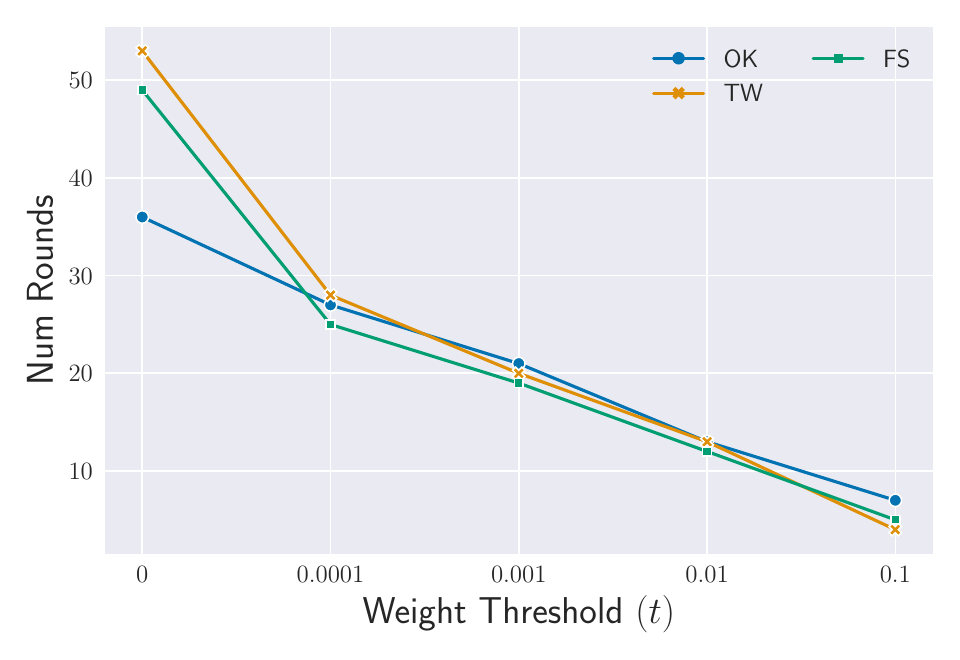}
  \vspace{-2em}
\end{minipage}\\
\begin{minipage}[t]{.6\columnwidth}
\vspace{-1em}
    \caption{
  \small
\revision{Quality of \terahac{}$_{\epsilon}^{t=0}$ run with varying $\epsilon$ on the 
Digits dataset graph constructed using $k=25$
(quality measures defined in Section~\ref{sec:quality}).}
\label{fig:digits-scores-eps-tune} 
  }
\end{minipage}\hfill
\begin{minipage}[t]{.75\columnwidth}
\vspace{-1em}
  \caption{
      \small
      Quality of \terahac{}$_{\epsilon=0.1}^{t}$ run
      with varying thresholds $t$ on the
Iris dataset graph constructed using $k=25$.
\label{fig:iris-scores-thresh-tune}
}
\end{minipage}\hfill
\begin{minipage}[t]{.65\columnwidth}
\vspace{-1em}
    \caption{
    \small
Num. rounds of \terahac{}$_{\epsilon=0.1}^{t}$ 
run with varying weight threshold $(t)$ on OK, TW, and FS.
Higher thresholds use fewer rounds.
\label{fig:threshold_vs_rounds}
  }
\end{minipage}
\end{figure*}

\myparagraph{Vertex Pruning and Contraction}
To complete the round, we first apply vertex pruning, and update the graph based on the result of \subhac{}.
The key operation is to apply all the merges performed by  \subhac{} in this round.
We note that this is equivalent to contracting each cluster computed by \subhac{} to a single vertex.
This is what the \texttt{Contract} function does.
Assume that the vertex to cluster id mapping is given by a function $C$.
\texttt{Contract} first remaps each vertex id $x$ to $C(x)$, both in the keys of the graph \texttt{KVTable} and in the adjacency lists stored in \texttt{Vertex} objects.
This is achieved by two joins of \texttt{graph} with \texttt{cluster\_ids}.
Once all ids are remapped, vertices from the same cluster share the same \texttt{KVTable} key, and so we group them by key and merge together to obtain the final vertices.
Hence, the vertex ids in the new graph are the cluster ids computed by \subhac{}\footnote{This means that we use the same type to represent vertex and cluster ids. However, we decided to make the distinction to improve readability.}.

Next, we update the minimum merge similarity metadata, by joining the graph with the respective metadata computed by \subhac{}.
We finish updating the graph by applying an optimization: each vertex with no outgoing edges is removed, as it would clearly not participate in any merge anymore.

Our implementation is optimized to minimize the number of joins/shuffle steps.
In particular, while a standalone implementation of \texttt{Prune} and \texttt{SetMinMerge} would require a join/shuffle, in the implementation we actually extended the joins performed in \texttt{Contract} to perform these operations.

\revision{
\myparagraph{Shared-Memory Implementation}
To enable reproducibility, we also provide a shared-memory implementation of \terahac{}\footnote{\url{https://github.com/google/graph-mining/tree/main/in_memory/clustering/hac}}
Our shared-memory implementation of \terahac{} is written in the same parallel framework as \parhac{}, using the Parlay libraries for parallel primitives~\cite{blellochparlay} and the CPAM system for representing dynamic graphs~\cite{cpam}.
We provide (1) a faithful implementation of the size-constrained affinity algorithm~\cite{epasto2021clustering} used in our distributed implementation, (2) an efficient implementation of \subhac{} that runs in $O(m\log^2 n)$ time, and (3) an efficient implementation of weighted graph contraction to maintain the edge weights of the contracted graph after each round of partitioning and \subhac{}.

Although optimizing the running time of approximate HAC in the shared-memory setting was not the main goal of this paper, we found that our implementation of \terahac{} actually achieves consistent speedups over \parhac{}, even after optimizing \parhac{} to apply the vertex pruning optimizations used in \terahac{} (see Section~\ref{sec:scalability}).
}

\section{Empirical Evaluation}\label{sec:experiments}

\myparagraph{Graph Data}
We list information about graphs used in our experiments in
Table~\ref{table:sizes}. 
\defn{com-DBLP (DB)} is a co-authorship network sourced from the
DBLP computer science bibliography. 
\footnote{Source: \url{https://snap.stanford.edu/data/com-DBLP.html}.}
\defn{YouTube (YT)} is a social-network formed by 
user-defined groups on the YouTube site.
\footnote{Source: \url{https://snap.stanford.edu/data/com-Youtube.html}.}
\defn{LiveJournal (LJ)} is a directed graph of the social network.
\footnote{Source: \url{https://snap.stanford.edu/data/soc-LiveJournal1.html}.}
\defn{com-Orkut (OK)} is an undirected
graph of the Orkut social network.
\defn{Friendster (FS)} is an
undirected graph describing friendships from a gaming network.
Both graphs are sourced from the SNAP dataset~\cite{leskovec2014snap}.\footnote{Source: \url{https://snap.stanford.edu/data/}.}
\defn{Twitter (TW)} is a directed graph of the Twitter network, where 
edges represent the follower
relationship~\cite{kwak2010twitter}.
\footnote{Source: \url{http://law.di.unimi.it/webdata/twitter-2010/}.}
\defn{ClueWeb (CW)} is a web graph from the Lemur project at CMU~\cite{boldi2004webgraph}.
\footnote{Source: \url{https://law.di.unimi.it/webdata/clueweb12/}.}
\defn{Hyperlink (HL)} is a hyperlink graph obtained from the
WebDataCommons dataset where nodes represent web pages~\cite{meusel15hyperlink}.
\footnote{Source: \url{http://webdatacommons.org/hyperlinkgraph/}.}
The Web-Query graph consists of vertices representing 31 billion web queries, with
8.6 trillion edge weights corresponding to similarities between the queries (see Section~\ref{sec:webquery} for details).
Lastly, we run a set of scaling experiments on synthetic rMAT graphs~\cite{chakrabarti2004r},
which are constructed using the parameters $a = 0.6, b = c = 0.15, d = 0.1$.
The rMAT parameters were chosen to produce real-world-like graphs, following~\cite{wassington2022bias}.
rMAT-$X$ denotes an RMAT graph with $2^X$ nodes and $50 \cdot 2^X$ undirected edges (before removing duplicate edges).

We note that the large real-world graphs that we study are 
not weighted, and so we set the similarity of an edge $(u,v)$
to $\frac{1}{\log(\degree{u} + \degree{v})}$. We use this weighting
scheme since it promotes merging low-degree vertices over high-degree
vertices (unless there are sufficiently many edges); the same scheme
was also used by recent work on the topic~\cite{dhulipala2021hierarchical, parhac}.

\myparagraph{Building Similarity Graphs from Pointsets}
Our quality experiments run on graphs built from a pointset by 
computing the approximate nearest neighbors (ANN) of
each point, and converting the distances to similarities.
We convert distances to similarities
using the formula $\mathsf{sim}(u,v) = \frac{1}{1 + \mathsf{dist}(u,v)}$.
We then scale the similarities by dividing each similarity by the maximum similarity to ensure that the maximum similarity is $1$.
The same method of similarity graph construction was also used
in a recent paper on shared-memory parallel HAC~\cite{parhac}.

\myparagraph{Machine Configuration}
\shepchange{Our experiments are performed in a shared datacenter managed using Borg~\cite{49065}.}
Hence, our jobs compete for resources with
other jobs, and run alongside other jobs on the same machines.
We use a maximum of 100 machines to solve all problems
and a maximum of 800 hyper-threads across all machines, 
unless mentioned otherwise.
\shepchange{For details of the network topology, see~\cite{poutievski2022jupiter}.}
We ran each experiment three times, and find that the difference in running times across different trials was not significant (within 10\%), unless specified otherwise.

\myparagraph{Algorithms}
We compare \terahac{} with several HAC baselines.
We use \terahac$_{\epsilon=x}^{t=y}$ to denote \terahac{} run
with parameters $\epsilon=x$ and weight threshold $t=y$. When not
specified, \terahac{} denotes running the algorithm with $\epsilon=0.1$
and $t=0.01$, which are default parameter settings that we empirically 
find work well across a wide range of datasets.
Importantly, setting $\epsilon=0$ yields the exact HAC algorithm.
We also compare our algorithm with an optimized distributed 
implementation (written in the same framework) of the recently developed
SCC algorithm~\cite{monath2021scalable}, which we refer to as $\mathsf{SCC}$. 
The SCC algorithm is parameterized by $r$, the number of rounds used (denoted \scc-r). 
Increasing $r$ was observed to increase the quality of the algorithm~\cite{monath2021scalable}.
We evaluate a lower-quality (\scc-5), medium-quality (\scc-25) and
high-quality (\scc-100) setting.
We also (indirectly) compare our algorithm with \rac{} (e.g., in Fig.~\ref{fig:terahac_vs_parhac_rac_rounds}, discussed
in Section~\ref{sec:intro}).
Although we did not perform a running time comparison between \terahac{} and \rac{} in this paper, \terahac{} using $\epsilon=0$ should be strictly superior to \rac{} in practice and can be viewed essentially as an optimized version of the \rac{} algorithm. This is because for $\epsilon = 0$ \terahac{} performs $1$-good merges, which are exactly the merges that the RAC algorithm can perform (see Observation~\ref{obs:eqrac}).
However, contrary to RAC, \terahac{} may perform multiple merges involving the same vertex in a single round.

\begin{table*}\small
\centering
\caption{\small Adjusted Rand-Index (ARI), Normalized Mutual
Information (NMI), Dendrogram Purity, and Dasgupta cost
of \terahac{} (columns 2--5) versus \scc-5, \scc-25, \scc-100 (columns 6--8), 
the HAC implementation of average linkage from SciPy (column 9), \revision{and the DBSCAN implementation from SciPy (column 10)}.
Note that \terahac{}$_{\epsilon=0}^{t=0}$ is the same as the \rac{} algorithm.
\revision{For DBSCAN, we perform a grid search for $\epsilon \in [0.01, 10000]$ and $\mathsf{minpts} \in [2, 128]$ and use the best scores that we obtain.}
All graph-based implementations are run
over the similarity graph constructed from an approximate $k$-NN graph with $k = 25$. 
The Dasgupta cost is computed over the complete similarity graph generated from
the all-pairs distance graph, and thus takes into account all pairwise similarities.
In the case of Dasgupta cost, lower values are better.
For the remaining measures, higher values are better.
We display the best quality score for each graph in green and underlined.
\revision{We show the relative improvement of each score over the score for \terahac{}$_{\epsilon=0}^{t=0}$ in parenthesis after each score.}}
\vspace{-1em}
\smallskip{}
\setlength{\tabcolsep}{2pt}
\begin{tabular}{@{}cl cccc|ccc | cc }
\toprule
& {Dataset} & \terahac{}$_{\epsilon=0}^{t=0}$ & \terahac{}$_{\epsilon=0}^{t=0.01}$ & \terahac{}$_{\epsilon=0.1}^{t=0}$ & \terahac{}$_{\epsilon=0.1}^{t=0.01}$ & \scc-5 & \scc-25 & \scc-100 & Sci-Avg & \revision{DBSCAN}\\
\midrule
\multirow{5}{*}{\STAB{\rotatebox[origin=c]{90}{{ARI}}}}

& \emph{iris}   &  \best{0.92 (1x)} & \best{0.92 (1x)} & \best{0.92 (1x)} & \best{0.92 (1x)}    & 0.74 (.80x)  & 0.79 (.86x) & 0.87 (.94x)  & 0.75 (.82x) &  0.56 (.62x) \\
& \emph{wine}   &  0.37 (1x) & 0.37 (1x) & 0.37 (1x) & 0.37 (1x)   & 0.30 (.81x) & 0.22 (.61x) & 0.24 (.64x) & 0.35 (.94x) & \best{0.39 (1.05x)} \\
& \emph{digits} &  \best{0.88 (1x)} & \best{0.88 (1x)} & 0.87 (.99x) & 0.85 (.97x)    & 0.87 (.99x) & 0.83 (0.94x) & 0.82 (.93x)   & 0.69 (.78x) & 0.31 (.36x)  \\
& \emph{faces}  &  \best{0.57 (1x)} & \best{0.57 (1x)} & 0.56 (.98x) & 0.56 (.98x)    & 0.39 (.68x) & 0.51 (.89x) & 0.55  (.96x) & 0.52 (.92x) & 0.098 (.17x) \\

\midrule
\multirow{5}{*}{\STAB{\rotatebox[origin=c]{90}{{NMI}}}}
& \emph{iris}   & \best{0.89 (1x)} & \best{0.89 (1x)} & \best{0.89 (1x)} & \best{0.89 (1x)} & 0.75 (.84x) & 0.77  (.86x) &  0.84 (.94x)   & 0.80 (.89x) & 0.73 (.82x)  \\
& \emph{wine}   & 0.42 (1x) & 0.41 (1x) & 0.42 (1x) & 0.42 (1x)   & 0.37 (.88x) & 0.38 (.90x) &  0.38 (.90x)   & 0.42 (1x) & \best{0.45 (1.07x)} \\
& \emph{digits} & \best{0.90 (1x)} & \best{0.90(1x)}  & 0.89 (.98x) & 0.89  (.98x)  & 0.90 (1x) & 0.87 (.96x) &  0.87 (.96x)  & 0.83 (.92x) & 0.66 (.73x) \\
& \emph{faces}  & \best{0.86 (1x)} & \best{0.86 (1x)} & \best{0.86 (1x)} & \best{0.86 (1x)}  & 0.83 & 0.85 &  \best{0.86 (1x)}   & \best{0.86 (1x)} & 0.68 \\

\midrule
\multirow{5}{*}{\STAB{\rotatebox[origin=c]{90}{{Purity}}}}
& \emph{iris}   &  0.94 (1x) & 0.94 (1x) & \best{0.95 (1.01x)} & \best{0.95 (1.01x)}   & -- & -- & --   & 0.86 (.90x) & --\\
& \emph{wine}   &  0.62 (1x) & 0.60 (.96x) & \best{0.62 (1x)} & \best{0.62 (1x)}   & -- & -- & --   & 0.62 (1x) & -- \\
& \emph{digits} &  \best{0.88 (1x)} & 0.87 (.98x)  & 0.87  (.98x) & 0.85 (.96x)   & -- & -- & --   & 0.75 (.85x) & -- \\
& \emph{faces}  &  0.61 (1x) & 0.61 (1x) & 0.58 (.95x) & 0.58 (.95x)   & -- & -- & --   & \best{0.62 (1.01x)}  & -- \\

\midrule
\multirow{5}{*}{\STAB{\rotatebox[origin=c]{90}{{Dasgupta}}}}
& \emph{iris}   & 321290 (1x)     & 321290 (1x)     & 320846 (1x)      & 320846 (1x)     & -- & -- & --   & \best{310957 (1.03x)} & -- \\
& \emph{wine}   & 26904 (1x)     & 28581 (.94x)     & \best{26902 (1x)}      & \best{26902 (1x)}       & -- & -- & --   & 27324 (.98x) & -- \\
& \emph{digits} & 243191685 (1x) & 243087881 (1x)  & 243323801 (.99x)  & 245791675 (.98x)   & -- & -- & --   &  \best{240476750 (1.01x)} & -- \\
& \emph{faces}  & 4631561 (1x)   & 4631561 (1x)    & 4627286 (1x)    & 4627286 (1x)    & -- & -- & --   &  \best{4569916 (1.01x)} & --

\end{tabular}
\label{table:qualitytable}
\end{table*}

\subsection{Quality}\label{sec:quality}
We start by understanding the quality of \terahac{} as a function of both $\epsilon$
and the weight threshold $t$. 
\revision{All results in this section can be reproduced using our shared-memory implementation.}
We compare the quality  
to known ground-truth clusterings for the
\emph{iris}, \emph{wine}, \emph{digits},
and \emph{faces} classification datasets
from the UCI dataset repository (found in the sklearn.datasets package). 
\shepchange{Unfortunately, we are not aware of other large-scale publicly
available datasets providing ground truth clustering labels.}
Our goal in this sub-section is to understand the largest
values of $\epsilon$ and $t$ that can be used without damaging accuracy. Note that in our
experiments, weights in the graph are always within $[0, 1]$.
To measure quality, we use the
\defn{Adjusted Rand-Index (ARI)} and 
\defn{Normalized Mutual Information (NMI)} scores,
which are standard measures of the quality of a clustering with respect
to a ground-truth clustering.
We also use the \defn{Dendrogram Purity} measure~\cite{heller05bayesian},
which takes on values
between $[0, 1]$ and takes on a value of $1$ if and only if the tree contains
the ground truth clustering as a tree consistent partition (i.e., each
class appears as exactly the leaves of some subtree of the tree).
Given a tree $T$ tree with leaves $V$, and a ground truth partition of
$V$ into $C=\{C_1, \ldots, C_l\}$ classes, define the purity 
of a subset $S \subseteq V$ with respect to a class $C_i$ to be $\mathcal{P}(S, C_i) = |S \cap C_i|/|S|$.
Then, the purity of $T$ is
\[
\mathcal{P}(T) = \frac{1}{|\emph{Pairs}|}\sum_{i=1}^{l} \sum_{x,y \in C_i, x\neq y} \mathcal{P}(\mathsf{lca}_{T}(x,y), C_i)
\]
where $\emph{Pairs}=
\{(x,y)\ |\ \exists i \text{ s.t. } \{x,y\} \subseteq C_i\}$ and
$\mathsf{lca}_T(x,y)$ is the set of leaves of the least common ancestor of $x$ and $y$ in $T$.
We also study the unsupervised \defn{Dasgupta Cost}~\cite{Dasgupta2016} measure of our dendrograms,
which is measured with respect to an underlying similarity graph $G(V, E, w)$
and is defined as:
$\sum_{(u,v) \in E} |\mathsf{lca}_{T}(u,v)| \cdot w(u,v)$.
The Dasgupta cost is computed over the complete similarity graph obtained
by computing all point-to-point distances.
\revision{Lastly,
given a dendrogram and its greedy merge sequence we compute for each merge
the ratio between the highest similarity edge in the graph and the similarity of the
merge, and take the maximum of these ratios to be the \defn{Empirical Approximation Ratio} (see Lemma~\ref{lem:empapx}).
For a $(1+\epsilon)$-approximate algorithm, the empirical approximation ratio is
at most $1+\epsilon$.
}

Before jumping in, we make a few observations.
If we use a weight threshold of $t=0$, the algorithm will compute a complete
$(1+\epsilon)$-HAC dendrogram (similar to \parhac{} or \seqhac{}), but as we shall
see in Section~\ref{sec:scalability}, computing the full $(1+\epsilon)$-dendrogram 
results in the size of the graph shrinking more slowly, and thus requires
more rounds.
On the other hand, using a weight threshold of $t > 0$ will result in the algorithm
potentially running much faster due to the size of the graph shrinking rapidly (as
we show in Section~\ref{sec:scalability}). However, from a quality perspective, a worry is
that in exchange for speed, we will sacrifice quality. In this sub-section, we will
understand how to set $\epsilon$ and $t$, and will demonstrate that this worry
is unfounded. We will show the following key points:
\begin{enumerate}[label=(\arabic*),topsep=0pt,itemsep=0pt,parsep=0pt,leftmargin=15pt]
  \item Without thresholding, values of $\epsilon \in [0, 0.5]$
  yield results that are typically within a few percent of the accuracy of 
  the exact algorithm.
  Based on our results, we choose a value of $\epsilon = 0.1$ for our experiments.
  
  \item Fixing $\epsilon=0.1$, values of $t \leq 0.01$ reliably
  achieve very high quality on our datasets with ground-truth.
\end{enumerate}

\begin{figure*}[!t]
\hspace{-1em}
\begin{minipage}{.71\columnwidth}
    \includegraphics[width=\columnwidth]{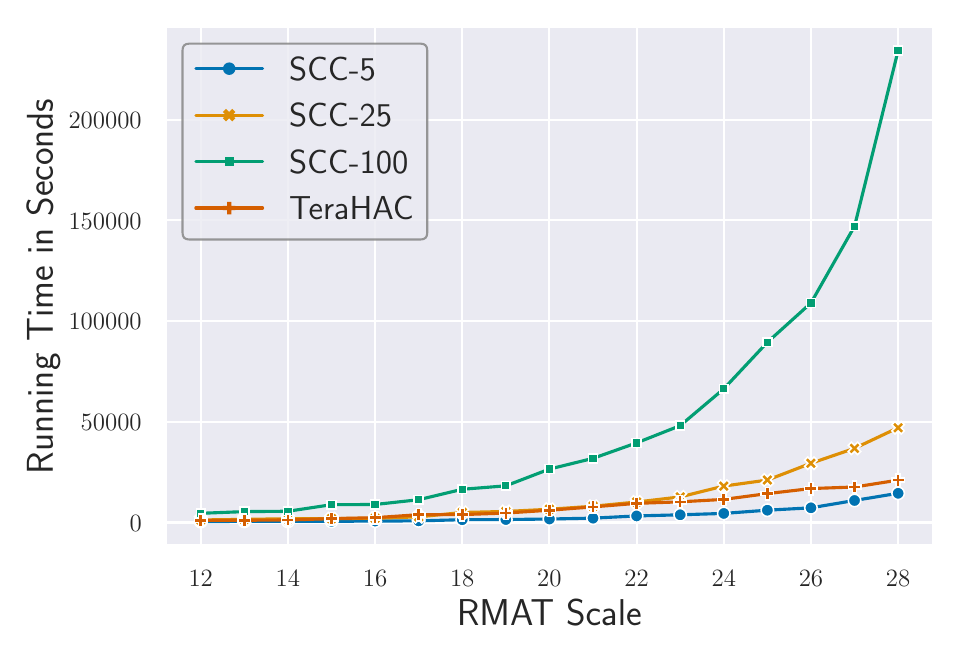}
  \vspace{-2em}
\end{minipage}
\begin{minipage}{0.71\columnwidth}
  \includegraphics[width=\columnwidth]{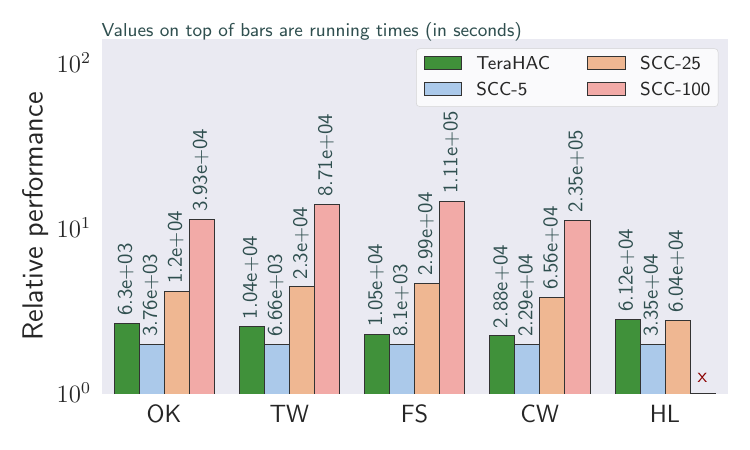}
  \vspace{-1em}
\end{minipage}
\begin{minipage}{.71\columnwidth}
  \includegraphics[width=\columnwidth]{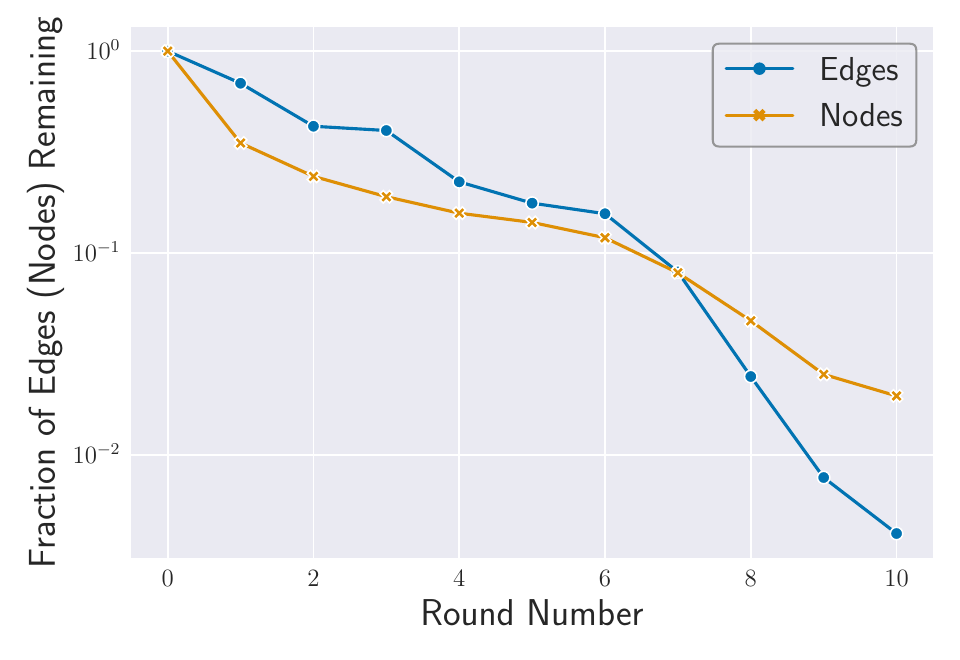}
  \vspace{-2em}
\end{minipage}\\
\begin{minipage}[t]{.6\columnwidth}
\vspace{-1em}
    \caption{
  \small
Running time in seconds of \terahac{}$_{\epsilon=0.1}^{t=0.01}$ and SCC using low, medium, and high-quality settings ($r=\{5, 25, 100\}, t=0.01$) on rMAT graphs of varying scales.
\label{fig:rmat-scaling}
  }
\end{minipage}\hfill
\begin{minipage}[t]{.75\columnwidth}
\vspace{-1em}
  \caption{
  \small
\revision{Relative performance (speedup over the fastest algorithm for each graph)} of \terahac{}$_{\epsilon=0.1}^{t=0.01}$ and SCC using low, medium, and high-quality settings ($r=\{5, 25, 100\}, t=0.01$) on real-world social and Web graphs.
\label{fig:real-world-graphs}
}
\end{minipage}\hfill
\begin{minipage}[t]{.65\columnwidth}
\vspace{-1em}
    \caption{
    \small
Decrease in number of edges and nodes as a function of the round number in \terahac{}$_{\epsilon=0.1}^{t=0.01}$ for rMAT-$28$.
\label{fig:rmat-edges-nodes}
    }
\end{minipage}
\end{figure*}

\myparagraph{Tuning $\epsilon$}
We ran \terahac{} with varying values of $\epsilon$
without thresholding to understand what value of $\epsilon$ is sufficient
for achieving good quality relative to the exact baseline of $\epsilon = 0$.
The experiment runs \terahac{} on similarity graphs constructed 
from a $k$-NN graph of the
underlying pointset, with $k=25$, as described earlier in Section~\ref{sec:experiments}, and consistent with prior work~\cite{dhulipala2021hierarchical, parhac}.
We use all 4 UCI datasets mentioned earlier;
due to space constraints we show a representative result for the digits dataset 
in Fig.~\ref{fig:digits-scores-eps-tune} for all quality measures.
Perhaps surprisingly, Fig.~\ref{fig:digits-scores-eps-tune} shows that
the quality measures do not degrade significantly as $\epsilon$ is increased.
We noticed similar trends for other values of $k$; the reason is likely due to
the fact that in our experiments, \subhac{} makes very similar merges to $\epsilon=0$
even when using larger values of $\epsilon$.
We select the value $\epsilon=0.1$ to be consistent with the choice of $\epsilon$
in prior papers on approximate HAC~\cite{dhulipala2021hierarchical, parhac}.
Across all datasets, we found that for $\epsilon \leq 0.1$
on average the outputs are within 1.3\% of the best ARI score, 
within 0.25\% of the best NMI score, within 2.6\% of the best 
Purity score, and within 1.6\% of the best Dasgupta score.
Fig.~\ref{fig:digits-scores-eps-tune} empirically 
confirms our theoretical
result that the approximation factor is always within a factor of $(1+\epsilon)$.
For the remainder of the paper, we use a value of $\epsilon=0.1$ unless mentioned otherwise.

\myparagraph{Tuning the Threshold ($t$) with $\epsilon=0.1$}
Computing the complete dendrogram can be unnecessary if the flat clustering we seek uses a large threshold (i.e., does not merge edges with similarity below $t$). 
As Fig.~\ref{fig:threshold_vs_rounds} illustrates, larger thresholds require fewer rounds to compute, and can thus significantly improve the running time.
To understand how to select an appropriate threshold, $t$, we  run \terahac{} with varying values of $t$, while fixing $\epsilon=0.1$.
Fig.~\ref{fig:iris-scores-thresh-tune} shows a representative result on the iris dataset, showing essentially
no difference across $t \in [0, 0.1]$.
We tuned the threshold and find that for $t=0.01$, across all
datasets, the accuracy results are very close to that of $t=0$;
on average within 0.4\% of the ARI
score of $t=0$,  within 2.6\% of the Purity
score of $t=0$, and within 1.6\% of the Dasgupta cost of $t=0$.
We find that $t=0.01$ achieves the same NMI as $t=0$.
We set $t=0.01$ in the remainder of the paper.

\begin{figure}[t]
\begin{center}
\includegraphics[width=0.4\textwidth]{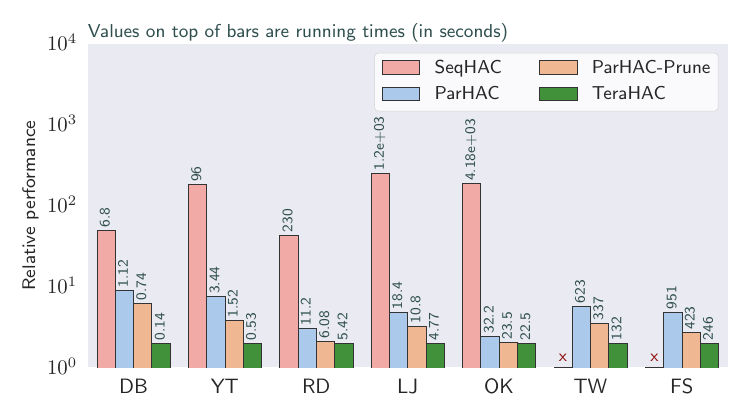}
\vspace{-1em}
\caption{\small \revision{Relative performance of \terahac{} vs. \parhac{} and \seqhac{}, all using $\epsilon=0.1$. \parhac{} is the original algorithm, and \parhac{}-Prune is the original algorithm using vertex pruning.
\label{fig:shared-memory-performance}}}
\end{center}
\end{figure}

\myparagraph{Quality Comparison}
Table~\ref{table:qualitytable} shows the results of our quality study when
comparing \terahac{} with different settings of $\epsilon$ and $t$ with the SCC
algorithm. 
The non-zero threshold we use $(t=0.01)$ is tuned as described above.
The SCC algorithm can vary the number of rounds, $r$, that is uses, which results
in $r$ flat clusterings.
We compute the quality of SCC by evaluating every flat clustering it produces and
using the best score across any clustering (i.e., for $r=25$ we evaluate all 25
flat clusterings).
\revision{We also compare with the exact $O(n^2)$ time average-linkage metric HAC implementation from 
sklearn~\cite{scipyhac} as well as the DBSCAN implementation from sklearn~\cite{scipydbscan}.
Both algorithms compute the full distance matrix and thus run in $O(n^2)$ time.
For DBSCAN, we set $\epsilon$ and $\mathsf{minpts}$ by searching over the range
$\epsilon \in [0.01, 10000]$ and $\mathsf{minpts} \in [2, 128]$.}
We consider four settings for \terahac{}; $\epsilon=0, t=0$, which yields an
exact HAC dendrogram that is equivalent to the \rac{} algorithm; $\epsilon=0, t=0.01$;
$\epsilon=0.1,t=0$; and lastly, using $\epsilon=0.1, t=0.01$.
All settings of \terahac{} achieve high-quality
results across all of the quality measures that we evaluate, across
all datasets.

Our results in Table~\ref{table:qualitytable} show that \terahac{} achieves
quality within a few percentage points of the best result on all datasets:
we find that for $\epsilon \leq 0.1$
on average the outputs are within 1.3\% of the best ARI score, 
within 0.25\% of the best NMI score, within 2.6\% of the best 
Purity score, and within 1.6\% of the best Dasgupta score.
The results for the \emph{exact} parameter settings of \terahac{}
show that thresholding using $t=0.01$ has little effect. 
We also observe that SCC is similarly affected, and this algorithm, which
is the prior state-of-the-art in scalable HAC, also requires careful selection of $t$.
SCC generally improves in quality as we increase the number of
compression rounds, with the exception of the digits dataset, where $r=5$ achieves
good quality. 
Comparing SCC to \terahac{},
\terahac{}$_{\epsilon=0.1}^{t=0.01}$ is superior to SCC with the exception of
the digits dataset, where \scc-5 achieves 0.4\% better ARI and 1.2\% better NMI.
Based on our results, we classify $r=\{5, 25, 100\}$ as low, medium, and high-quality
settings of SCC, respectively.
\revision{Lastly, compared to DBSCAN, for these datasets, HAC-based algorithms typically yield much higher quality. However, on the Wine dataset, DBSCAN finds clusterings that have up to 5.6\% higher ARI and 6.5\% higher NMI than TeraHAC (and also exact HAC).}

\subsection{Scalability}\label{sec:scalability}
Thus far, we have seen that the quality of \terahac{} with $\epsilon=0.1, t=0.01$ is
close to that of an exact HAC baseline. 
Next we evaluate the scalability of our algorithm under these settings.

\begin{figure*}[!t]
\hspace{-1em}
\begin{minipage}{.71\columnwidth}
    \includegraphics[width=\columnwidth]{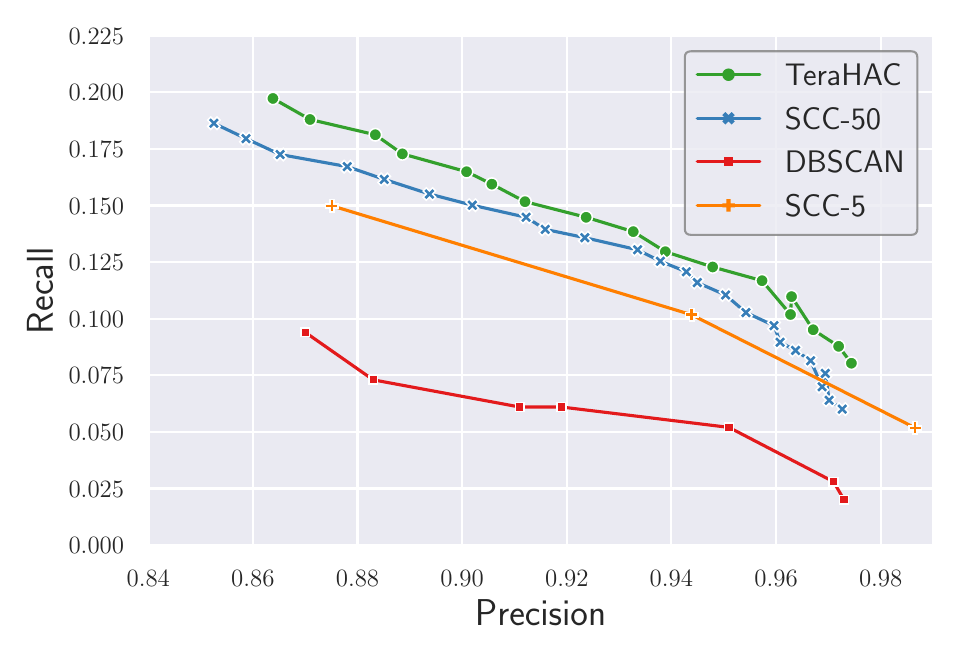}
  \vspace{-2em}
\end{minipage}
\hspace{-1em}
\begin{minipage}{0.71\columnwidth}
  \includegraphics[width=\columnwidth]{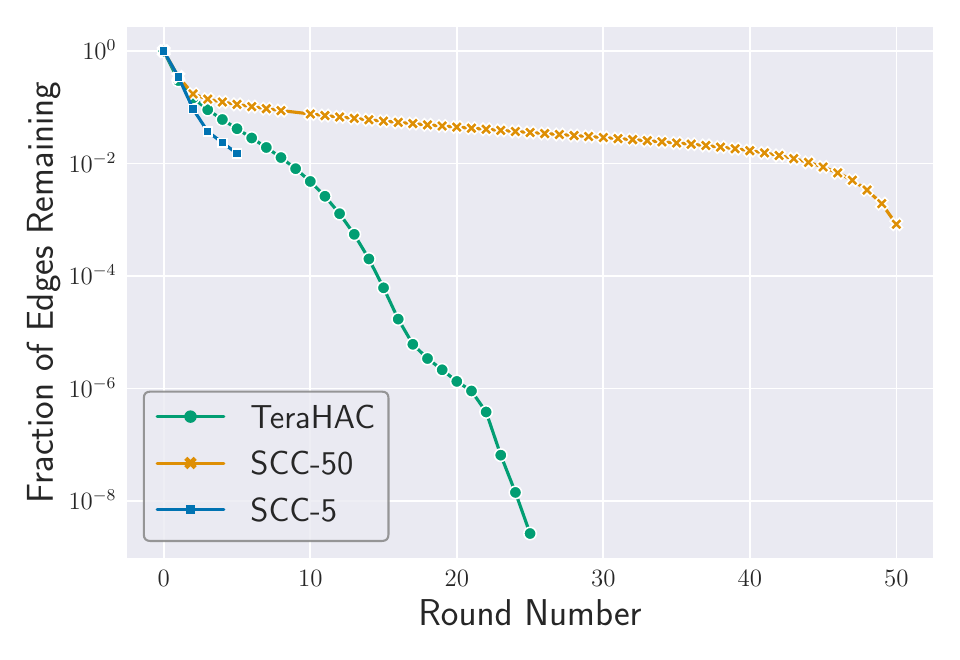}
  \vspace{-2em}
\end{minipage}
\begin{minipage}{.71\columnwidth}
  \includegraphics[width=\columnwidth]{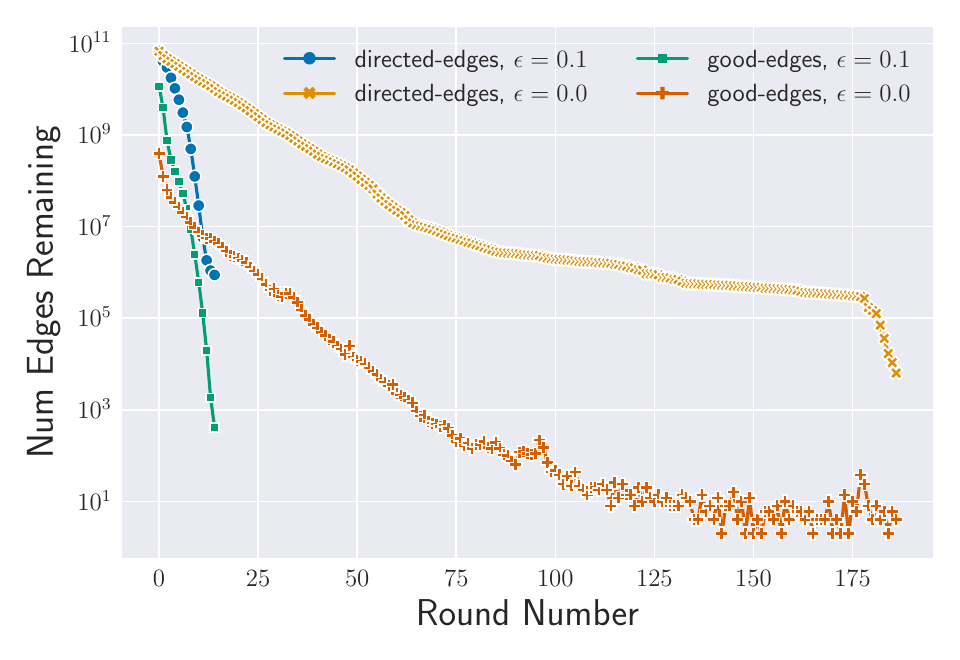}
  \vspace{-2em}
\end{minipage}\\
\begin{minipage}[t]{.6\columnwidth}
\vspace{-1em}
    \caption{
  \small
\revision{Precision-recall tradeoffs for \terahac{} and SCC on the large-scale web-query dataset.}
\label{fig:vasco-precision-recall}
  }
\end{minipage}\hfill
\begin{minipage}[t]{.75\columnwidth}
\vspace{-1em}
  \caption{
  \small
 \revision{Reduction in the number of edges for the web-query dataset. We note that the figure plotting the reduction in the number of nodes looks almost identical.}
\label{fig:vasco-edges}
}
\end{minipage}\hfill
\begin{minipage}[t]{.65\columnwidth}
\vspace{-1em}
\caption{\small 
Number of good and directed edges in each round 
on the CW graph ($n$=978M, $m$=74.7B).
The plots for other graphs show similar behavior when comparing \terahac{} with $\epsilon=0.1$ and $\epsilon=0$.
\label{fig:clueweb-good-edges}
}
\end{minipage}
\end{figure*}


\myparagraph{Scalability on Large Graphs} 
We study the scalability of \terahac{} compared
with \scc{} on the rMAT graph family described at the start
of this section. 
The average-degree of these graphs is close to 100,
and is thus similar to our other real-world graphs (see Table~\ref{table:sizes}).
Fig.~\ref{fig:rmat-scaling} shows the result of our experiments. We observe that the
running time for \terahac{} lies between that of \scc{}-5 and \scc{}-25. For the largest 
rMAT graph that we study, which contains 268 million vertices and 25.8 billion edges,
\terahac{} is 11.4x faster than \scc{}-100, 2.23x faster than \scc{}-25, and only
1.45x slower than \scc-5.

Fig.~\ref{fig:real-world-graphs} shows the scalability of \terahac{}
compared with \scc{} on five large real-world graphs, including the largest
publicly available graph, the WebDataCommons Hyperlink graph (HL). Similar 
to our results on the rMAT graph family, we find that \terahac{} performs
between \scc-5 (0.67x as fast on average) and \scc-25 (2.04x as fast on average). The algorithm is significantly
faster than \scc-100 (8.3x faster on average), which is unable to finish within four days on HL.

Taken together with our results in Section~\ref{sec:quality},
\terahac{} is nearly as fast as the low-quality variant of \scc{}, while 
achieving much higher quality than the high-quality variant of \scc{}.

\myparagraph{Good Edges and Round-Complexity}
As our results in Fig.~\ref{fig:terahac_vs_parhac_rac_rounds} in Section~\ref{sec:intro}
show, \terahac{} uses several orders of magnitude fewer rounds than \rac{} and \parhac{}.
To better understand the reason behind the low round-complexity, we study the {\em number of good edges}
(i.e., $(1+\epsilon)$-good edges) available to the algorithm in each round.
This measure directly captures edges that can potentially participate in a merge.
Fig.~\ref{fig:clueweb-good-edges} shows a representative result for the CW graph.
The plots for other graphs are similar.
The number of rounds used by \terahac{} with $\epsilon=0.1$ is significantly lower 
than \terahac{} with $\epsilon=0$.
Initially, the number of good edges with $\epsilon=0.1$ is one order of
magnitude larger than with $\epsilon=0$.
Although \terahac{} with $\epsilon=0$ uses more rounds than with $\epsilon=0.1$, as shown
in Fig.~\ref{fig:terahac_vs_parhac_rac_rounds}, \terahac{} using $\epsilon=0$ still uses
many fewer rounds than \rac{}, although both algorithms solve HAC exactly.

\revision{
\myparagraph{Shared-Memory Performance} 
\terahac{} has the ability to significantly reduce the number of rounds required for approximately solving HAC at a given weight threshold.
We have found that our preliminary experiments with a shared-memory implementation of the algorithm show that its low round-complexity also translates into strong shared-memory performance.
We study the scalability of \terahac{} compared with state-of-the-art single-machine 
HAC implementations, namely \seqhac{}~\cite{dhulipala2021hierarchical} and \parhac{}~\cite{parhac} on a 72-core
Dell PowerEdge R930 (with two-way hyper-threading) with $4\times 2.4\mbox{GHz}$
processors and 1\mbox{TB} of main memory.
We show the results of our shared-memory experiment in Fig.~\ref{fig:shared-memory-performance}. \terahac{} is always
as fast or faster than all baselines, ranging from 1.45--8x speedup over \parhac{} and 48.5--185x speedup over \seqhac{}. We also obtain between 1.04--5.28x speedup over
a version of \parhac{} using the same vertex pruning optimization as \terahac{} before processing each bucket.
\terahac{} seems to have significant potential in the shared-memory setting and we plan to further investigate this direction in future work.
}

\subsection{Large Scale Clustering of Web Queries}\label{sec:webquery}

\begin{table}\footnotesize
\centering
\centering
\caption{\revision{\small Median running times on the web-query dataset.}}
\vspace{-1em}
\revision{
\begin{tabular}[!t]{lrrr}   
\toprule
{\terahac} & \scc-50 & \scc-5 & DBSCAN\\
\midrule
1280   & 2634       & 690  &  195 \\
\end{tabular}
}
\label{table:rtimes}
\end{table}

Lastly, we study the quality of \terahac{} on a large-scale dataset of web search queries.
We use a graph whose vertices represent queries, and edges connect queries of similar meaning.
The weight of each edge is computed using a machine-learned model based on BERT~\cite{devlin2018bert}.
The graph has about 31 billion vertices and $8$ trillion edges, and resembles the graph used in the evaluation of the SCC algorithm~\cite{monath2021scalable}.
We use a maximum of $48\,000$ cores across $6\,000$ machines.

For evaluation, we use a sample of $53\,659$ pairs of queries.
Each pair is assigned a human-generated binary label specifying whether the two queries likely carry the same intent and thus should belong to the same cluster.
$7104$ (about $13\%$) of the pairs have positive labels, and the remaining are negative.

In Fig.~\ref{fig:vasco-precision-recall} we show precision and recall with respect to these labels for \terahac{}, SCC \revision{and DBSCAN}.
We obtained our results by running \terahac{} with $t = 0.05$ and $\epsilon = 0.1$, and SCC with \revision{5 or} 50 rounds of compression decreasing the weight threshold down to $0.05$.
We evaluate different points on the precision-recall curve by considering different levels of clustering for SCC. For \terahac{}, different points are obtained by flattening the hierarchical clustering using different flattening thresholds.
The flattening algorithm uses batching to compute all flattenings simultaneously. Our implementation requires $140$ minutes to compute all clusterings we present in Fig.~\ref{fig:vasco-precision-recall}.

\revision{
The implementation of DBSCAN that we use is a natural adaptation of the DBSCAN algorithm (which in its vanilla version takes a set of points as an input) to a setting where the input is a similarity graph.
The algorithm takes two parameters: $\epsilon$ and $minPts$.
First, the algorithm considers each vertex a \emph{core} vertex if it contains at least $minPts$ incident edges of weight $\geq \epsilon$.
Then, we find connected components of the subgraph of the input graph consisting of core points and all edges of weight $\geq \epsilon$ between them (using the algorithm described in~\cite{lkacki2018connected}).
These components form core clusters.
Next, each non-core vertex which does not have a core vertex of similarity at least $\epsilon$ forms a singleton cluster.
Finally, for each remaining non-core vertex $v$, we assign $v$  to the cluster of its most similar core neighbor.
In Fig.~\ref{fig:vasco-precision-recall} we show results for $( \epsilon, minPts) \in \{(0.97, 128), (0.97, 256), (0.98, 64), \allowbreak (0.98, 96), (0.98, 128), \allowbreak (0.99, 32), (0.99, 64), (0.99, 128)\}$.
}

\revision{The median running times for the algorithms are given in Table~\ref{table:rtimes}.}
We find that \terahac{} achieves the highest recall at every value of precision in the range that we consider.
\revision{DBSCAN, while significantly faster than both \terahac{} and SCC, consistently obtains over 2x smaller recall than \terahac{}.}

\revision{
\terahac{} improves in quality over both \scc-5 and \scc-50, in particular delivering about 20\% better recall than \scc-5.
At the same time, it runs about 2x faster than \scc-50 and about 2x slower than \scc-5.
This difference in performance seems to be explained by reduction in the number of edges and nodes present in the graph for \terahac{}, as compared with SCC.}
Fig.~\ref{fig:vasco-edges} shows the reduction in the number of edges over the rounds of both algorithms.
For example, there are 8.6 trillion edges in this graph, and after 10 rounds, \scc-50 still has 561 billion edges remaining, whereas \terahac{} only has 41 billion edges remaining, which is 13.4x lower than \scc-50.
We observe a similar reduction in the number of nodes (clusters) that remain.
Both runs start with 31 billion nodes; after 10 rounds, 5.4 billion nodes remain for \scc-50, whereas only 799 million nodes remain for \terahac{}, which is 6.7x lower than \scc-50.
Our results show that even on extremely large real-world graphs, \terahac{} using conservative values of $\epsilon$ and weight threshold achieves excellent scalability relative to state-of-the-art distributed algorithms, and in fact out-performs these baselines. 
Crucially, these performance advantages are obtained while increasing accuracy at every point along the precision-recall tradeoff curve.

\section{Conclusion}

In this paper we introduced the \terahac{} algorithm and demonstrated its high quality and scalability on graphs of up to 8 trillion edges.
Our results indicate that \terahac{} may be the algorithm of choice for clustering large-scale graph datasets.

As a future work, it would be interesting to see whether we can theoretically bound the number of rounds required by the \terahac{} algorithm (possibly for a carefully chosen graph partitioning method).
Another open question is whether \terahac{} could be extended from computing the bottom (high-similarity) part of the dendrogram to computing the entire dendrogram.
Finally, we believe the notion of $(1+\epsilon)$-good merges may be useful for designing efficient HAC algorithms in other models, for example in the dynamic setting.

\bibliographystyle{plain}
\bibliography{references}

\clearpage{}
\appendix{}
\section{Missing Proofs}

\wmaxdecreases*
\begin{proof}
We prove the lemma by contradiction.
Pick the smallest $j \geq l$ such that $\wmax^{j}(v) < \wmax^{j+1}(v)$.
Then, $G_{j+1}$ is obtained from $G_j$ by merging some vertices $x$ and $y$, where $w(v, x \cup y) = \wmax^{j+1}(v)$.
By \cref{def:reducibility} we have $\wmax^{j+1}(v) = w(v, x \cup y) \leq \max(w(vx), w(vy)) \leq \wmax^j(v)$, a contradiction.
\end{proof}

\eqrac*
\begin{proof}
($\implies$) By the definition of $\wmax$ we have $w(uv) \leq \max(\wmax(u), \wmax(v))$.
Moreover, $w(uv) \geq \max(\wmax(u), \allowbreak \wmax(v))$ since, by the definition of $1$-good merge, we have $\max(\wmax(u), \wmax(v)) = \min(\minmerge(u), \minmerge(v), w(uv))$.

($\impliedby$) Since $w(uv) \leq \wmax(u)$ and $w(uv) \leq \wmax(v)$, $w(uv) = \max(\wmax(u), \wmax(v))$ implies $w(uv) = \wmax(u) = \wmax(v)$.
By Lemma~\ref{lem:goodinvariant}, $M(v) \geq \wmax(v)$ and $M(v) \geq \wmax(v)$.
This implies that the merge is $1$-good.
\end{proof}

\clmerge*
\begin{proof}
Consider the sequence of merges that were made to obtain the dendrogram $D$ and let $G'_1, \ldots, G'_n$ be the corresponding graphs.
Assume that the merge of $x'$ and $x$ was done in $G'_i$ (and resulted in obtaining graph $G'_{i+1}$).
Let $\wmax'(v)$ be the value of $\wmax(v)$ in graph $G'_i$.
Graph $G'_i$ does not necessarily have a vertex $y$, but since we assumed that the merge of $x'$ and $x$ happens before the merge of $y'$ and $y$, $G'_i$ contains a set of vertices $\{y_1, \ldots, y_t\}$, such that $\bigcup_{i=1}^t y_i = y$.
By \cref{def:reducibility} we have that $w(xy) = w(x, \bigcup_{i=1}^t y_i) \leq \max_{i=1}^t w(x, y_i) \leq \wmax'(x)$, which implies:
\[
\frac{w(xy)}{\min(\minmerge(x), \minmerge(x'), w(xx'))} \leq \frac{\max(\wmax'(x), \wmax'(x'))}{\min(\minmerge(x), \minmerge(x'), w(xx'))} \leq 1+\epsilon.
\]
Here, the second inequality follows from the fact that the merge of $x$ and $x'$ in $G'_i$ is good.
\end{proof}

\lemsub*
\begin{proof}
We observe that whether a merge is good only depends on the vertices involved in a merge, their $\minmerge$ values and their incident edges.
Hence, the lemma easily follows when $k=1$.
Let us now consider the case of $k = 2$, the argument can be easily continued inductively.
Let $C_1$ be the set of vertices obtained from $C$ by performing merge $m_1$, and let $G_1$ be the graph obtained from $G$ by performing merge $m_1$.
It is easy to see that by performing merge $m_1$ on $G^C$ we obtain the graph $G_1^{C_1}$.
That is, the graph that \subhac{} sees after applying the first merge, is the same graph that we would obtain if we applied the merge on the global graph, and then computed the input to \subhac{}.
The lemma follows.
\end{proof}

\lemflat*

\begin{proof}
By \cref{lem:flatten}, flattening the dendrogram with threshold $t$ only uses nodes of the dendrogram whose linkage similarity is at least $t / (1+\epsilon)$.
Pruning removes vertices whose maximum incident edge has weight strictly less than $t' / (1+\epsilon) \leq t / (1+\epsilon)$.
By \cref{def:reducibility}, each merge that such vertex participates in has  merge of similarity below $t / (1+\epsilon)$.
Moreover, the removal of such vertices does not affect whether a merge of similarity at least $t / (1+\epsilon)$ is good, or does not affect the edge weights participating in such merges.
The lemma follows.
\end{proof}

\section{\subhac{}}

In this section, we give a near-linear time algorithm for
computing a set of $(1+\epsilon)$-good merges
by dynamically maintaining the goodness
values of the edges, which we call \subhac{}. 
The algorithm has the following specification: it performs only
$(1+\epsilon)$-good merges, and ensures that upon completion,
there are no $(1+\epsilon')$-good merges for $\epsilon' = O(\epsilon)$ (see Lemma~\ref{lem:goodnessrange}).
Recall that \subhac{} is run on the 
partitions (i.e., subgraphs) of the input graph, where
vertices assigned to this partition are called 
\emph{active} and the neighbors of these vertices that
are not in this partition are \emph{inactive}. Furthermore,
the vertices carry their min-merge values from prior
rounds, $\minmerge{}(v)$.
In what follows, we use vertex and cluster interchangeably.
The input to \subhac{} is therefore a graph where vertices are
marked active and inactive, where each vertex $v$ has a min-merge value
$\minmerge(v)$, as well as a corresponding cluster size.
The goal of the algorithm is to merge a subset of the good edges, and run in near-linear time.

\myparagraph{Challenges}
Designing a near-linear time algorithm for this problem is surprisingly non-trivial.
One challenge is that the goodness of a vertex $v$'s incident edges 
can change significantly even without $v$ participating in merges
itself. For example, this could occur is if a neighbor of $v$ 
merges with another vertex, resulting in the neighboring vertex
increasing in size, causing $w_{uv}$ to decrease. A simple example, e.g.,
a sequence of $n-1$ merges of the degree-1 vertices of a star
into the center of the star illustrates that exact maintenance of
goodness values requires $O(n^2)$ work.
Another challenge is that the goodness of all edges incident to a vertex $v$ depends on the $\wmax(u)$ values for all neighbors of $v$, which in turn depends on the cluster sizes of their neighbors.
Hence, the goodness of an edge incident to $v$ may change when a vertex / cluster \emph{two hops away} from $v$ chages its size.

These examples serve to illustrate that attempting to {\em exactly} 
maintain the goodness values is probably hopeless.
One could ask why bother with goodness---perhaps the algorithm
can simply keep a heap induced only on the active edges and their weights?
The prior near-linear time $(1+\epsilon)$-approximate HAC 
algorithm~\cite{dhulipala2021hierarchical} used such an approach (this was
called the heap-based approach in the paper). 
However, the heap-based approach alone as used in \seqhac{} is insufficient for
our needs, since \seqhac{}  cannot merge any edge that has weight smaller
than a factor of $(1+\epsilon)$ from the current maximum weight in the graph.
Still, our algorithm borrows and builds on some ideas from the heap-based
approach. For example, it indexes a heap over the vertices, using the lowest 
(i.e., best) goodness as the priority for each vertex. It also maintains
weights using partial weights, where an edge to a neighbor $x$ of a vertex $v$ 
is stored as a {\em partial average-linkage weight}, $w(v,x) \cdot |v|$, i.e. the
average-linkage weight between vertices $(v,x)$ but multiplied by $v$, the source endpoint's cluster size.
However, many new ideas are required to efficiently
maintain the goodness values and argue that the algorithm ensures a $(1+\epsilon)$-approximation.

\newcommand{\exactgoodness}[0]{\ensuremath{g}}
\newcommand{\apxgoodness}[0]{\ensuremath{\tilde{g}}}
\newcommand{\curgoodness}[0]{\ensuremath{\ddot{g}}}
\newcommand{\weight}[0]{\ensuremath{w}}
\newcommand{\apxweight}[0]{\ensuremath{\tilde{w}}}

\myparagraph{Our Approach}
We design an algorithm based on a {\em lazy heap-based approach}, which
we give some high-level pseudocode for in Algorithm~\ref{alg:subhac_app}.
The algorithm maintains a priority queue (min heap) over the active vertices,
with a vertex's priority corresponding to the goodness of its smallest goodness neighbor.
The algorithm cannot maintain these goodness values exactly, but instead uses
approximate goodness values, $\apxgoodness$.
There are two sources of approximation that play a role. 
First, similar to the \seqhac{} algorithm, the algorithm maintains $(1+\alpha)$
approximations of the edge weights, $\apxweight(u,v)$,
by ``broadcasting'' a vertex's cluster size to its
neighbors and updating the weights based on the next cluster size.
Here, $\alpha \geq 0$ is a parameter of \subhac{}.
This broadcast
operation is done only when a vertex's cluster size increases by a $(1+\alpha)$ factor.
Second, it maintains an approximate view of the $\wmax(v)$ values for each vertex $v$, $\apxwmax(v)$.
We assign every active edge (an edge between two active endpoints) 
to the endpoint with larger $\apxwmax$ value; call this the {\em assigned endpoint} for an edge.
Each time the $\apxwmax$ value of a vertex changes by a $(1+\alpha)$ factor, 
the algorithm goes over all assigned edges and reassigns them to whichever endpoint has higher $\apxwmax$ value.
This approach yields good guarantees on (1) the approximation
quality of each edge we merge and (2) termination conditions for the algorithm (i.e., when
it terminates, no edges with goodness below a given threshold remain).
The algorithm itself (Algorithm~\ref{alg:subhac_app}) is relatively
simple, with most of the work being done by the graph
representation, described next.

\begin{algorithm}[!t]\caption{\subhac{}($G=(V, E, w, \minmerge), A, \epsilon, \alpha$)}\label{alg:subhac_app}
    \begin{algorithmic}[1]
    \Require{$G^A = (V, E, w, \minmerge)$, $A \subseteq V$, $\epsilon > 0$.}
    \Ensure{A set of merges of vertices in $A$}
    \State Mark vertices of $V \setminus A$ as \emph{inactive}
    \State For each active edge $(u,v)$ assign it to the endpoint with higher $\wmax$ value.
    \State $T := $ priority queue on active vertices of $G[A]$ keyed by the goodness of their lowest-goodness assigned edge.
    \While{$T$ is not yet empty}
        \State $(u, (v, b_{uv})) :=  \textsc{RemoveMin}(T)$
        \If {$b_{uv} > (1+\epsilon)$}
            \State $(u, v', b_{uv'}) := $ $\min$ goodness assigned edge of $u$
            \State Reinsert $(u, (v', b_{uv'}))$ to $T$ if $b_{uv'} \leq (1+\epsilon)$.
        \Else
            \State Merge $u$ and $v$ in $G$
        \EndIf
    \EndWhile
    \State Return all merges made
    \end{algorithmic}
\end{algorithm}

\subsection*{Graph Representation}
The graph representation takes as input parameters $\epsilon \geq 0, \alpha \geq 0$;
$\epsilon$ controls the accuracy and ensures that every merge 
is $(1+\epsilon)$-good; $\alpha$ controls the frequency with which 
the algorithm performs broadcasts, and affects the range of goodness
for which edges are guaranteed to be merged.

The graph representation
(1) support queries for the approximate goodness value $\apxgoodness(u,v)$ of each edge,
(2) support queries for the $\apxwmax(v)$ value for each vertex,
(3) support efficiently merging pairs of active vertex, and
(4) support querying for (approximately) the lowest goodness edge incident to a vertex $v$.
It supports (1--4) above
with an overall running time of all operations over any sequence 
of merges in $\tilde{O}(m + n)$ time.
We can support (4) only for edges with true goodness value in a given
range $[1, T]$, and elaborate more on this in Lemma~\ref{lem:goodnessrange}.

\myparagraph{Data Structures and Initialization}
Each active edge $(u,v)$ (an edge where both $u$ and $v$ are active)
is initially {\em assigned} to whichever of its endpoints has larger $\apxwmax$ value.
Let $A(u,v)$ be the endpoint that the edge $(u,v)$ is assigned to.
Initially, the $\apxwmax$ values are exact, and can be computed by
scanning all the edges.
For each active $v \in V$, we store all edges assigned to the vertex
in a priority queue $A(v)$, which is keyed by the approximate goodness
of the edge (from smallest to largest).
Initially, the goodness value for an assigned edge $(u,v)$ is $\apxgoodness(u,v)$.
We maintain for each active vertex $v$ its full neighborhood,
including both active and inactive neighbors. 
$N(v)$ stores the weights of {\em all} of its incident edges
using the one-sided partial weight representation: for each $(u,v)$ edge incident
to $u$, we store $w(u,v) \cdot |u|$, i.e., the true average-linkage weight, $w(u,v)$ multiplied by the
cluster size of $u$.
The neighborhood is stored as a sparse set, e.g., using a hashtable.
Each active vertex $u$ stores two quantities: 
(a) the initial cluster size of the vertex and
(b) the approximate $\wmax$ value of the vertex.
The stored values are updated whenever the cluster size of the vertex
increases sufficiently, or whenever the $\wmax{}$ value
changes sufficiently, as we discuss shortly.
Lastly, we maintain a list of edges $R$ to be {\em reassigned}
which is cleared at the end of each merge.

\myparagraph{Invariants}
The invariants are motivated by the following two broadcast
operations, which update information along all neighboring
edges whenever a vertex's cluster size or $\apxwmax$ value 
change by a $(1+\alpha)$ factor.
The first type of broadcast updates a node's cluster size in 
all of its neighbor's data structures after its cluster size increases
by a $(1+\alpha)$-factor.
The second type of broadcast is based on the $\apxwmax$ values: if
the $\apxwmax(u)$ drops by a $(1+\alpha)$ factor, we reassign all 
edges assigned to $u$.

We first state a useful invariant about the approximate weights, $\apxweight$.
Before any merges are performed, the weights are exact.
There may be three things affecting $w(u,v)$: the cut-size changing, or the
cluster size of either $u$ or $v$ changing. Changes of the first type are easy to
handle, and can update the weight of the edge to be exact. Changing the cluster size
of the endpoint to which the edge is assigned to (say $u$) is also easy, since the
weight is implicitly normalized by $|u|$. So the only error comes from changes to the
neighbor cluster size, $v$, but since we broadcast and make the incident edge weights exact
after a cluster grows in size by a $(1+\alpha)$ factor, the edge weight can be at most a $(1+\alpha)$ factor larger than the true weight.
This guarantee for the approximate weights, $\apxweight$, provided by the broadcasting procedure is summarized by the next invariant:
\begin{invariant}\label{inv:weight}
For every edge $(u,v) \in E,\weight(u,v) \leq \apxweight(u,v) \leq \weight(u,v)(1+\alpha)$.
\end{invariant} 
\noindent In other words, the approximate weights maintained for each edge are an upper-bound
on the true edge weight. Note that we can compute the {\em exact weight} of any edge in
$O(1)$ at any given time since the cut weight, and cluster sizes are exact; we need the approximate weights and \cref{inv:weight}
only when computing $\wmax(u)$, since we cannot 
afford to scan over all incident edges to $u$ to compute $\wmax$ exactly.

Let $\curgoodness(u,v)$ be the {\em stored goodness} of the edge $(u,v)$, i.e. the
stored goodness value in the priority queue $A(u)$ if $(u,v)$ is assigned to $u$.
At initialization, $\curgoodness = \apxgoodness = \exactgoodness$. However, as merges start
to occur in the graph, the stored goodness values, $\curgoodness$ may become approximate. Understanding
the stored goodness values is necessary, since when scanning the assigned edges of a vertex to find a potentially
mergeable edge, we use the stored goodness values, $\curgoodness$. The next invariant summarizes the relationship between
the $\curgoodness$ values and the exact goodness at any point in time in the algorithm (we provide the proof that the
invariant is maintained in Lemma~\ref{lem:merge}).
\begin{invariant}\label{inv:good}
For every edge $(u,v) \in E$, $\curgoodness(u,v) \leq \exactgoodness(u,v) (1 + \alpha)^2$
\end{invariant}

\noindent So at any point in time during the algorithm, if the goodness value of an edge
is small, then the stored goodness value of the edge is also small. Again, we emphasize the difference
between the three types of goodness values: $\exactgoodness$ is the true goodness value of an edge;
$\apxgoodness$ is a $(1+\alpha)$ over-approximation of the $\exactgoodness$ value that we can compute
at any instant; finally, $\curgoodness$ is the goodness value that is stored in the priority queue of
the assigned endpoint.
We further note that the invariant says nothing about a lower bound on $\curgoodness(u,v)$ in terms of $\exactgoodness(u,v)$. 
For example, it could be
the case that $\exactgoodness(u,v) \gg (1+\epsilon)$ while $\curgoodness(u,v) \leq (1+\epsilon)$,
but this is not an issue, as we can amortize the cost of checking the edge against the
increase in goodness for the edge.

Initially,
\cref{inv:weight} is satisfied since the weights are exact.
Lastly, \cref{inv:good} is satisfied initially since the goodness values are exact.

\myparagraph{Useful Lemmas}
Next, we state some useful lemmas about the approximation error
of computing $\apxwmax$ and $\apxgoodness$ at a given point in time,
based on the invariants above.

\begin{lemma}\label{lem:wmax}
$\forall u \in V$, $\wmax(u) \leq \apxwmax(u) \leq \wmax(u)(1+\alpha)$.
\end{lemma}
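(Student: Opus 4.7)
The plan is to prove the two inequalities separately, drawing on Invariant~\ref{inv:weight}, Lemma~\ref{lem:wmaxdecreases}, and the broadcast mechanism that refreshes $\apxwmax(u)$ whenever the stored value becomes stale by a $(1+\alpha)$ factor relative to the current maximum incident approximate weight.

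For the lower bound $\wmax(u) \leq \apxwmax(u)$, I would argue directly. By Invariant~\ref{inv:weight}, $w(u,v) \leq \apxweight(u,v)$ for every neighbor $v$ of $u$, so taking the maximum over neighbors gives $\wmax(u) \leq \max_v \apxweight(u,v)$. The broadcast rule only rewrites $\apxwmax(u)$ when the maximum of the incident approximate weights has shrunk sufficiently, so the stored value is always at least $\max_v \apxweight(u,v)$. Chaining these two inequalities yields $\apxwmax(u) \geq \max_v \apxweight(u,v) \geq \wmax(u)$.

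For the upper bound $\apxwmax(u) \leq (1+\alpha)\wmax(u)$, I would track the last time $t'$ at which $\apxwmax(u)$ was refreshed. At that moment we set $\apxwmax(u) = \max_v \apxweight(u,v)|_{t'}$, which by Invariant~\ref{inv:weight} is at most $(1+\alpha)\wmax(u)|_{t'}$. By Lemma~\ref{lem:wmaxdecreases}, $\wmax(u)$ is non-increasing over the sequence of merges, so $\wmax(u)|_{t'} \geq \wmax(u)$ at the current time. The broadcast condition ensures that if in the meantime the max incident (approximate) weight had dropped below $\apxwmax(u)/(1+\alpha)$, a refresh would already have been triggered; hence the stored value cannot exceed a $(1+\alpha)$ factor of the current true $\wmax(u)$, giving the desired bound.

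The main subtlety I expect to have to confront is correctly accounting for the three sources of slack — the lazy broadcasting of neighbor cluster sizes (which produces the gap in Invariant~\ref{inv:weight}), the lazy refresh of $\apxwmax(u)$ itself, and the monotone decrease of the true $\wmax(u)$ — so that they compose to a single $(1+\alpha)$ factor on the right-hand side rather than $(1+\alpha)^2$. Making this tight requires phrasing one of the two broadcast triggers against the exact current weight rather than the approximate one (or equivalently picking the internal thresholds to absorb one factor of $(1+\alpha)$), so that the upper-bound chain telescopes cleanly. Everything else in the proof reduces to routine application of the stated invariants and Lemma~\ref{lem:wmaxdecreases}.
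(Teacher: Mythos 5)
Your lower bound is fine, but your upper bound does not actually close. You model $\apxwmax(u)$ as a cached value that is only refreshed at broadcast times, and you then have to compose two independent sources of slack: the $(1+\alpha)$ gap of Invariant~\ref{inv:weight} at the refresh time, and the further $(1+\alpha)$ drift permitted before the next refresh is triggered. As you concede yourself, this composition naturally yields $(1+\alpha)^2$, and your proposed remedy --- rephrasing a broadcast trigger against exact weights, or retuning internal thresholds --- amounts to changing the algorithm rather than proving the stated bound for the algorithm as given. That is a genuine gap: the lemma as written claims a single $(1+\alpha)$ factor.

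The paper's proof avoids the problem because $\apxwmax(u)$ is not a separately cached quantity with its own staleness. The data structure keeps the incident edges of $u$ in a map sorted by the approximate weights $\apxweight$, and $\apxwmax(u)$ is simply the largest $\apxweight(u,\cdot)$ read off that map at query time. The only approximation present is therefore the per-edge one of Invariant~\ref{inv:weight}, $\weight(u,v) \leq \apxweight(u,v) \leq (1+\alpha)\weight(u,v)$, and taking the maximum over neighbors immediately gives $\wmax(u) \leq \apxwmax(u) \leq (1+\alpha)\wmax(u)$. Neither Lemma~\ref{lem:wmaxdecreases} nor any reasoning about when the last refresh occurred is needed; the broadcast-by-$(1+\alpha)$ triggers you are reasoning about govern edge \emph{reassignment} and the \emph{stored goodness} values (which is precisely where the extra factor surfaces, as the $(1+\alpha)^2$ in Invariant~\ref{inv:good}), not the value of $\apxwmax$ itself.
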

\noindent In other words, the approximate $\apxwmax$ values s a $(1+\alpha)$ over-approximation of the true $\wmax$ values.
\begin{proof}
Since we store a map indexed on the approximate partial weights of the neighbors and sorted
by the approximate edge weights $\apxweight$, we can simply use the largest $\apxweight$ of an
incident edge to be $\apxwmax(u)$.
The returned weight is a $(1+\alpha)$ approximation to $\wmax{}$ by Invariant~\ref{inv:weight}.
\end{proof}

Using \cref{lem:wmax}, we obtain guarantees on calculating a 
particular edge's goodness value. For the calculation, we use the
exact value of the edge weight, but suffer an approximation in the numerator;
when computing $\max(\apxwmax(u), \apxwmax(v))$, the value can be at most
a multiplicative $(1+\alpha)$ factor larger than the true value. This
is summarized in the next lemma:
\begin{lemma}\label{lem:good}
$\forall (u,v) \in E$, $\exactgoodness(u,v) \leq \apxgoodness(u,v) \leq \exactgoodness(u,v)(1+\alpha)$.
\end{lemma}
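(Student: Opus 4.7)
The plan is to unfold the definition of $\apxgoodness$ and apply Lemma~\ref{lem:wmax} termwise, since this lemma is essentially a one-step consequence of the guarantee already established for $\apxwmax$. Recall from Definition~\ref{def:good} that
\[
\exactgoodness(u,v) = \frac{\max(\wmax(u), \wmax(v))}{\min(\minmerge(u), \minmerge(v), w(uv))},
\]
and the implementation computes $\apxgoodness(u,v)$ by replacing $\wmax(\cdot)$ with $\apxwmax(\cdot)$ in the numerator while keeping the denominator exact.

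First I would justify that the denominator in $\apxgoodness(u,v)$ can indeed be computed exactly at query time. The $\minmerge(u)$ and $\minmerge(v)$ values are merge-history quantities stored per vertex and are updated exactly whenever two clusters are merged, so they equal their true values. For $w(uv)$, the earlier discussion (just before Invariant~\ref{inv:weight}) notes that although we store one-sided partial weights that become approximate between broadcasts, the exact weight of any single edge can be recovered in $O(1)$ from the exact cut weight and the two exact cluster sizes. Therefore $\apxgoodness$ and $\exactgoodness$ share a common, exact denominator, and the entire approximation error is confined to the $\max(\cdot,\cdot)$ term in the numerator.

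Next I would invoke Lemma~\ref{lem:wmax}, which gives $\wmax(x) \leq \apxwmax(x) \leq (1+\alpha)\wmax(x)$ for every vertex $x$. Taking the $\max$ over $\{u,v\}$ preserves both inequalities (since $\max$ is monotone and scaling by a common factor $1+\alpha$ commutes with $\max$), yielding
\[
\max(\wmax(u), \wmax(v)) \leq \max(\apxwmax(u), \apxwmax(v)) \leq (1+\alpha)\max(\wmax(u), \wmax(v)).
\]
Dividing through by the common positive denominator $\min(\minmerge(u), \minmerge(v), w(uv))$ immediately gives $\exactgoodness(u,v) \leq \apxgoodness(u,v) \leq (1+\alpha)\exactgoodness(u,v)$.

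There is no real obstacle here beyond being careful that the denominator is truly exact at the moment of computation; once that is observed, the lemma reduces to a single application of Lemma~\ref{lem:wmax} combined with monotonicity of $\max$ and positivity of the denominator. Since all edge weights are assumed positive and the $\minmerge$ values are either $\infty$ (initially) or obtained as minima of positive weights, the denominator is strictly positive, and the division is well defined.
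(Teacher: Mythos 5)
Your proof is correct and follows essentially the same route as the paper's: both observe that the denominator $\min(\minmerge(u), \minmerge(v), w(uv))$ is exact and that the numerator $\max(\apxwmax(u), \apxwmax(v))$ is a one-sided $(1+\alpha)$ over-approximation via Lemma~\ref{lem:wmax}. Your version simply spells out the monotonicity of $\max$ and the exactness of the denominator in more detail than the paper does.
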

\begin{proof}
The approximate goodness of an edge is $U/L$ where
$U = \max(\apxwmax(u), \apxwmax(v))$ and $L = \min(\minmerge(u), \minmerge(v), \weight(u,v))$.
Since $U$ is a $(1+\alpha)$ approximation of the true numerator and cannot
be smaller, and $L$ is exact, the worst case is when $U$ is a $(1+\alpha)$ approximation.
Therefore, we have $\exactgoodness(u,v) \leq \apxgoodness(u,v) \leq (1+\alpha)\exactgoodness(u,v)$.
\end{proof}

We describe how $\apxweight$ and $\apxgoodness$ are calculated,
and why the lemmas are true in more detail below.
\cref{lem:good} is crucial since by only merging edges where
$\apxgoodness(u,v) \leq (1+\epsilon)$, the edge must truly be $(1+\epsilon)$-good.

\subsection*{Graph Representation Operations}

\myparagraph{Merge$(u,v)$}
We merge the vertex with smaller cluster size into the vertex with larger
cluster size. Suppose we are
merging $u$ into $v$ (thus $v$ will still be active after the merge).
Both $u$ and $v$ are initially active.
Merging the weights can be done in $O(|N(u)|)$ time by iterating over
edges in $N(u)$ and updating weights in $N(v)$ using the average-linkage
formula.
After merging the neighborhoods, we push all edges in $A(u)$
(edges assigned to $u$) to $R$ to be fixed at the end of the merge.
Finally, we see if we need to perform a {\em broadcast} operation due to
either the cluster size increasing sufficiently, or the $\wmax{}$ value
decreasing sufficiently:
\begin{enumerate}
\item If $v$'s cluster size grows larger by a $(1+\alpha)$ factor (from the last broadcast value), we iterate over all edges in $N(v)$ and update their weights in both $v$ and $v$'s neighbor's queues.\label{bcast:size}
\item If $\apxwmax(v)$ either grows or decreases by a $(1+\alpha)$ factor from the last value of $\apxwmax(v)$ that a broadcast occurred at, we iterate over all of $v$'s edges and reassign them to the endpoint with larger $\apxwmax$ value.\label{bcast:wmac}
\end{enumerate}
This concludes the description of the merge procedure.
We describe the correctness details (i.e., showing that the invariants
are preserved after this operation) shortly.

\myparagraph{BestAssignedNeighbor$(u)$}
This operation is used to compute the min goodness assigned neighbor of $u$
in Algorithm~\ref{alg:subhac_app}.
Similar to $\wmax{}$, getting the {\em true} best assigned neighbor would
require scanning all edges assigned to $u$.
Instead, we iterate over every edge in $A(u)$ in increasing order
of the {\em stored goodness value}, $\curgoodness(u,v)$ 
until we either find an edge whose approximate goodness $\apxgoodness(u,v)$
is smaller than $(1+\epsilon)$, or we exhaust all edges with
$\curgoodness(u,v)$ in the range $[1, (1+\epsilon)/(1+\alpha)]$.

Each edge unsuccessfully considered in this range has
$\curgoodness(u,v) \leq (1+\epsilon)/(1+\alpha)$, but
$\apxgoodness(u,v) > 1+\epsilon$.
When analyzing the running time, this gap helps show
that an edge can be unsuccessfully checked only $O(\log n)$ times.

\subsection*{Correctness}

\begin{lemma}\label{lem:merge}
The Merge$(u,v)$ operation preserves \cref{inv:weight,inv:good}.
\end{lemma}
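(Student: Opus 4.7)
The plan is to establish Invariants~\ref{inv:weight} and~\ref{inv:good} separately by tracing the three substeps of Merge (merging $N(u)$ into $N(v)$ and pushing $A(u)$ to $R$; the conditional size-broadcast of $v \cup u$; the conditional $\apxwmax$-broadcast followed by draining $R$) and showing that neither stored weights nor stored goodness values can drift outside the allowed slack. We assume both invariants hold immediately before the merge.

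For Invariant~\ref{inv:weight}, I would case on whether an edge is incident to $v \cup u$. Edges not touching the merged vertex have unchanged cut weight and endpoint cluster sizes, so both $\weight$ and $\apxweight$ are unchanged and the invariant is inherited from the pre-merge state. For an edge $(v \cup u, x)$, the iteration through $N(u)$ updates the stored partial weight in $N(v)$ to the exact aggregated cut $\mathrm{cut}(v, x) + \mathrm{cut}(u, x)$, so the only slack between $\weight(v \cup u, x)$ and $\apxweight(v \cup u, x)$ comes from the stale believed cluster size of the other endpoint. By the size-broadcast rule, a neighbor's believed size underestimates the true size by at most a $(1+\alpha)$ factor, since a broadcast is triggered as soon as this ratio reaches $(1+\alpha)$. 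Dividing the exact cut by this stale believed size yields $\weight \leq \apxweight \leq \weight \cdot (1+\alpha)$. If the size-broadcast of $v \cup u$ fires during this Merge, neighbors' beliefs about $v \cup u$ are refreshed to the exact value, which only tightens the bound.

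For Invariant~\ref{inv:good}, fix an edge $(p,q)$ present after the merge and let $t^\star$ be the most recent time its stored $\curgoodness(p,q)$ was written (either the original assignment or a subsequent reassignment). At $t^\star$ we have $\curgoodness(p,q) = \apxgoodness_{t^\star}(p,q)$, and Lemma~\ref{lem:good} yields $\apxgoodness_{t^\star}(p,q) \leq \exactgoodness_{t^\star}(p,q)(1+\alpha)$. It therefore suffices to show $\exactgoodness_{t^\star}(p,q) \leq \exactgoodness_{\mathrm{now}}(p,q)(1+\alpha)$, which composes into the claimed $(1+\alpha)^2$ bound. The numerator $\max(\wmax(p), \wmax(q))$ is monotone non-increasing by Lemma~\ref{lem:wmaxdecreases}, and the $\apxwmax$-broadcast rule ensures that if $\apxwmax$ of either endpoint had drifted by a $(1+\alpha)$ factor since $t^\star$, a reassignment would already have refreshed $\curgoodness(p,q)$, contradicting the choice of $t^\star$; hence the numerator can have shrunk by at most a $(1+\alpha)$ factor. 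For the denominator $\min(\minmerge(p), \minmerge(q), \weight(p,q))$, any drop in $\minmerge(p)$ or $\minmerge(q)$ corresponds to $p$ or $q$ participating in a merge, which either destroys the edge or pushes it into $R$ and refreshes $\curgoodness$, again contradicting the choice of $t^\star$. For edges freshly reassigned out of $R$ at the end of the current Merge (including those previously in $A(u)$), $\curgoodness$ is set to the newly computed $\apxgoodness$, so the bound holds with slack at most $(1+\alpha)$.

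The main obstacle is carefully routing every source of $\exactgoodness$ drift between $t^\star$ and now through one of the broadcast thresholds. Merges of third-party vertices can, via reducibility, reduce $\wmax(p)$, $\wmax(q)$, and $\weight(p,q)$ even when neither $p$ nor $q$ is directly touched, so the bookkeeping must consistently funnel each such decrement through either an $\apxwmax$-broadcast (which forces a reassignment and refreshes $\curgoodness$ before now) or a local merge at $p$ or $q$ (which does the same). Once this drift analysis is combined cleanly, the $(1+\alpha)^2$ invariant is preserved.
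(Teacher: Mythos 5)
Your proof is correct and follows essentially the same route as the paper's: for \cref{inv:weight} you trace the slack to the neighbor's stale cluster size, which the size-broadcast threshold caps at a $(1+\alpha)$ factor, and for \cref{inv:good} you compare the stored goodness at the last assignment time against the current exact goodness, charging one $(1+\alpha)$ factor to the approximation at assignment and one to the drift of $\max(\wmax(p),\wmax(q))$ permitted before the $\apxwmax$-broadcast forces a reassignment. Your additional bookkeeping for the denominator and for edges drained from $R$ is a harmless elaboration of the same argument.
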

\begin{proof}
Suppose that the invariants are true before the merge operation, and suppose
$u$ merges into $v$ (i.e. $v$ remains active at the end of the merge).

{\em \cref{inv:weight}.} The only edges whose weights are affected
by the merge are those incident to $v$. Let $(v,x)$ be such an edge.
If $(u,x)$ was in $G$ previously, then the weight of the $(v,x)$ edge will
be exact. Otherwise, \cref{inv:weight} previously held for $(v,x)$. It is clear
that \cref{inv:weight} continues to hold if the cluster size of $v$ has not yet
grown by a factor of $(1+\alpha)$; on the other hand if it grows by more than this
factor, $v$ broadcasts, and the $(v,x)$ weight will be exact.

{\em \cref{inv:good}.} Consider an arbitrary $(x, y)$ edge. When the edge was last assigned, say to $x$,
$\curgoodness(x, y) = \apxgoodness(x, y) \leq (1+\alpha)\exactgoodness(x, y)$.
The invariant can be violated if
$\exactgoodness(x, y)$ decreases, which can only occur by
$\max(\wmax(x), \wmax(y))$ decreasing.
The algorithm will update the goodness of $(x,y)$ once $\max(\apxwmax(x), \apxwmax(y))$ 
decreases by a $(1+\alpha)$ factor.
Since $\max(\apxwmax(x), \apxwmax(y)) \leq (1+\alpha)\max(\wmax(x), \wmax(y))$,
after $\max(\wmax(x), \wmax(y))$ decreases by a $(1+\alpha)$ factor,
$\max(\apxwmax(x), \apxwmax(y))$ must also decrease by a $(1+\alpha)$ factor.
At this point, we perform a broadcast and update the goodness of this edge to
$\apxgoodness(x, y)$. The ratio of $\curgoodness(x,y)/\exactgoodness(x, y)$ is 
always upper bounded by $(1+\alpha)^2$, and is largest
when initially $\exactgoodness(x, y) = \curgoodness(x, y)/(1+\alpha)$ and $\exactgoodness(u,v)$
decreases by a $(1+\alpha)$ factor.
\end{proof}

\begin{lemma}\label{lem:goodnessrange}
Once the algorithm terminates, no edges with goodness ($\exactgoodness$)
in the range $[1, T]$ with $T = (1+\epsilon)/(1+\alpha)^3$ remain.
\end{lemma}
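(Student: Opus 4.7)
I would argue by contradiction: suppose that when the algorithm terminates there still exists some edge $(u,v)$ with $1 \leq \exactgoodness(u,v) \leq T = (1+\epsilon)/(1+\alpha)^3$. The plan is to chain the approximation bounds from Invariant~\ref{inv:good} and Lemma~\ref{lem:good} to show that such an edge would have been examined and merged by \subhac{}.

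First I would chain the two approximation bounds. Applying Invariant~\ref{inv:good} we get
\[
\curgoodness(u,v) \;\leq\; (1+\alpha)^2 \, \exactgoodness(u,v) \;\leq\; (1+\alpha)^2 \cdot \frac{1+\epsilon}{(1+\alpha)^3} \;=\; \frac{1+\epsilon}{1+\alpha}.
\]
Applying Lemma~\ref{lem:good} we get
\[
\apxgoodness(u,v) \;\leq\; (1+\alpha)\, \exactgoodness(u,v) \;\leq\; \frac{1+\epsilon}{(1+\alpha)^2} \;<\; 1+\epsilon.
\]
So the stored goodness of $(u,v)$ is below $(1+\epsilon)/(1+\alpha)$ and the current approximate goodness is below $1+\epsilon$; i.e., the edge simultaneously lies in the window that BestAssignedNeighbor scans and would pass its $\apxgoodness \leq 1+\epsilon$ acceptance test.

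Next, since $(u,v)$ is active, it is assigned to one of its endpoints; assume WLOG to $u$, so $(u,v) \in A(u)$. Consider the last time $u$ was popped from $T$ while $(u,v)$ was present in $A(u)$ with its current $\curgoodness$ value. BestAssignedNeighbor$(u)$ iterates over $A(u)$ in increasing order of stored goodness up to $(1+\epsilon)/(1+\alpha)$, so it would have encountered either $(u,v)$ itself or an edge with even smaller $\curgoodness$, checked the $\apxgoodness$ condition, and returned a mergeable candidate, which contradicts the continued existence of $(u,v)$.

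The main obstacle is to justify that we may indeed speak of ``the last time $u$ was popped with this $\curgoodness(u,v)$''. The stored goodness of $(u,v)$ can drop into the target window purely because of events at $v$ or at $v$'s neighbors (via broadcasts triggered by cluster-size growth or $\apxwmax$ changes), and the pseudocode does not make this bookkeeping explicit. The clean way to handle this is to strengthen the algorithm with a loop invariant: \emph{every active vertex $x$ whose current best assigned edge has $\curgoodness \leq (1+\epsilon)/(1+\alpha)$ is present in $T$ keyed by that edge.} I would then verify this invariant through each operation that updates $\curgoodness$ (namely, the broadcast steps inside Merge that reassign edges or refresh partial weights), re-inserting affected vertices into $T$ with their new keys. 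Once this invariant is in place, emptiness of $T$ at termination directly rules out the existence of any edge with $\exactgoodness \in [1,T]$, completing the proof.
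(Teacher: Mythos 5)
Your proposal follows essentially the same route as the paper's proof: assume such an edge $(u,v)$ survives, chain \cref{inv:good} to get $\curgoodness(u,v) \leq (1+\epsilon)/(1+\alpha)$, and conclude that BestAssignedNeighbor$(u)$ would have had to examine this edge within its scan window, a contradiction. Your additional steps --- verifying via \cref{lem:good} that the edge would also pass the $\apxgoodness(u,v) \leq 1+\epsilon$ acceptance test, and flagging the need for a re-insertion invariant on the priority queue $T$ --- fill in bookkeeping details that the paper's terser argument leaves implicit, but the underlying approach is identical.
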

\begin{proof}
Suppose otherwise. Let this edge be $(u,v)$ and its goodness value be $g_{uv}$.
Suppose $(u,v)$ is assigned to $u$.
By \cref{inv:good}, $\curgoodness(u,v) \leq \exactgoodness(u,v) (1+\alpha)^2$.
Since $g_{uv} \leq T = (1+\epsilon)/(1+\alpha)^3$, by
\cref{inv:good}, $\curgoodness(u,v) \leq g_{uv}(1+\alpha) \leq (1+\epsilon)/(1+\alpha)$.
However, this means that the algorithm incorrectly missed an
edge in the range $[0, (1+\epsilon)/(1+\alpha)]$ for node $u$,
contradicting the definition of BestAssignedNeighbor$(u)$.
\end{proof}

\subsection*{Running Time Analysis}
We will show that $O(\log n)$ total broadcasts and reassignment operations
are performed for any vertex over the course of the algorithm.

Let $a_1, \ldots , a_k$ be the sizes of the cluster containing
a vertex $a$. We have $a_1 = 1$ and $a_k \leq n$.
When a cluster of size $p$ merges with a cluster of size $q$ ($p \geq q$), 
the $\apxwmax(p)$ value can increase from $w$ to at most $w((1+\epsilon)q + p) / (p+q)$,
which is a multiplicative increase of $((1+\epsilon)q + p) / (p+q)$. 
The increase is maximized when $p = q$, for an
increase of at most $(2+\epsilon)/2$.
We would like to show that the overall increase in $\apxwmax$ over
any sequence of merges cannot be too large.
We will use the following lemma for the analysis:

\begin{lemma}
Let $a_1 < a_2 < \ldots < a_k$.
Then, $\prod_{i=1}^{k-1} (1+\frac{t (a_{i+1} - a_i)}{a_{i+1}}) = O(a_k^t)$.
\end{lemma}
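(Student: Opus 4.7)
The plan is to prove the inequality by taking logarithms and reducing it to a telescoping sum.

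First, I would take $\ln$ of the product to turn it into
\[
S \;:=\; \sum_{i=1}^{k-1} \ln\!\left(1 + \frac{t(a_{i+1}-a_i)}{a_{i+1}}\right)
\]
and use the standard inequality $\ln(1+x) \le x$ for $x \ge 0$ (valid because $a_{i+1} > a_i > 0$ makes the argument nonnegative) to bound
\[
S \;\le\; t \sum_{i=1}^{k-1} \frac{a_{i+1}-a_i}{a_{i+1}}.
\]

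Next, I would replace each ratio by an integral and use monotonicity of $1/x$. Since $x \le a_{i+1}$ on $[a_i,a_{i+1}]$, we have $\tfrac{1}{a_{i+1}} \le \tfrac{1}{x}$, so
\[
\frac{a_{i+1}-a_i}{a_{i+1}} \;=\; \int_{a_i}^{a_{i+1}} \frac{dx}{a_{i+1}} \;\le\; \int_{a_i}^{a_{i+1}} \frac{dx}{x} \;=\; \ln\!\frac{a_{i+1}}{a_i}.
\]
Summing telescopes the right-hand side, giving $S \le t\,\ln(a_k/a_1)$. Since the $a_i$ represent cluster sizes in the paper's context (so $a_1 \ge 1$), this is at most $t\,\ln a_k$. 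Exponentiating yields
\[
\prod_{i=1}^{k-1}\left(1 + \frac{t(a_{i+1}-a_i)}{a_{i+1}}\right) \;\le\; a_k^{\,t},
\]
which is $O(a_k^t)$ as claimed.

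The proof is essentially routine; the only subtlety to watch for is ensuring $a_1 \ge 1$ so that $\ln(a_k/a_1) \le \ln a_k$. If the lemma is to be used for arbitrary positive reals rather than cluster sizes, the bound should instead be written as $O((a_k/a_1)^t)$; I would state this explicitly and verify from the calling context (applying the lemma to cluster sizes starting at $a_1 = 1$) that the weaker form suffices. I expect no substantive obstacle beyond this bookkeeping.
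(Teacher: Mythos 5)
Your proof is correct and follows essentially the same route as the paper's: bound the product by the exponential of the sum (via $1+x\le e^x$, equivalently $\ln(1+x)\le x$), compare each term $\frac{a_{i+1}-a_i}{a_{i+1}}$ to $\ln\frac{a_{i+1}}{a_i}$, and telescope. The only cosmetic difference is that you use an integral comparison where the paper uses a discrete harmonic-sum bound (which tacitly relies on the $a_i$ being integers, as cluster sizes are), and your explicit attention to $a_1\ge 1$ is a small improvement in rigor over the paper's version.
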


\begin{proof}
Since $1+x \leq e^x$, for any $x_1, \ldots, x_k$ we have
\begin{equation*}\label{eq:1}
 \prod_{i=1}^k (1+x_i) \leq e^{\sum_{i=1}^k x_i}.
\end{equation*}
Thus, to complete the proof it suffices to show that
\begin{equation}
\sum_{i=1}^{k-1} \frac{t(a_{i+1} - a_i)}{a_{i+1}} \leq t \ln a_k + O(1),
\end{equation} 
and apply Equation~\ref{eq:1}. Observe that
\begin{equation*}
\frac{a_{i+1} - a_i}{a_{i+1}} = \sum_{j=0}^{a_{i+1} - a_i-1} \frac{1}{a_{i+1}} \leq \sum_{j=0}^{a_{i+1} - a_i-1} \frac{1}{a_{i+1} - j} = \sum_{j=a_i+1}^{a_{i+1}} \frac1j,
\end{equation*}
which implies
\begin{equation*}
 \sum_{i=1}^{k-1} \frac{t(a_{i+1} - a_i)}{a_{i+1}} \leq t \sum_{i=1}^{k-1} \sum_{j=a_i+1}^{a_{i+1}} \frac1j = t \sum_{j=1}^{a_k} \frac1j = t \ln a_k + O(1).
\end{equation*}
\end{proof}

By setting $t = \epsilon$, and $a_{i+1} = p + q$ and $a_i = p$,
$1 + \frac{t(a_{i+1} - a_i)}{a_{i+1}} = ((1+\epsilon)q + p)/(p + q)$.
Therefore, if $a_1 < a_2 < \ldots < a_k$ are the cluster sizes of a vertex $v$,
then, $\prod_{i=1}^{k-1} (1+\frac{\epsilon (a_{i+1} - a_i)}{a_{i+1}}) = O(a_k^\epsilon) = O(n^{\epsilon})$.
Therefore, the total increase in $\apxwmax(v)$ over the
course of the algorithm is at most $O(n^{\epsilon})$
Assuming a polynomially bounded aspect ratio, i.e., letting 
$\mathcal{W}_{\max}$ be the largest edge weight initially
and $\mathcal{W}_{\min}$ be the smallest edge weight initially,
we have that $\mathcal{W}_{\max} / \mathcal{W}_{\min} \in O(\mathsf{poly}(n))$.
Therefore the overall range for the best values of a
vertex over a sequence of merges is also $O(\mathsf{poly}(n))$.
For any constant $\alpha > 0$, the total number of broadcasts
performed by the vertex over its lifetime is $O(\log n)$. This is summarized
in the following lemma:

\begin{lemma}\label{lem:numbroadcast}
For any constants $\epsilon > 0, \alpha > 0$, and a graph with polynomially
bounded aspect ratio, the total number of broadcast operations performed by
any vertex over the sequence of the algorithm is $O(\log n)$.
\end{lemma}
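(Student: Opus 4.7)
The plan is to decompose the broadcasts performed by any vertex into the two types in the algorithm description (cluster-size broadcasts and $\apxwmax$ broadcasts) and bound each separately.

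First I would handle the cluster-size broadcasts. The cluster size of a vertex is a nondecreasing quantity, starts at $1$, and is bounded above by $n$. Since a size broadcast is triggered only when the cluster size grows by a multiplicative factor of at least $(1+\alpha)$ compared to the last broadcast, at most $\log_{1+\alpha} n = O(\log n / \log(1+\alpha))$ such broadcasts are performed. For any constant $\alpha > 0$, this is $O(\log n)$.

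Second, I would bound the $\apxwmax$ broadcasts using a ``total variation on a log scale'' argument. Let $T_+$ and $T_-$ denote, respectively, the sum of $\log$-increases and sum of $\log$-decreases of $\apxwmax(v)$ across the full trajectory of the algorithm. Every $\apxwmax$-broadcast corresponds to a monotone run of length at least $\log(1+\alpha)$ in either direction, so the number of $\apxwmax$-broadcasts is at most $(T_+ + T_-)/\log(1+\alpha)$. Using the product inequality proven just before this lemma (applied with $t = \epsilon$ and $a_i$ equal to the successive cluster sizes of $v$), I would show that $T_+ \leq \ln\!\big(\prod_i (1 + \epsilon(a_{i+1}-a_i)/a_{i+1})\big) = O(\epsilon \log a_k) = O(\epsilon \log n)$. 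For $T_-$ I would use the telescoping identity $\log(\text{final}) - \log(\text{initial}) = T_+ - T_-$, which gives $T_- = T_+ + \log(\text{initial}/\text{final})$. The initial value satisfies $\apxwmax(v) \leq (1+\alpha)\mathcal{W}_{\max}$ by \cref{inv:weight}, and the final value is at least $\mathcal{W}_{\min}/n^2$ since any surviving average-linkage edge weight is the sum of at least one initial edge weight divided by $|u|\cdot|v| \leq n^2$. Under the polynomial aspect ratio assumption $\mathcal{W}_{\max}/\mathcal{W}_{\min} = \mathsf{poly}(n)$, this yields $\log(\text{initial}/\text{final}) = O(\log n)$, hence $T_- = O(\log n)$. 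Combining, the number of $\apxwmax$-broadcasts is $O(\log n / \log(1+\alpha)) = O(\log n)$ for constant $\alpha > 0$.

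Summing the two bounds gives a total of $O(\log n)$ broadcasts per vertex, as required. The main obstacle I expect is the bound on $T_-$: one must carefully justify that $\apxwmax(v)$ cannot decrease by more than a polynomial factor over the lifetime of $v$ even accounting for repeated up-and-down oscillations. The cleanest way to avoid a direct analysis of the decrease dynamics is the telescoping trick above, which reduces the question to bounding $T_+$ (already available from the product lemma) plus the polynomial aspect-ratio of the attainable weight range.
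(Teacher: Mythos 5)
Your proposal is correct and follows essentially the same route as the paper: the paper bounds the total multiplicative increase of $\apxwmax(v)$ by $O(n^{\epsilon})$ via the same product lemma, invokes the polynomially bounded aspect ratio to bound the attainable range, and concludes the $O(\log n)$ broadcast count for constant $\alpha$. Your write-up is in fact more explicit than the paper's (which asserts the final step in one sentence); your telescoping bound on the total log-decrease $T_-$ in terms of $T_+$ and the weight range is precisely the missing detail the paper leaves implicit.
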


Next, we show that the total number of times a given edge $(u,v)$
in unsuccessfully checked is low. We use the same constants and setting
(bounded aspect ratio):

\begin{lemma}
Each edge $(u,v)$ is checked
unsuccessfully over all calls to BestAssignedNeighbor$(u)$ 
at most $O(\log n)$ times.
\end{lemma}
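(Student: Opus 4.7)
The plan is to charge each unsuccessful check of a fixed edge $(u,v)$ from $u$'s priority queue to a broadcast event at one of its endpoints, and then appeal to \cref{lem:numbroadcast}, which already bounds the broadcasts at any single vertex by $O(\log n)$.

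First, I would extract a quantitative gap from the specification of BestAssignedNeighbor: a check of $(u,v)$ from $u$'s queue is unsuccessful precisely when $\curgoodness(u,v) \leq (1+\epsilon)/(1+\alpha)$ while $\apxgoodness(u,v) > 1+\epsilon$. Immediately after detecting this gap the algorithm refreshes the stored key by setting $\curgoodness(u,v) := \apxgoodness(u,v) > 1+\epsilon$ and reinserting $(u,v)$ into $A(u)$ with the new priority; otherwise the same edge would sit at the top of the queue indefinitely and be rescanned with no progress.

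Second, I would argue that for $(u,v)$ to be scanned unsuccessfully at $u$ again, the stored key $\curgoodness(u,v)$ must drop from above $1+\epsilon$ back down to at most $(1+\epsilon)/(1+\alpha)$. The stored value only changes through events originating at $u$ or $v$: an $\apxwmax$-broadcast at one endpoint triggers a reassignment and refreshes the key, while a cluster-size broadcast at either endpoint rewrites the partial weights and hence the stored goodness of every incident edge. Consequently, between any two consecutive unsuccessful checks of $(u,v)$ from $u$'s queue, at least one broadcast at $u$ or $v$ must occur.

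Combining the charging argument with \cref{lem:numbroadcast} applied to both endpoints, the total number of unsuccessful checks of $(u,v)$ is at most $O(\log n) + 1 = O(\log n)$. The main obstacle is step two: one must enumerate every pathway by which $\curgoodness(u,v)$ can change and confirm that each is captured by a broadcast at $u$ or $v$. In particular, $(u,v)$ may temporarily migrate from $A(u)$ to $A(v)$ and later migrate back, but each such migration is itself triggered by an $\apxwmax$-broadcast and is already in the accounting; similarly, after a merge the endpoint that absorbed $u$ changes identity, but the broadcast budget attached to the enlarged cluster continues to cover all subsequent events.
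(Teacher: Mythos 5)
There is a genuine gap in your charging scheme, and it is precisely the obstacle you flag at the end without resolving. You charge each unsuccessful check (after the first) to a broadcast at $u$ or $v$, on the grounds that the stored key $\curgoodness(u,v)$ can only be rewritten by a broadcast at an endpoint. But the data structure also refreshes stored keys at every \emph{merge} involving an endpoint: the Merge operation pushes all edges of $A(u)$ onto the reassignment list $R$ ``to be fixed at the end of the merge,'' and reassignment rewrites the key to the current $\apxgoodness$ value --- neither step requires a broadcast to have fired. Moreover, the $(1+\alpha)$-factor \emph{drop} in $\apxgoodness(u,v)$ that must precede the next unsuccessful check need not be witnessed by any broadcast either: the denominator $\min(\minmerge(u),\minmerge(v),\weight(u,v))$ can \emph{grow} (e.g., $\weight(u,v)$ increases when $v$ absorbs a common neighbor of higher similarity), and each of $\apxwmax(u)$, $\apxwmax(v)$ can drift downward by strictly less than a $(1+\alpha)$ factor from its last broadcast value without triggering a reassignment. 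Since a cluster can participate in $\Theta(n)$ merges while performing only $O(\log n)$ broadcasts, the inequality ``number of unsuccessful checks of $(u,v)$ $\leq$ broadcasts at $u$ plus broadcasts at $v$ plus one'' is not established, and \cref{lem:numbroadcast} cannot be invoked as a black box.

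The paper charges in the opposite direction. An unsuccessful check means $\curgoodness(u,v)\leq(1+\epsilon)/(1+\alpha)$ while $\apxgoodness(u,v)>1+\epsilon$; since the two coincided when the key was last written, $\apxgoodness(u,v)$ must have \emph{grown} by a factor of $(1+\alpha)$ in between, so either $U=\max(\apxwmax(u),\apxwmax(v))$ grew or $D=\min(\minmerge(u),\minmerge(v),\weight(u,v))$ shrank by at least $\sqrt{1+\alpha}$. It then bounds the total multiplicative travel of $U$ (via the $O(n^{\epsilon})$ cumulative-growth bound underlying \cref{lem:numbroadcast} together with the polynomial aspect ratio) and of $D$ (via the range $[\mathcal{W}_{\min}/n^2,\mathcal{W}_{\max}]$) by $O(\log n)$ constant-factor steps each. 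This accounting is indifferent to which event refreshed the key and when, which is exactly the case analysis your argument cannot close. To repair your proof along your own lines, you would need to show that every refresh landing $\curgoodness(u,v)$ back below $(1+\epsilon)/(1+\alpha)$ consumes a constant-factor chunk of some polynomially bounded potential --- at which point you have reconstructed the paper's argument rather than shortcut it through the broadcast count.
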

\begin{proof}
Recall that a check of an edge $(u,v)$ is unsuccessful if $\curgoodness(u,v) \leq (1+\epsilon)/(1+\alpha)$
but $\apxgoodness(u,v) > (1+\epsilon)$.
Since $\curgoodness(u,v) = \apxgoodness(u,v)$ when this edge
was assigned to $u$, in the meantime, $\apxgoodness(u,v)$ grew
by at least a $(1+\alpha)$ factor. 
This could occur by $U = \apxwmax(u)$ growing
by a $(1+\alpha)$ factor, or $D = \min(\minmerge(u), \minmerge(v), \weight(u,v))$
decreasing by a $(1+\alpha)$ factor, or some combination of the two.
In either case, either $U$ must grow or $D$ must shrink by at least a $\sqrt{1+\alpha}$ factor i.e., a constant $> 1$.
Using the same idea as the proof of Lemma~\ref{lem:numbroadcast},
since $G$ has bounded aspect ratio, the total amount $U$ can grow is $\log_{1+\alpha} n^{1+\epsilon} = O(\log n)$, and so the number of unsuccessful checks to due $U$ is $O(\log n)$.
Similarly, since the three terms in $D$ can be at most $\mathcal{W}_{\max}$,
and can become no smaller than $\mathcal{W}_{\min} / n^2$, the overall
number of unsuccessful checks due to $D$ is also $O(\log n)$, completing the proof.
\end{proof}

\begin{theorem}
The overall running time of the algorithm is $O((m + n) \log^2 n)$.
\end{theorem}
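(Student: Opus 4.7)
The plan is to decompose the running time into three components: merge work, broadcast work, and BestAssignedNeighbor query work, and bound each by $O((m+n)\log^2 n)$ using amortized smaller-into-larger arguments together with the per-vertex broadcast bound (Lemma~\ref{lem:numbroadcast}) and the per-edge unsuccessful-check bound already established.

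For merge work, I would first argue that merging $u$ into $v$ with $|u| \leq |v|$ costs $O(|N(u)|\log n)$: for each $(u,x) \in N(u)$ we either rename the edge to $(v,x)$ or combine it with an existing $(v,x)$, incurring $O(\log n)$ work in the relevant priority queues $A(\cdot)$. By the standard smaller-into-larger argument — the cluster containing any original vertex at least doubles whenever it is on the smaller side of a merge, so any original vertex is the smaller side at most $O(\log n)$ times — the sum $\sum_{\text{merges}} |N(u)|$ over the entire execution is $O(m \log n)$. Hence total merge work is $O(m \log^2 n)$.

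For broadcast work, I would use Lemma~\ref{lem:numbroadcast} which bounds the broadcasts of any cluster by $O(\log n)$. A single broadcast of $v$ performs $O(d_v \log n)$ work, where $d_v$ is the current degree, since each incident edge requires a constant number of priority-queue updates. The crucial step is bounding $\sum_{\text{broadcasts}} d_v(t)$ by counting (broadcast, current incident edge) pairs. A merge either renames an edge instance (preserving identity) or combines two instances into one (strictly reducing the count and creating no new instance), so the number of distinct current edge instances that ever exist over the algorithm is $O(m)$. For each such instance $E$ with endpoints $u$ and $v$, the number of (broadcast, $E$) pairs is at most the number of broadcasts of $u$ or $v$ during $E$'s lifetime, which is $O(\log n)$ by Lemma~\ref{lem:numbroadcast}. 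Summing over the $O(m)$ instances yields $O(m \log n)$ pairs, and thus $O(m \log^2 n)$ total broadcast work.

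For BestAssignedNeighbor work, the preceding lemma bounds the unsuccessful checks per edge at $O(\log n)$, and each unsuccessful check does $O(\log n)$ work (priority-queue removal and reinsertion), contributing $O(m \log^2 n)$ in total. Successful checks correspond to the merges already paid for in the merge accounting. The $O(n)$ initial insertions into $T$ and the per-vertex priority-queue operations contribute the additive $O(n \log^2 n)$ term. Combining the three components gives the claimed $O((m+n) \log^2 n)$ bound. The main obstacle is the broadcast analysis: the naive bound of (total broadcasts) $\times$ (max degree) is far too weak, and one must carefully track edge instances across merges — distinguishing renames (which preserve identity) from combinations (which strictly decrease the count) — to obtain the $O(m)$ instance count needed to charge the $O(\log n)$ broadcasts per instance.
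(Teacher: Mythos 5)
Your proof is correct and follows essentially the same route as the paper's: decompose the cost into broadcasts, BestAssignedNeighbor checks, and heap operations, then charge $O(\log n)$ broadcasts per vertex (Lemma~\ref{lem:numbroadcast}) and $O(\log n)$ unsuccessful checks per edge, each at $O(\log n)$ heap cost. You are somewhat more explicit than the paper in two places it leaves implicit --- the smaller-into-larger accounting for the merge work itself and the edge-instance charging that turns ``$O(\log n)$ broadcasts per vertex'' into an $O(m\log n)$ bound on $\sum_{\text{broadcasts}} d_v$ --- but these are refinements of the same argument, not a different one.
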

\begin{proof}
Each broadcast requires updating the values of all neighbors of a node
$u$ in a heap at a cost of $O(\log n)$ per edge. Since each node requires
at most $O(\log n)$ broadcasts in total, the overall time is $O(m \log^2 n)$.
Similarly, the cost over all calls to BestAssignedNeighbor is $O(m \log^2 n)$
since each unsuccessful edge must be reinserted into a heap.
Lastly, the extract operations over the min heap indexed on the nodes costs
$O(n \log n)$ in total. Although the queue may be updated more times than this due to
cases where all edges in a call to BestAssignedNeighbor are unsuccessful, but
this cost can be charged to one of the unsuccessful edges.
\end{proof}

%

\end{document}